\theoremstyle{plain}
\newtheorem{theorem}{Theorem}[section]
\newtheorem{lemma}[theorem]{Lemma}
\newtheorem{corollary}[theorem]{Corollary}
\newtheorem{proposition}[theorem]{Proposition}
\newtheorem{assumption}[theorem]{Assumption}
\theoremstyle{definition}
\newtheorem{definition}[theorem]{Definition}
\newtheorem{remark}[theorem]{Remark}
\newcommand{\opt}{^{\star}}
\def\<#1,#2>{\langle #1,#2\rangle_{\alpha}}
\def\p<#1,#2>{\langle #1,#2\rangle_{1}}
\def\q[#1]{[#1]_{\alpha}}
\def\1q[#1]{[#1]_{1}}
\newcommand{\Max}{\operatorname{Max}}
\newcommand{\Min}{\operatorname{Min}}
\newcommand{\tr}{\operatorname{tr}}
\newcommand{\cof}{\operatorname{cof}}
\newcommand{\adj}{\operatorname{adj}}
\newcommand{\comp}{\mathcal{C}}
\newcommand{\N}{\mathbb{N}}
\newcommand{\Z}{\mathbb{Z}}
\newcommand{\R}{\mathbb{R}}
\newcommand{\E}{\mathbb{E}}
\title{Thresholds for sensitive optimality and Blackwell optimality in stochastic games}
\author{St\'ephane Gaubert \\
INRIA and CMAP, Ecole Polytechnique
\\
stephane.gaubert@inria.fr \\
\And
Julien Grand-Cl\'ement\\
ISOM Department, HEC Paris\\grand-clement@hec.fr \\
\And
Ricardo D. Katz \\
CIFASIS-CONICET \\
katz@cifasis-conicet.gov.ar\\
}
\newcommand{\sens}{d}
\newcommand{\alphabw}{\alpha_{\sf Bw}}
\newcommand{\alphama}{\alpha_{\sf Ma}}
\newcommand{\alphad}{\alpha_{\sf d}}
\newcommand{\de}{\mathrm{det}}
\newcommand{\sto}{\mathrm{sto}}
\begin{document}

\maketitle
\vspace{4mm}

\begin{abstract}
We investigate refinements of the mean-payoff criterion in two-player zero-sum perfect-information stochastic games.
  A strategy is {\em Blackwell optimal} if it is optimal in the discounted game for all discount factors sufficiently close to $1$. The notion of {\em $d$-sensitive optimality} interpolates between mean-payoff optimality (corresponding to the case $\sens=-1$) and Blackwell optimality ($\sens=\infty$). The {\em Blackwell threshold} $\alphabw \in [0,1[$ is the discount factor above which all 
  optimal strategies in the discounted game
  are guaranteed to be Blackwell optimal. 
  The {\em $\sens$-sensitive threshold} $\alphad \in [0,1[$ is defined analogously. 
      Bounding $\alphabw$ and $\alphad$ are fundamental problems in algorithmic game theory, since these thresholds control the complexity for computing Blackwell and $\sens$-sensitive optimal strategies, by reduction to discounted games which can be solved in $O\left((1-\alpha)^{-1}\right)$ iterations.
      We provide the first bounds on the $\sens$-sensitive threshold $\alphad$ beyond the case $d=-1$, and we establish improved bounds for the Blackwell threshold $\alphabw$. This is achieved by leveraging separation bounds on algebraic numbers, relying on Lagrange bounds and more advanced techniques based on Mahler measures and multiplicity theorems.
\end{abstract}


\section{Introduction}\label{sec:intro}
Two-player perfect-information zero-sum stochastic games (SGs) are an important class of Shapley's stochastic games \cite{shapley1953stochastic} where every state is controlled by a unique player.
SGs are a cornerstone of game theory, with important applications in auctions~\cite{lazarus1999combinatorial}, mechanism design~\cite{narahari2014game}, multi-agent reinforcement learning~\cite{littman1994markov}, robust optimization~\cite{grand2023beyond,chatterjee2023solving}, and $\mu$-calculus model-checking~\cite{clarke2018handbook}.
Shapley originally considered the sum of discounted instantaneous rewards as the objective; the mean-payoff objective was studied in~\cite{gillette1957stochastic}.
Stationary and deterministic optimal strategies exist for these objectives
in the perfect-information case~\cite{shapley1953stochastic,liggettlippman}.

The mean-payoff objective coincides with the limit of the discounted objective as the discount factor goes to $1$. This suggests to study the variation of the set of optimal strategies in the discounted game, as a function of the discount factor.
Blackwell showed (in the one-player case) that, for finite state and action models, there exist strategies that remain discount optimal for all values of the discount factor that are sufficiently close to $1$~\cite{blackwell1962discrete}. These are now called {\em Blackwell optimal strategies}.
As the discount factor approaches $1$, the ``weight'' given to the rewards received in the distant future increases. Therefore, Blackwell optimal strategies can be considered as the most farsighted (or least greedy) strategies for models in long horizon. This is why better understanding Blackwell optimality is referred to as ``{\em one of the pressing questions in reinforcement learning}'' in~\cite{dewanto2020average}. Veinott introduced (still in the one-player case)
the notion of {\em $\sens$-sensitive optimality}, interpolating between mean-payoff optimality (obtained for $\sens=-1$) and Blackwell optimality (obtained
for $\sens=\infty$). The value associated with a given stationary strategy has a Laurent series expansion in the powers of $1-\alpha$, where $\alpha$ is the discount factor, with a pole of order at most $-1$ at $\alpha= 1$. Veinott's $\sens$-sensitive objective involves the sequence of coefficients of this expansion, up to order $\sens$, which is maximized or minimized with respect to the lexicographic order (see Theorem 10.1.6 in~\cite{puterman2014markov}). In particular, any $\sens$-sensitive optimal strategy is $\sens'$-sensitive
optimal for all $\sens'<\sens$. Moreover, for models with $n$ states, every $(n-2)$-sensitive optimal strategy is Blackwell optimal. The notion for $\sens=0$ is also known as {\em bias optimality}.

It is worth emphasizing that the notions of Blackwell optimality and mean-payoff optimality have received increased attention recently in the reinforcement learning community, see~\cite{dewanto2020average,dewanto2022examining,tang2021taylor,yang2016efficient}. More generally, computing mean-payoff optimal strategies is an important open question in algorithmic game theory and has been extensively studied. {\em Pseudo-}polynomial algorithms exist for mean-payoff {\em deterministic} instances, based on {\em pumping}~\cite{gurvich1988cyclic} and value iteration~\cite{zwick,cuninghame2003equation,gaubert2008cyclic}. In more generality, computing mean-payoff or bias optimal strategies is difficult, since the Bellman operator is no longer a contraction when $\alpha=1$, and the mean-payoff and bias objectives may be discontinuous in the entries of stationary strategies (e.g., Chapter~4 of~\cite{feinberg2012handbook}). In fact, strategy iteration may cycle for the mean-payoff objective in multichain instances (see Section~6 of \cite{akian2012policy}), and other algorithms have an exponential dependence on the number of states with stochastic (non-deterministic) transitions or undetermined complexities~\cite{akian2012policy,boros2010pumping}. 
\todo{more recent/classical references?}
Besides the method based on the Blackwell
threshold discussed below, the only algorithm we are aware of to compute
Blackwell optimal policies consists in performing
policy iteration, considering the discount factor
as a formal parameter, and encoding the discounted value function
associated to a pair of policies by its Laurent series expansion
truncated at order $n-2$, leading to a nested lexicographic
policy iteration method, see~\cite[Chapter 10.3]{puterman2014markov}
for a presentation in the one player case.
A counter example of Friedmann implies that for deterministic games,
this algorithm can take an exponential time~\cite{Friedmann-AnExponentialLowerB}.



The existence of the {\em Blackwell threshold}, defined as the smallest number $\alphabw  \in [0 , 1[$ such that strategies that are optimal for any discount factor $\alpha\geq \alphabw$ are Blackwell optimal, is well-known, e.g.~\cite{andersson2009complexity}. We define the {\em $\sens$-sensitive threshold} as the smallest number $\alphad \in [0 , 1[$ such that strategies that are discount optimal for any discount factor $\alpha\geq \alphad$ are $d$-sensitive optimal. The existence of $\alphabw$ and $\alphad$ suggests a simple method for computing Blackwell and $\sens$-sensitive optimal strategies: compute discount optimal strategies for a discount factor $\alpha$ close to 1. This approach has the advantage that {\em any} advances in solving discounted SGs transfer to algorithms for Blackwell and mean-payoff optimal strategies.
Note that the term $(1-\alpha)^{-1}$ controls the complexity of computing discount optimal strategies: for known game parameters (rewards and transitions), strategy iteration and value iteration scale as $\tilde{O}((1-\alpha)^{-1})$~\cite{hansen2013strategy} (hiding the dependence on the number of states/actions, and lower order terms in $1-\alpha$). For unknown game parameters, model-based methods~\cite{zhang2023model} or model-free methods based on Q-learning~\cite{sidford2020solving} scale as $\tilde{O}((1-\alpha)^{-3})$.

Thus, for computational purposes, it is crucial to upper bound $\alphabw$ and $\alphad$, or equivalently to lower bound $1-\alphabw$ and $1-\alphad$. 
For such bounds to be useful, they should only rely on a few game parameters that are available by design. In this paper we consider SGs that satisfy the next assumption.
\begin{assumption}\label{assumption}
$\Gamma$ is a perfect-information stochastic game with $n \in \N$ states, integer rewards with absolute values bounded by $W \in \N$, and transition probabilities with common denominator $M \in \N$.
\end{assumption}


\noindent
{\bf Main results.}
We provide upper bounds for the Blackwell threshold $\alphabw$ and for the $d$-sensitive threshold $\alphad$ for perfect-information SGs.
We also derive stronger results in the case of deterministic games (transition probabilities in $\{0,1\}$) or for stochastic games
with unichain structure.

At the core of our results are separation bounds for algebraic numbers, a classical topic in algebraic number theory. 
 More precisely, we use three separation methods. The first method relies on a result of Lagrange~\cite{Lagrange}, also obtained by Hadamard~\cite{hadamard1893} and Fujiwara~\cite{fujiwara1916obere}, providing a lower bound on the
modulus $|z|$ of any non-zero root $z$ of a polynomial $P$, see Lecture IV in~\cite{yap2000fundamental}. 
The second method relies on {\em Mahler measures} to lower bound $|1-z|$ when $z \neq 1$ is a root of a polynomial $P$, especially on a theorem of Dubickas~\cite{Dubickas1995}  building on a series of earlier works based on the seminal paper by Mignotte and Waldshmidt~\cite{mignotte1994algebraic}. 
The third method uses a bound of Borwein, Ed\'erlyi, and K\'os on the multiplicity of $1$ as a root of a polynomial with integer coefficients~\cite{BEK1999}, which controls the value of $d$ such that $\alphad=\alphabw$.
We shall see that each of these approaches yields a useful bound -- not dominated by the other bounds in some regimes of the game parameters $n$, $W$ and $M$.
We present our main results below, according to the different techniques. For the sake of readability, we simplify some of our results with $O(.)$ notations here, and we defer the detailed statements to the next sections. We start with the results based on Lagrange.
\begin{theorem}[Based on Lagrange bound]\label{ThLagrange}
If the game $\Gamma$ satisfies \Cref{assumption}, then:
 
\noindent
{\em Deterministic case ($M=1$).} The \sens-sensitive threshold $\alphad$ and the Blackwell threshold $\alphabw$ satisfy
  \begin{equation}\label{eq: bound $\sens$-sensitive + Blackwell discount factor - deterministic}
  \alphad \leq 1-\frac{1}{24 W{2n\choose {\min \{ \sens+4 , n\}}}} \;  \makebox{ and }\; \alphabw \leq 1-\frac{1}{24 W{2n\choose n}} \; .
  \end{equation}
{\em Stochastic case ($M>1$).} The Blackwell threshold $\alphabw$ satisfies
$\displaystyle\alphabw \leq 
1-\frac{2^{\lfloor \frac{2}{3} n \rfloor-2}}{nW (2M)^{2n-1} {2n-1 \choose \lfloor \frac{2}{3} n \rfloor}}  .$
If $\Gamma$ is unichain, the $d$-sensitive threshold $\alphad$ satisfies
$\displaystyle 
\alphad \leq 
1-\frac{2^{\min \{ \sens+2 , \lfloor \frac{2}{3} n - 1 \rfloor \} - 1}}{nW (2M)^{2n-1} {2n-1 \choose \min \{ \sens+2 , \lfloor \frac{2}{3} n - 1 \rfloor \} + 1}}  .$
\end{theorem}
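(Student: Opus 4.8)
The plan is to reduce both thresholds, for every variant in the statement, to a single root–separation problem for a univariate polynomial with integer coefficients, to which the Lagrange bound applies. Since perfect-information SGs admit optimal stationary deterministic strategies for every discount factor, it suffices to compare the discounted values of pairs of pure stationary strategy profiles. For a fixed profile $(\sigma,\tau)$ the value $v^{\sigma,\tau}_{\alpha}=(I-\alpha P_{\sigma,\tau})^{-1}r_{\sigma,\tau}$ is, by Cramer's rule, a ratio of determinants, so the difference $\delta(\alpha)$ of the values of two profiles at a fixed state is a rational function $R(\alpha)/S(\alpha)$ with $S(\alpha)=\det(I-\alpha P_{1})\det(I-\alpha P_{2})$. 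First I would record the three facts that drive everything: after multiplying the transitions by their common denominator $M$, the matrices $MI-\alpha\,(MP)$ have integer entries of modulus at most $2M$, so $R\in\Z[\alpha]$; its degree is at most $2n-1$; and a Hadamard-type estimate bounds its coefficients by a constant times $W$ in the deterministic case (yielding the constant $24$ below) and by a quantity of order $nW(2M)^{2n-1}$ in the stochastic case. Finally, $S(\alpha)>0$ on $[0,1)$, so on this interval the sign of $\delta$ equals that of $R$.

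The second step translates optimality into the location of the roots of $R$. Expanding $\delta=\sum_{j\ge -1}c_j(1-\alpha)^{j}$, the discounted comparison of the two profiles for $\alpha$ near $1$ is dictated by the sign of the first nonzero coefficient $c_{j_0}$. Discount optimality fails to imply Blackwell optimality at some $\alpha$ exactly when $\delta$, hence $R$, changes sign on $(\alpha,1)$; therefore $\alphabw\le 1-g_{\blackwell}$, where $g_{\blackwell}$ is the distance from $1$ to the nearest root of $R$ in $[0,1)$, taken over all profile pairs. For $\sens$-sensitive optimality the same holds, but only the pairs whose first nonzero Laurent coefficient occurs at order at most $\sens$ constrain $\alphad$, the remaining pairs being $\sens$-sensitive indifferent. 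Because $S$ has a double zero at $\alpha=1$ in the unichain (and per-state deterministic) setting, such a constrained pair has $R$ vanishing at $\alpha=1$ to order $q\le \sens+2$.

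The heart of the argument is the separation bound. Substituting $\alpha=1-w$ produces $Q(w)=R(1-w)\in\Z[w]$, of degree at most $2n-1$ and vanishing to order $q$ at $w=0$, whose roots near $w=0$ are the roots of $R$ near $\alpha=1$. The hockey-stick identity $\sum_{k\ge j}\binom{k}{j}=\binom{2n}{j+1}$ bounds the coefficient of $w^{j}$ in $Q$ by $C\binom{2n}{j+1}$, where $C$ is the coefficient bound on $R$. Writing $Q(w)=w^{q}\tilde Q(w)$, the constant term $\beta_0$ of $\tilde Q$ is a nonzero integer, so $|\beta_0|\ge 1$, while its higher coefficients satisfy $|\beta_i|\le C\binom{2n}{q+i+1}$. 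The Fujiwara form of Lagrange's bound gives that every nonzero root of $\tilde Q$ has modulus at least $\tfrac12\bigl(\max_{i\ge 1}|\beta_i/\beta_0|^{1/i}\bigr)^{-1}$. In the deterministic case this maximum is attained at $i=1$, because for a profile of coefficients bounded by binomials the map $i\mapsto (C\binom{2n}{q+i+1})^{1/i}$ is maximized at $i=1$; the bound then reduces to $\tfrac{1}{2C\binom{2n}{q+2}}$. Taking $C$ of order $W$ and using $q+2\le \sens+4$ for the constrained pairs yields the $\sens$-sensitive bound with $\binom{2n}{\min\{\sens+4,n\}}$, while allowing $q+2$ to reach the central index $n$ yields the Blackwell bound with $\binom{2n}{n}$.

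For the stochastic case the coefficient bound carries the extra factor $(2M)^{2n-1}$, and the interaction of this factor with the binomial coefficients shifts the maximizer of $i\mapsto |\beta_i|^{1/i}$ away from $i=1$: optimizing it places the relevant index near $\lfloor \tfrac{2}{3}n\rfloor$ (respectively near $\min\{\sens+2,\lfloor \tfrac{2}{3}n-1\rfloor\}$ once the vanishing-order constraint $q\le \sens+2$ is imposed in the unichain case), which is exactly what produces the binomials $\binom{2n-1}{\lfloor \frac{2}{3}n\rfloor}$ and the powers of $2$ in the statement. The main obstacle I anticipate is twofold: first, making rigorous the identification of the vanishing order $q$ of $R$ at $\alpha=1$ with the $\sens$-sensitive order of the pair, and the bound $q\le \sens+2$, which requires careful use of the Laurent/resolvent expansion and the recurrence structure of the two chains; and second, solving the discrete optimization of the Fujiwara maximum over $i$ in the stochastic regime, together with the bookkeeping of the $(2M)$-denominators, which is what turns the clean deterministic index $i=1$ into the fractional exponent $\tfrac{2}{3}n$.
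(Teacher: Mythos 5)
Your overall architecture matches the paper's: compare pairs of stationary deterministic strategies, clear denominators to obtain an integer polynomial, substitute $\alpha=1-w$, apply the Lagrange/Fujiwara bound to the de-flated polynomial, and tie the vanishing order $q$ at $\alpha=1$ to the sensitive order via the double zero of the denominator (using unichainness in the stochastic case). However, both of your coefficient analyses -- which are exactly where the stated constants come from -- have genuine gaps. In the deterministic case, you take $S(\alpha)=\det(I-\alpha P_1)\det(I-\alpha P_2)$ and claim a ``Hadamard-type estimate'' bounds the coefficients of the Cramer numerator $R$ by $O(W)$. This is false: for a deterministic transition matrix $P$, $\det(I-\alpha P)=\prod_{c}(1-\alpha^{\ell_c})$ over the cycles of the functional graph of $P$, whose coefficients can be as large as $\binom{n}{\lfloor n/2\rfloor}$ (take $P=I$, giving $(1-\alpha)^n$), so $R$ defined this way has coefficients exponentially large in $n$, not $O(W)$. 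The paper instead uses the path-plus-circuit form $v_i^{\sigma,\tau}(\alpha)=\langle r,\pi\rangle_\alpha+\frac{\alpha^p}{1-\alpha^q}\langle r,\gamma\rangle_\alpha$ and clears only the two relevant circuit factors, setting $\Delta(\alpha)=(1-\alpha^q)(1-\alpha^{q'})\bigl(v_i^{\sigma,\tau}(\alpha)-v_i^{\sigma',\tau'}(\alpha)\bigr)$, which gives degree at most $2n-1$ and height at most $12W$; that is the source of the $24W$.

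In the stochastic case, your mechanism for the index $\lfloor\frac{2}{3}n\rfloor$ is wrong. You use a uniform coefficient bound of order $nW(2M)^{2n-1}$ and assert that this factor shifts the maximizer of $i\mapsto|\beta_i|^{1/i}$ away from $i=1$. It does not: for any uniform bound $C\geq 1$ one has $C\binom{2n}{q+2}\geq\bigl(C\binom{2n}{q+i+1}\bigr)^{1/i}$ for all $i$, so the inner (Lagrange) maximum stays at the first index, the outer worst case over the vanishing order stays at the central binomial, and you only get $\alphabw\leq 1-\frac{1}{O\left(nW(2M)^{2n-1}\binom{2n}{n}\right)}$, which is exponentially weaker than the statement and does not produce the powers of $2$. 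The paper's bound requires the $k$-dependent estimate $|c_k|\leq 2nWM^{2n-1}\binom{2n-1}{k}$, proved via compound matrices/cofactor expansions and Vandermonde's identity; after the substitution this yields $|g_i|\leq nW(2M)^{2n-1}2^{1-i}\binom{2n-1}{i}$ thanks to $\sum_{k\geq i}\binom{m}{k}\binom{k}{i}=2^{m-i}\binom{m}{i}$, and it is this exponential decay $2^{-i}$ that makes the separation radius $\frac{2^{j-1}}{nW(2M)^{2n-1}\binom{2n-1}{j+1}}$. The index $\lfloor\frac{2}{3}n\rfloor$ then arises from the \emph{outer} minimization of this radius over the vanishing order $j$ (where the $2^{j}$ in the numerator shifts the minimizer away from the central index), not from the Lagrange step itself. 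Without the refined coefficient bound, your argument proves a weaker theorem than the one stated.
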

Our next set of results are bounds on $\alphabw$ based on Mahler measures. 
 Since these bounds are more difficult to read than the ones in the previous theorem, we only give the resulting $O(\cdot)$ expressions here, and we provide the exact values in Sections~\ref{SectionDeterm} and~\ref{SectionStochastic}. We provide bounds on $-\log(1-\alphabw)$, i.e., on the value of $L>0$ such that $\displaystyle \alphabw \leq 1 - e^{-L}$.
\begin{theorem}[Based on Mahler measures]\label{ThMahler}
If the game $\Gamma$ satisfies \Cref{assumption}, then:

\noindent
{\em Deterministic case.} The Blackwell threshold $\alphabw$ satisfies
 \[- \log \left(1-\alphabw\right) \leq O\left(\max \left\{ \sqrt{n\log(n)\log(\sqrt{n}W)},\log(\sqrt{n}W) \right\} \right) \; . \]
{\em Stochastic case.} The Blackwell threshold $\alphabw$ satisfies
 \[- \log \left(1-\alphabw\right) \leq O\Big(  \max \big\{ \log(W) + n(1+\log(M)),\sqrt{n\log(n)\left(\log(W) + n(1+\log(M))\right)} \big\} \Big) . \]
\end{theorem}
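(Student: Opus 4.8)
The plan is to bound the Blackwell threshold $\alphabw$ by locating, as tightly as possible, the smallest discount factor $\alpha$ below $1$ at which the set of optimal strategies can change. The key structural fact (established in the earlier discussion and standard in the theory) is that a strategy fails to be Blackwell optimal precisely when it is optimal at $1$ but ceases to be optimal at some $\alpha<1$; the ``switching points'' where the ordering of two competing strategies' discounted values changes are roots in $\alpha$ of a polynomial built from the game data. Concretely, for each pair of pure stationary strategies the difference of their discounted value functions is a rational function of $\alpha$ whose numerator $P(\alpha)$ has integer (or controllably rational) coefficients of bounded degree and bounded height. The Blackwell threshold is then at most $1$ minus the distance from $1$ to the nearest root $z\neq 1$ of such a polynomial $P$, so the whole problem reduces to lower bounding $|1-z|$ over roots of integer polynomials of controlled degree and height.

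First I would make precise the encoding of the discounted values. Under \Cref{assumption}, solving the linear systems $(I-\alpha M^{-1}T)v=r$ via Cramer's rule expresses each relevant scalar comparison as the sign of a determinant that is polynomial in $\alpha$. I would bound (i) the degree $D$ of the governing polynomial $P$ in terms of $n$ (the matrices are $n\times n$, so $D=O(n)$), and (ii) the height $H$ of $P$, i.e.\ the maximum modulus of its integer coefficients, in terms of $n$, $W$, and $M$ — using Hadamard's inequality on the determinants together with the clearing of the common denominator $M$. This is the bookkeeping step: I expect $\log H = O(n(1+\log M)+\log W)$ in the stochastic case and $\log H = O(\log(\sqrt n\,W))$ after the Hadamard/Stirling cleanup in the deterministic case, matching the two regimes appearing in \Cref{ThMahler}.

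Next I would invoke the Mahler-measure separation bound, specifically Dubickas's theorem, to lower bound $|1-z|$ for any root $z\neq 1$ of $P$. The Mahler measure $\mathcal M(P)$ is controlled by the height and degree via $\mathcal M(P)\le \sqrt{D+1}\,H$, and the relevant separation inequality gives a lower bound on $|1-z|$ of the shape $|1-z|\ge \exp\!\big(-c\,\sqrt{D\log D\,\log \mathcal M(P)}\big)$ (up to the regime where the linear term $\log \mathcal M(P)$ dominates). Substituting the degree bound $D=O(n)$ and the height bound from the previous step, and taking logarithms, yields $-\log(1-\alphabw)\le -\log|1-z|$ of exactly the claimed order: the max of a $\sqrt{n\log n\cdot(\log W+n(1+\log M))}$ term and a linear $\log W+n(1+\log M)$ term in the stochastic case, and the analogous deterministic expression with $\log\mathcal M(P)=O(\log(\sqrt n\,W))$. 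The two cases differ only through the height estimate, so once that is in place the stochastic bound and the deterministic bound follow from the same separation inequality.

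The main obstacle is the height estimate and, more subtly, ensuring that the polynomial $P$ whose roots govern the threshold genuinely has integer coefficients of the claimed height rather than merely rational coefficients — this is where the common denominator $M$ must be cleared carefully and where the $(2M)^{2n-1}$-type factors in \Cref{ThLagrange} originate. A secondary difficulty is matching the precise constant and the branch structure of Dubickas's bound: the theorem has two regimes (one governed by $\sqrt{D\log D\,\log\mathcal M}$ and one by $\log\mathcal M$ alone), and I must verify which regime is active for the relevant ranges of $n,W,M$ so that the $\max\{\cdot,\cdot\}$ in the statement is justified, rather than just one of the two terms. I would also need to check that a nonzero root $z\neq 1$ exists within the unit disk at all (otherwise the threshold bound is vacuous and even stronger), so that the separation bound is being applied meaningfully to the actual switching polynomial.
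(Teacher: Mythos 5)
Your overall strategy is the same as the paper's: reduce the Blackwell threshold to a root-separation question for the numerator polynomial $\Delta$ of a difference of discounted value functions, bound its degree (at most $2n-1$) and its Mahler measure via Landau's inequality $M(\Delta)\leq\sqrt{\sum_k|c_k|^2}$, and then invoke Dubickas's theorem to lower bound $|1-z|$ for roots $z\neq 1$. Your bookkeeping (Cramer's rule, Hadamard-type estimates, clearing the common denominator $M$) lands on the same quantities as the paper's bounds on the coefficients of $\Delta$, and for an $O(\cdot)$ statement it suffices.

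There is, however, one concrete gap: you misidentify where the second (linear) term of the max comes from. Dubickas's theorem, as used here, has a single conclusion $|z-1|>e^{-(\pi/4+\epsilon)\sqrt{d\log d\,\log M(z)}}$, and its hypothesis is that the degree $d$ of the algebraic number $z$ --- not the degree of $\Delta$ --- exceeds a constant $D_\epsilon$ depending on $\epsilon$. A root of $\Delta$ can perfectly well have degree $1$ or $2$ (e.g.\ a rational root), in which case Dubickas's theorem says nothing about it, no matter how large $n$, $W$, $M$ are. So you cannot apply the separation inequality to \emph{any} root $z\neq 1$ of the switching polynomial as you propose, and the max is not resolved by ``verifying which regime is active for the relevant ranges of $n,W,M$'': both terms are needed simultaneously, because different roots of the same polynomial $\Delta$ fall under different bounds depending on their own algebraic degree. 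The paper closes the low-degree case ($d\leq D_\epsilon$) by taking the minimal polynomial $P$ of such a root, using $H(P)\leq 2^d M(P)\leq 2^{D_\epsilon}M(\Delta)$ together with the Lagrange bound (after the change of variables $x=1-y$, as in \Cref{lemmaTechnical2}) to get $-\log|z-1|\leq a_\epsilon+\log M(\Delta)$ for a constant $a_\epsilon$; this fallback is exactly what produces the linear term $\log(\sqrt{n}\,W)$ (resp.\ $\log W+n(1+\log M)$) in the statement. With this low-degree argument added, your proof goes through and coincides with the paper's.
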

We now present our last set of results. \cite{Veinott1969} shows that, for Markov decision processes (MDPs, i.e., the one-player case), $(n-2)$-sensitive optimal strategies are also Blackwell optimal. Our next theorem shows that a much smaller value of $d$ may suffice.
  \begin{theorem}\label{th:bound on k - intro}
Assume the game $\Gamma$ satisfies \Cref{assumption} and is deterministic.
Then every $\bar{d}^{\de}(n,W)$-sensitive optimal strategy is Blackwell optimal, where $\bar{d}^{\de}(n,W) =O\left(\sqrt{n\log(W)}\right)$.
\end{theorem}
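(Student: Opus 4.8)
The plan is to argue by contraposition and to translate the gap between $\sens$-sensitive optimality and Blackwell optimality into a statement about the multiplicity of $1$ as a root of an integer polynomial of controlled degree and height, to which the multiplicity theorem of Borwein, Erd\'elyi and K\'os \cite{BEK1999} applies. So suppose $\sigma\opt$ is $\sens$-sensitive optimal but \emph{not} Blackwell optimal. Using the characterization of optimality by stationary profiles (and the existence of stationary Blackwell-optimal counterstrategies for all $\alpha$ close to $1$), there is a deviation to a fixed stationary profile whose discounted value $v'_\alpha$ strictly improves on $v\opt_\alpha$ for all $\alpha$ sufficiently close to $1$, while $\sigma\opt$ being $\sens$-sensitive optimal forces the coefficients of orders $-1,0,\dots,\sens$ in the expansion of $R(\alpha):=v'_\alpha-v\opt_\alpha$ in powers of $1-\alpha$ to vanish. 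Since $R(\alpha)>0$ as $\alpha\uparrow 1$ and $(1-\alpha)^k>0$ for $\alpha<1$, the first nonzero Laurent coefficient of $R$ is positive and occurs at order $\geq \sens+1$; in particular $c_{-1}=0$, so $R$ is regular at $1$ and has a zero there of order at least $\sens+1$.

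Next I would reduce this to a single integer polynomial and bound its parameters, exploiting the deterministic structure. In a deterministic game, any fixed stationary profile produces an eventually periodic play, so its value is rational of the form $N(\alpha)/(1-\alpha^{p})$ with $p\leq n$, $\deg N=O(n)$, and integer coefficients bounded in absolute value by $2W$. Writing $R=\bar N/\bar D$ over the common denominator $\bar D=(1-\alpha^{p})(1-\alpha^{p'})$, clearing denominators shows that $\bar N$ is a \emph{nonzero} integer polynomial with $\deg \bar N=O(n)$ and height $H=O(W)$ (the factors $1-\alpha^{p}$ have height $1$, so no blow-up occurs). Because $R$ vanishes at $1$ to order at least $\sens+1$ and $\bar D$ vanishes at $1$ (to order at most $2$), the numerator $\bar N=R\,\bar D$ vanishes at $1$ to order at least $\sens+1$. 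Hence $1$ is a root of $\bar N$ of multiplicity at least $\sens+1$.

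Finally, the theorem of \cite{BEK1999} bounds the multiplicity of $1$ as a root of a nonzero integer polynomial of degree $D_0$ and height $H_0$ by $O\!\left(\sqrt{D_0\log H_0}\right)$. With $D_0=O(n)$ and $H_0=O(W)$ this yields $\sens+1\leq O\!\left(\sqrt{n\log(W)}\right)$, so setting $\bar{d}^{\de}(n,W)=O(\sqrt{n\log(W)})$ to exceed this bound makes the assumed non-Blackwell deviation impossible; every $\bar{d}^{\de}(n,W)$-sensitive optimal strategy is therefore Blackwell optimal, improving on the bound $n-2$ of \cite{Veinott1969}. The main obstacle is the reduction itself: one must pin down the exact dictionary between the sensitivity order of the tie and the root multiplicity (with correct bookkeeping of the simple pole at $\alpha=1$), and, crucially, verify that for deterministic games the relevant numerator has height $O(W)$ rather than the exponential size $\sim M^{n}$ that a generic stochastic $\de(I-\alpha P)$ would incur --- it is precisely this gain that lets the $O(\sqrt{D_0\log H_0})$ bound produce $O(\sqrt{n\log W})$ here, whereas the stochastic case only supports the weaker Mahler-measure estimates of \Cref{ThMahler}.
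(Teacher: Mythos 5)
Your proof is correct and takes essentially the same route as the paper's: both clear denominators to form the integer polynomial $\Delta$ (degree at most $2n-1$, height $O(W)$, via the path-plus-circuit structure of deterministic plays), both translate $d$-sensitive optimality into vanishing of the low-order coefficients of the expansion at $\alpha=1$, i.e., into a lower bound on the multiplicity of $1$ as a root of $\Delta$, and both cap that multiplicity with the Borwein--Erd\'elyi--K\'os theorem. The only difference is logical packaging: you argue by contraposition from an improving stationary deviation near $\alpha=1$, whereas the paper argues directly that for a $\bar{d}^{\de}(n,W)$-sensitive optimal pair every value difference factors as $(1-\alpha)^k Q(\alpha)$ with $k\leq \bar{d}^{\de}(n,W)$ and $Q(1)>0$, hence the pair is $d$-sensitive optimal for all $d$ and therefore Blackwell optimal.
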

  When $\log(W) = o(n)$, $\bar{d}^{\de}(n,W)$ may be much smaller than $n-2$. Combining \Cref{th:bound on k - intro} with the bounds on $\alphad$ of \Cref{ThLagrange}, we arrive at the following bound on $\alphabw$.
 \begin{theorem}[Based on multiplicity]\label{th:bound on d}
 Assume that $\Gamma$ satisfies \Cref{assumption} and is deterministic. 
 Then the Blackwell threshold $\alphabw$ satisfies $\displaystyle \alphabw \leq 1-\frac{1}{24 W{2n\choose {\min \{ O\left(\sqrt{n\log(W)}\right), n\}}}} \; .$
  \end{theorem}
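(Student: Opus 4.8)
The plan is to combine \Cref{th:bound on k - intro} with the deterministic Lagrange bound of \Cref{ThLagrange}, exploiting the fact that together these two results pin down the $\sens$-sensitive threshold exactly at the critical sensitivity level $\sens = \bar{d}^{\de}(n,W)$, so that the already-established bound on $\alphad$ transfers verbatim to $\alphabw$.

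First I would record the elementary monotonicity relation $\alphad \le \alphabw$, valid for every finite $\sens$. Indeed, any Blackwell optimal strategy is $\sens$-sensitive optimal, since Blackwell optimality corresponds to $\sens=\infty$ and $\sens$-sensitive optimality is implied by $\sens'$-sensitive optimality whenever $\sens'>\sens$. Hence every strategy that is discount optimal for all $\alpha \ge \alphabw$ is Blackwell optimal by definition of $\alphabw$, and therefore also $\sens$-sensitive optimal; consequently the $\sens$-sensitive threshold cannot exceed $\alphabw$, giving $\alphad \le \alphabw$.

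Next I would use \Cref{th:bound on k - intro} to obtain the reverse inequality at the distinguished level $\sens = \bar{d}^{\de}(n,W)$. By definition of $\alphad$, any strategy that is discount optimal for all $\alpha \ge \alphad$ is $\sens$-sensitive optimal, and \Cref{th:bound on k - intro} upgrades this to Blackwell optimality for this particular $\sens$. This yields $\alphabw \le \alphad$, which combined with the monotonicity relation above gives the equality $\alphabw = \alphad$ when $\sens = \bar{d}^{\de}(n,W)$. Finally I would substitute this sensitivity level into the deterministic $\sens$-sensitive bound of \Cref{ThLagrange}, namely $\alphad \le 1 - \bigl(24\,W\,\binom{2n}{\min\{\sens+4,\,n\}}\bigr)^{-1}$. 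Since $\bar{d}^{\de}(n,W) = O(\sqrt{n\log(W)})$, the additive constant $+4$ is absorbed into the $O(\cdot)$ term and $\min\{\bar{d}^{\de}(n,W)+4,\,n\} = \min\{O(\sqrt{n\log(W)}),\,n\}$, which is exactly the claimed bound.

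No single step is a genuine obstacle, as the statement is essentially a corollary: all of the technical difficulty is already absorbed into \Cref{th:bound on k - intro} (the multiplicity estimate of Borwein--Erd\'elyi--K\'os controlling the critical $\sens$) and into the Lagrange bound of \Cref{ThLagrange}. The only points requiring mild care are the directions of the inequalities—one must verify that $\sens$-sensitive optimality is monotone in $\sens$ so that $\alphad$ is sandwiched below $\alphabw$—and the monotonicity of the binomial coefficient $\binom{2n}{k}$ in its lower index $k$ over the relevant range $k \le n$, which guarantees that replacing $\bar{d}^{\de}(n,W)+4$ by its $O(\cdot)$ expression only weakens the bound.
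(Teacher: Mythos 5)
Your proposal is correct and takes essentially the same route as the paper: the paper obtains \Cref{th:bound on d} directly by substituting $\sens = \bar{d}^{\de}(n,W)$ into the $\alphad$ bound of \Cref{th:bound $\sens$-sensitive discount factor - deterministic} (the deterministic part of \Cref{ThLagrange}) and invoking \Cref{th:bound on k} to upgrade $\sens$-sensitive optimality to Blackwell optimality, which is exactly your second and third steps. Your additional derivation of the reverse inequality $\alphad \leq \alphabw$ (hence equality at the critical level) is harmless but unnecessary, since only $\alphabw \leq \alphad$ is needed to transfer the bound.
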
  
In the case of general (non-deterministic) perfect-information SGs, the multiplicity method of \Cref{th:bound on d} does not lead to a useful bound. We further elaborate on this in \Cref{rmk:bound on d - stochastic} of \Cref{SectionStochastic}.
To simplify the comparison between our bounds, and with existing work, we summarize all our results in Tables~\ref{tab:bound deterministic SGs} and~\ref{tab:bound general SGs}. In these tables, we reformulate our bounds on $\alphabw$ as bounds on $-\log(1-\alphabw)$, i.e., on the value of $L>0$ such that $\displaystyle \alphabw \leq 1 - e^{-L}$, and similarly for $\alphad$.

Note that Assumption~\ref{assumption} is necessary to obtain a meaningful bound on $\alphabw$ and $\alphad$ (Proposition 4.3 in \cite{grand2023reducing}).
We emphasize that the bounds derived in Theorems~\ref{ThLagrange},  \ref{ThMahler}, and~\ref{th:bound on d} are complementary, i.e., there are different regimes of $\log(W)$, $\log(M)$ and $n$ where one bound is better than the others. As an example, consider deterministic games ($M=1$). When $W=O(1)$, \Cref{ThLagrange} yields $-\log(1-\alphabw) \leq O(n)$, while Theorems~\ref{ThMahler} and~\ref{th:bound on d} both lead to the stronger bound $-\log(1-\alphabw) \leq O\left(\log(n)\sqrt{n}\right)$. In contrast, in the regime where $W = \exp( \Theta(n))$, Theorems~\ref{ThLagrange} and~\ref{th:bound on d} lead to the bound $-\log(1-\alphabw) \leq O(n)$, which is stronger than \Cref{ThMahler} yielding $-\log(1-\alphabw) \leq O\left(n \sqrt{\log(n)}\right)$. 
We provide more discussion on this in Sections~\ref{SectionDeterm} and~\ref{SectionStochastic}.
\todo{this paragraph needs to be rewritten to emphasize better the main implications of our results}Compared to previous work, we note that our bound on $-\log(1-\alphabw)$ from \Cref{ThLagrange} improves upon \cite{andersson2009complexity} and \cite{grand2023reducing} by a factor $\Omega(n)$, while our bound from Theorem \ref{th:bound on d} recovers for SGs the results obtained in \cite{mukherjee2025howard} for MDPs. \Cref{ThMahler} provides a new approach to bounding $\alphabw$ based on Mahler measures. We are also the first to provide bounds for $\alphad$, beyond the case $d=-1$ for deterministic games analyzed in~\cite{zwick}.
However, we note that our results on $\alphabw$ do not lead to weakly- or even quasi-polynomial time algorithms for computing Blackwell optimal strategies: our bounds on $\alphabw$ are still very close to $1$, and the algorithms with the best dependence on $\alpha$ for solving discounted SGs require $\tilde{O}((1-\alpha)^{-1})$ arithmetic operations~\cite{hansen2013strategy}. We refer to the next section for a detailed discussion and comparison with the existing bounds on $\alphabw$ and $\alphad$.

\begin{table}[htp]
\caption{Bounds for {\em deterministic} SGs 
    satisfying \Cref{assumption}. For the sake of readability, in all these bounds we omit constant terms and the $O(\cdot)$ notation.}
    \label{tab:bound deterministic SGs}
    \centering
    \begin{tabular}{@{}c@{}c@{}c@{}@{}c@{}}
\toprule
      &Bound on $-\log(1-\alphabw)$ & Bound on $-\log(1-\alphad)$ & Remarks \\
        \midrule
     \cite{zwick} & $\times$ & $\log(W) +  3 \cdot \log(n)$ &  For $d=-1$\\
    \cite{andersson2009complexity} & $n\log\left(n\right) + n^2 \log\left(W\right)$ & $\times$ &  $\cdot$\\
        \cite{grand2023reducing} & $n^2 + n \log(W)$ & $\times$ & For MDPs \\
        \cite{mukherjee2025howard} & $\sqrt{n \log(W)} \log\left(\frac{n}{\log(W)}\right)+ \log(W)$ & $\times$ & For MDPs \\
        Th.~\ref{ThLagrange} &$n + \log(\frac{W}{\sqrt{n}})$ & $\log(W) + (d+4)\left(1+\log\left(\frac{2n}{d+4}\right)\right)$ & Lagrange \\
        Th.~\ref{ThMahler} & $\max\{\sqrt{n\log(n)\log(\sqrt{n}W)}, \log(\sqrt{n}W) \}$ & $\times$ & Mahler  \\ 
        Th.~\ref{th:bound on d} & $\sqrt{n \log(W)} \log\left(\frac{n}{\log(W)}\right)+ \log(W)$ & $\times$ &\ Multiplicity \\
        \bottomrule
    \end{tabular}
\end{table}
\begin{table}[htp]
    \caption{Bounds for SGs 
    satisfying \Cref{assumption}. For the sake of readability, in all these bounds we hide the constant term and the $O(\cdot)$ notation. $^\dagger$: This bound holds only in the unichain case.}
    \label{tab:bound general SGs}
    \centering
    \begin{tabular}{@{}c@{}c@{}c@{}@{}c@{}}
\toprule
    & Bound on $-\log(1-\alphabw)$ & Bound on $-\log(1-\alphad)$ & Remarks \\
        \midrule
    \cite{andersson2009complexity} & $n\log\left(n\right) + n^2 \log\left(\max\{W,M\}\right)$ & $\times$ &  $\cdot $\\
        \cite{grand2023reducing} & $n^2(1+\log(M))+ n \log(W) $ & $\times$ & For MDPs \\
        Th.~\ref{ThLagrange} & $n(1+\log(M))+ \log(W) $ & $\log(nW (2M)^{2n-1}) + (d+4)\log\left(\frac{2n}{d+4}\right)${$^{\normalsize\dagger}$}  & \ Lagrange \\
        Th.~\ref{ThMahler} & $\max\{\sqrt{n\log(n)F},F\}$ & $\times$ & Mahler \\ 
        & $ F :=  n(1+\log(M)) + \log(W)$ & & \\
        \bottomrule
    \end{tabular}
\end{table}
\paragraph{Related work.}
To the best of our knowledge, only a small number of papers have obtained bounds on the $\sens$-sensitive threshold $\alphad$ (only for $d=-1$) or on the Blackwell threshold $\alphabw$. 

\noindent 
{\em $\sens$-sensitive threshold.} The existence of $\alpha_{-1}$ is proved in the seminal paper~\cite{liggettlippman} and in Theorem~5.4.4 of~\cite{puri1995theory}. It is used, for instance, in \cite{andersson2009complexity} to show that these games polynomially reduce to solving discounted SGs.
\cite{zwick} shows the tight bound $\alpha_{-1} \geq 1 - \frac{1}{8 W n^3}$ for deterministic perfect-information mean-payoff SGs. \Cref{ThLagrange} recovers this bound for $\sens=-1$, but no general bound was known for $\alphad$ (before our work). \cite{akian2013policy} shows that if there is a renewal state, i.e., a state for which the time of first return from any other state is bounded by $N \in \N$, then we can take $\alpha_{-1}= 1 - 1/N$. A characterization of $\alpha_{-1}$ for the one-player case (i.e., for MDPs) is given in \cite{boone2023discounted} but the obtained bound relies on the minimum gain difference between two strategies, instead of only the game parameters $n,M,W$ as in our results. Finally, we note that a line of work studies a related but different question for MDPs and reinforcement learning, namely, how close to $1$ should $\alpha$ be for a discount optimal strategy to be $\epsilon$-optimal for the mean-payoff criterion, e.g.~\cite{wang2022near} showing that $\alpha \geq 1 - \epsilon/H$ suffices for weakly-communicating instances with diameter $H \in \N$, or~\cite{jin2021towards} using assumptions on the mixing times of the Markov chains induced by deterministic strategies. In contrast to these works, we focus on optimality (instead of $\epsilon$-optimality) and most of our results do not require an assumption on the chain structure (except for $\alphad$ in the stochastic case). 

\noindent 
{\em Blackwell threshold.}
The existence of $\alphabw$ comes from an argument developed initially by Blackwell~\cite{blackwell1962discrete}. The works closest to ours are \cite{andersson2009complexity,grand2023reducing,mukherjee2025howard}. \cite{andersson2009complexity} shows a bound on $\alpha_{-1}$ for SGs but their argument extends directly to $\alphabw$.
\cite{grand2023reducing} gives a bound on $\alphabw$ for MDPs.
\cite{mukherjee2025howard} shows an analog of \Cref{th:bound on k - intro} for {\em deterministic} MDPs using Lagrange's bound. We summarize these bounds in Tables~\ref{tab:bound deterministic SGs} and~\ref{tab:bound general SGs}. We also note that the Blackwell threshold has found recent applications in the study of robust MDPs~\cite{grand2023beyond,wang2024robust} and in the smoothed analysis of the complexity of strategy iteration~\cite{loff2024smoothed}.

\noindent
{\em Main improvements compared to previous works.} Our work appears to be the first to provide a bound on the $\sens$-sensitive threshold $\alphad$ beyond the case $\sens=-1$ (corresponding to mean-payoff optimality) in the deterministic setting. Regarding $\alphabw$, \cite{mukherjee2025howard} only focuses on deterministic MDPs using the multiplicity approach, i.e., on the one-player deterministic case, whereas we focus on two-player SGs. This line of analysis based on multiplicity (\Cref{th:bound on k - intro}) does not extend to the non-deterministic case (see \Cref{rmk:bound on d - stochastic}), where the corresponding bound on the value of $\sens$ such that $\alphad = \alphabw$ is much larger than $n-2$, the bound proved in the seminal paper~\cite{Veinott1969}.
Compared to \cite{grand2023reducing} and \cite{andersson2009complexity}, our bounds from Theorems~\ref{ThLagrange} and~\ref{ThMahler} compare favorably: for instance, the bounds on $-\log(1-\alphabw)$ in \cite{grand2023reducing} and \cite{andersson2009complexity} are worse by a factor of $\Omega(n)$ compared to our bounds based on Lagrange. This improvement comes from two main innovations in the analysis: better bounds on the magnitudes of the coefficients of the considered polynomials, and the use of much stronger root separation results - \cite{andersson2009complexity} uses a bound attributed to Cauchy (e.g. Equation (5) in Lecture IV of \cite{yap2000fundamental}) which is weaker than the one we use due to Lagrange, while \cite{grand2023reducing} uses a bound from \cite{rump1979polynomial}, which applies to the distance between any two roots of a polynomial, whereas we only need to separate a root from $1$ (and not from any other conjugates).

{\bf Outline.} This paper is organized as follows. We introduce perfect-information SGs in \Cref{sec:stochastic games}. We derive bounds for the deterministic case in \Cref{SectionDeterm} and bounds for the stochastic case in \Cref{SectionStochastic}. The implications of our results are discussed in \Cref{sec:discussion}. All the proofs can be found in the appendix.
\section{Preliminaries on perfect-information stochastic games}\label{sec:stochastic games}
Lloyd Shapley introduced SGs in the seminal paper~\cite{shapley1953stochastic}. We focus on the case of two-player zero-sum SGs with finitely many states and actions, and discrete time. We denote the state space by $[n] \coloneqq \{1,\dots,n\}$ for some $n \in \N$. At every stage $k \in \N$, the game is in a state $i_k \in [n]$ observable by both players, called $\Min$ and $\Max$. An instantaneous reward of $r_{i}^{ab}$ is determined when $\Min$ chooses action $a$ and $\Max$ chooses action $b$ in state $i$, and then the game transitions to the next state $j \in [n]$ with probability $P^{ab}_{ij} \in [0,1]$. We focus on the case of {\em perfect-information} SGs, where we can partition the state space $[n]$ into the states controlled by $\Min$ and those controlled by $\Max$. 

A {\em strategy} of a player is a function that assigns to a history of the game (i.e., the sequence of previous states and actions) a decision (choice of an action) of this player. 
A pair of strategies $(\sigma,\tau)$ of players $\Min$ and $\Max$ induces a probability measure on the set of sequences of states.
We define the {\em discounted value function} $\alpha \mapsto v^{\sigma,\tau}_{i}(\alpha)$ from $]0 , 1[$ to $\R$, associated with the pair of strategies $(\sigma,\tau)$ and the initial state $i$, as
$\displaystyle
v^{\sigma,\tau}_{i}(\alpha) := \E^{\sigma,\tau} \left[ \sum_{k=1}^{+\infty} \alpha^k r_{i_k}^{a_k b_k} \; | \; i_0 = i \right]
  $
where $r_{i_1}^{a_1 b_1},r_{i_2}^{a_2 b_2},\ldots $ is the random sequence of instantaneous rewards induced by $(\sigma,\tau)$. 
\begin{definition}
      A pair of strategies $(\sigma^*,\tau^*)$ of players Min and Max is {\em discount optimal} for the discount factor $\alpha$ if for any state $i$ and for any strategies $\sigma$ and $\tau$ of players Min and Max, we have
\[
v^{\sigma,\tau\opt}_{i}(\alpha) \geq v^{\sigma\opt,\tau\opt}_{i}(\alpha) \geq v^{\sigma\opt,\tau}_{i}(\alpha)\; .
\]

A pair of strategies $(\sigma^*,\tau^*)$ is {\em Blackwell optimal} if there exists $\bar{\alpha} <1$ such that $(\sigma^*,\tau^*)$ is discount optimal for all discount factors larger than $\bar{\alpha}$.
\end{definition}
Shapley~\cite{shapley1953stochastic} shows the existence of stationary discount optimal strategies. If, in addition, the game is a perfect-information SG, then these stationary optimal strategies may be chosen deterministic. The existence of stationary deterministic Blackwell optimal strategies for perfect-information SGs is a consequence of the analysis in the seminal paper~\cite{liggettlippman} (see the proof of Theorem 1 in \cite{liggettlippman}).
\begin{definition}
  A pair of strategies $(\sigma^*,\tau^*)$ of players Min and Max is {\em $\sens$-sensitive optimal} if
\begin{equation}\label{Eq:Def-d-sensitive}
\lim_{\alpha\to 1^-} \; (1-\alpha)^{-\sens}\left(v^{\sigma,\tau^*}_{i}(\alpha)-v^{\sigma^*,\tau^*}_{i}(\alpha) \right) \geq 0 ,
\lim_{\alpha\to 1^-} \; (1-\alpha)^{-\sens}\left(v^{\sigma^*,\tau^*}_{i}(\alpha)-v^{\sigma^*,\tau}_{i}(\alpha) \right) \geq 0 
\end{equation}
for any state $i$ and any strategies $\sigma$ and $\tau$ of players Min and Max respectively.
\end{definition}
Our definition of $\sens$-sensitive optimality for two-player SGs recovers the definition of $\sens$-sensitive optimality for the one-player case~\cite{Veinott1969}. Note that Blackwell optimal strategies are $\sens$-sensitive optimal for $\sens =-1,0,\dots$. In fact, the same analysis as for MDPs (using the Laurent series expansion, e.g. Theorem~10.1.6 of \cite{puterman2014markov}) shows that $n-2$-sensitive optimal strategies are Blackwell optimal, and mean-payoff optimality corresponds to $d=-1$.

Before diving into our main results, we give an {\bf overview of our main proof techniques}.

{\em Bounds on $\alphabw$.}
Our bounds are based on studying the zeros of the functions $\alpha \mapsto v^{\sigma,\tau}_{i}(\alpha) - v^{\sigma',\tau'}_{i}(\alpha)$ for any pairs $(\sigma,\tau)$ and $(\sigma',\tau')$ of stationary strategies and any state $i$. This approach has been used for MDPs, e.g., see Theorem~2.16 of~\cite{feinberg2012handbook}.
Differences of discounted value functions 
are rational functions in $\alpha \in ]0,1[$, i.e., they can be written as the ratio of two polynomials in $\alpha$. As such, each of them can only have finitely many zeros in $]0,1[$. Additionally, for perfect-information SGs, discount optimal strategies may be chosen stationary and {\em deterministic}~\cite{gillette1957stochastic}, hence we can consider only finitely many pairs of strategies. Therefore, there exists a discount factor $\bar{\alpha} < 1$ such that none of the functions $\alpha \mapsto v^{\sigma,\tau}_{i}(\alpha) - v^{\sigma',\tau'}_{i}(\alpha)$ has a zero in $]\bar{\alpha},1[$, and so {\em all} these functions have constant sign in $]\bar{\alpha},1[$. This guarantees that $\alphabw \leq \bar{\alpha}$, since strategies that are discount optimal for some $\alpha$ satisfying $\bar{\alpha} < \alpha < 1$ remain discount optimal for all larger discount factors (otherwise some function $\alpha \mapsto v^{\sigma,\tau}_{i}(\alpha) - v^{\sigma',\tau'}_{i}(\alpha)$ would change sign in $]\bar{\alpha},1[$). 
Thus, a way to bound $\alphabw$ is to determine how close to $1$ can a zero of the functions $\alpha \mapsto v^{\sigma,\tau}_{i}(\alpha) - v^{\sigma',\tau'}_{i}(\alpha)$ be. 
Given \Cref{assumption}, we can bound the degree of the numerator of $\alpha \mapsto v^{\sigma,\tau}_{i}(\alpha) - v^{\sigma',\tau'}_{i}(\alpha)$ and the size of its coefficients. We then use {\em root separation} results to separate the root of a polynomial from a given scalar.
Compared to previous work, our first improvement lies in a better analysis of the coefficients of the considered polynomials. As our second and main improvement, we use stronger separation bounds than in previous work, a bound due to Lagrange (\Cref{th:Lagrange}) and a bound based on Mahler measures (\Cref{th:mahler}). 
\cite{andersson2009complexity} uses a weaker bound due to Cauchy, and~\cite{oliu2021new} also uses the Cauchy bound for a different purpose (bounding the variations in value functions when $\alpha \rightarrow 1$).

\noindent
{\em Deterministic vs. non-deterministic games.} For {\em deterministic} SGs, each transition probability belongs to $\{0,1\}$. In this case, the discounted value function associated with a pair of stationary strategies and a state can be represented by the concatenation of a path and an elementary circuit in the graph of the game, see Section~\ref{SectionDeterm}. This can be used to analyze the degree and the magnitude of the coefficients of the numerator of $\alpha \mapsto v^{\sigma,\tau}_{i}(\alpha) - v^{\sigma',\tau'}_{i}(\alpha)$. When the game is not deterministic, we rely on the closed-form expression of the discounted value functions using cofactor matrices, see Section~\ref{SectionStochastic}.

\noindent
{\em Bounds on the $\sens$-sensitive threshold $\alphad$.} Our bounds on $\alphad$ rely on the same proof techniques as for $\alphabw$, 
except that the $d$-sensitive optimality allows us to deduce more information on the coefficients of the considered polynomials (more precisely, that some of them are zero). \todo{need more explanations here. RK: I slightly modified the sentence}
Again, the deterministic case is easier because we can exploit the graph representation  of the discounted value functions mentioned above, whereas for the case of non-deterministic SGs we require a unichain assumption.

\section{Results for perfect-information deterministic stochastic games}\label{SectionDeterm}
We start by analyzing the structure of the discounted value functions in deterministic perfect information SGs. A deterministic game can be represented by a weighted digraph: its set of nodes is the set of states, 
there is an edge from state $i$ to state $j$ if some actions $a$ and $b$ of the players realize this transition,
in which case this edge has weight $r^{ab}_{ij}$.  
Given an initial state $i$, a pair of stationary strategies $(\sigma,\tau)$ determines a run of the game of the form $``\pi\gamma"$, in which $\pi$ is a path and $\gamma$ is a circuit. 
Given a path $\pi=(i_0,\dots,i_{k})$, we set $\<r,\pi> := r_{i_0i_1} + \alpha r_{i_1i_2} + \dots + \alpha^{k-1}r_{i_{k-1}i_k}$ (for simplicity, we denote the instantaneous rewards by $r_{ij}$ instead of $r^{ab}_{ij}$). 
With this notation, $v_i^{\sigma,\tau}(\alpha)=\<r,\pi> + \frac{\alpha^p}{1-\alpha^q} \<r,\gamma>$, where $p$ is the length of $\pi$ (i.e., the number of edges it contains) and $q$ is the length of $\gamma$. 
To study the zeros of $\alpha \mapsto v^{\sigma,\tau}_{i}(\alpha) - v^{\sigma',\tau'}_{i}(\alpha)$, we analyze the roots of the polynomial 
\begin{equation}\label{eq:def_Delta}
\Delta(\alpha) := (1-\alpha^q)(1-\alpha^{q'}) (v_i^{\sigma, \tau} (\alpha) - v_i^{\sigma',\tau'} (\alpha) ) 
\end{equation}
(here $q'$ is the length of the circuit associated with $(\sigma',\tau')$). 
The next lemma bounds the degree and coefficients of $\Delta$.
\begin{lemma}\label{lem:degree coeff Delta - deterministic}
We can write $\Delta(\alpha) =
\sum_{k=0}^K a_k \alpha^k$,
where $|a_k|\leq 12 W$ and $K \leq 2n-1$.
\end{lemma}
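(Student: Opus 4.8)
The plan is to prove \Cref{lem:degree coeff Delta - deterministic} by directly manipulating the closed-form expression for the discounted value functions given just before the statement, namely $v_i^{\sigma,\tau}(\alpha)=\<r,\pi> + \frac{\alpha^p}{1-\alpha^q} \<r,\gamma>$. First I would substitute this expression (and the analogous one for $(\sigma',\tau')$) into the definition \eqref{eq:def_Delta} of $\Delta$, clearing the denominators $1-\alpha^q$ and $1-\alpha^{q'}$. This yields
\[
\Delta(\alpha) = (1-\alpha^q)(1-\alpha^{q'})\big(\<r,\pi>-\<r,\pi'>\big) + (1-\alpha^{q'})\alpha^{p}\<r,\gamma> - (1-\alpha^q)\alpha^{p'}\<r,\gamma'>,
\]
which is manifestly a polynomial in $\alpha$. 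The next task is to bound the degree $K$ of this polynomial and the magnitudes of its coefficients $a_k$, using \Cref{assumption} (integer rewards bounded by $W$) together with the combinatorial structure of deterministic perfect-information games.

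For the \emph{degree} bound $K\le 2n-1$, the key observation is that the runs $``\pi\gamma"$ and $``\pi'\gamma'"$ each consist of a \emph{simple} path followed by an \emph{elementary} circuit in a graph with $n$ nodes. Thus the path lengths $p,p'$ and circuit lengths $q,q'$ are all at most $n$, and more precisely $p+q\le n$ and $p'+q'\le n$ (a simple path plus a disjoint-enough elementary circuit visits at most $n$ distinct states). I would examine each of the three terms above and argue that the highest power of $\alpha$ appearing is controlled by these constraints; the dominant contribution comes from terms like $\alpha^{q'}\cdot\alpha^{p}\cdot\alpha^{\deg\<r,\gamma>}$, and careful bookkeeping should show the total degree never exceeds $2n-1$. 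The slightly delicate point is to pair up the $p+q\le n$ and $p'+q'\le n$ constraints correctly across the cross terms so that no combination pushes the degree to $2n$ or beyond.

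For the \emph{coefficient} bound $|a_k|\le 12W$, I would expand $\Delta$ as a sum of monomials and track how many times each power $\alpha^k$ can receive a contribution. Each $\<r,\pi>$-type expression has coefficients that are individual rewards $r_{ij}$, hence bounded by $W$; multiplying by factors of the form $(1-\alpha^q)$ (which have coefficients in $\{-1,0,1\}$) and by monomials $\alpha^p$ only shifts indices and introduces sign changes without amplifying magnitudes beyond a small constant number of overlaps. The main obstacle, and the step requiring the most care, is a clean counting argument showing that at most a bounded number (giving the constant $12$) of these $\pm W$-sized terms can collide at any fixed power $\alpha^k$; this requires exploiting that within a single simple path or elementary circuit each power of $\alpha$ appears with at most one reward coefficient, so collisions only arise from combining the (at most) three structural terms and their $(1-\alpha^q)$, $(1-\alpha^{q'})$ expansions. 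I expect the constant $12$ to emerge from systematically enumerating these overlaps rather than from any deep idea, so the real content of the lemma lies in the degree bound and in organizing the coefficient count so that the universal constant is transparent.
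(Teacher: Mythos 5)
Your proposal follows essentially the same route as the paper's proof: the same three-term expansion of $\Delta$, the same use of elementary paths and circuits to get $p+q\le n$, $p'+q'\le n$ (whence each term has degree at most $2n-1$), and the same collision-counting for the coefficients. The bookkeeping you defer does close routinely exactly as you anticipate: the first term contributes at most $4\cdot 2W = 8W$ to any coefficient (four $\pm1$ monomials from $(1-\alpha^q)(1-\alpha^{q'})$ times coefficients of $\langle r,\pi\rangle_{\alpha}-\langle r,\pi'\rangle_{\alpha}$ bounded by $2W$), and the other two terms contribute at most $2W$ each, giving $12W$.
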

As described in the previous section, our bounds on $\alphabw$ and $\alphad$ rely on separating the roots of the polynomial $\Delta$ from $1$. 
We proceed to do so in the next section.

\subsection{Separation based on the Lagrange bound}\label{sec:determ Lagrange}

In this section, we rely on the {\em Lagrange bound} (e.g.~Lemma 5 in Lecture IV of~\cite{yap2000fundamental}).
\begin{theorem}[Lagrange bound]\label{th:Lagrange}
Let $P = \sum_{k=j}^{d} c_k x^k$ with $c_j \neq 0$. Then, any non-zero root $z$ of $P$ satisfies 
 $
 | z | \geq \frac{1}{2}\min_{i \in \{ j+1, \ldots , d\}, c_i \neq 0} \left(|c_{j}|/|c_i|\right)^{\frac{1}{i-j}} $.
\end{theorem}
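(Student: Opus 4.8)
The plan is to prove this classical bound by the standard ``dominant term'' argument, reducing first to the case of a non-vanishing constant coefficient. Since $c_j \neq 0$, I would factor $P(x) = x^j Q(x)$ with $Q(x) = \sum_{k=j}^{d} c_k x^{k-j}$; any non-zero root $z$ of $P$ satisfies $z^j \neq 0$ and is therefore a root of $Q$, while $Q(0) = c_j \neq 0$. Setting $\rho := \min_{i \in \{j+1,\ldots,d\},\, c_i \neq 0}\big(|c_j|/|c_i|\big)^{1/(i-j)}$, the claim reduces to showing $|z| \geq \rho/2$ for every non-zero root $z$ of $Q$.

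First I would rewrite the equation $Q(z) = 0$ as $c_j = -\sum_{i=j+1}^{d} c_i\, z^{i-j}$ and take moduli, using the triangle inequality to get $|c_j| \leq \sum_{i=j+1}^{d} |c_i|\, |z|^{i-j}$ (the terms with $c_i = 0$ simply drop out). Next I would invoke the definition of $\rho$: for every index $i$ with $c_i \neq 0$ one has $\big(|c_j|/|c_i|\big)^{1/(i-j)} \geq \rho$, equivalently $|c_i| \leq |c_j|\, \rho^{-(i-j)}$. Substituting this termwise and dividing through by $|c_j| > 0$ yields $1 \leq \sum_{i=j+1}^{d} (|z|/\rho)^{i-j} = \sum_{m=1}^{d-j} t^m$, where $t := |z|/\rho > 0$.

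The conclusion then follows from a short case split on $t$. If $t \geq 1$, then trivially $|z| \geq \rho \geq \rho/2$. If $t < 1$, I would dominate the finite sum by the full geometric series, $\sum_{m=1}^{d-j} t^m \leq \sum_{m=1}^{\infty} t^m = t/(1-t)$, so that the inequality $1 \leq t/(1-t)$ forces $1 - t \leq t$, hence $t \geq 1/2$ and again $|z| \geq \rho/2$. In either case the asserted bound holds.

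I do not expect a genuine obstacle here; the only points requiring care are the reduction step (verifying that stripping the factor $x^j$ preserves all non-zero roots) and the case split that confines the geometric-series estimate to the regime $t < 1$, which is precisely where the factor $\tfrac{1}{2}$ in the statement originates. The argument relies on nothing beyond the triangle inequality and the summation of a geometric series, and in particular does not require $P$ to have real or integer coefficients, so it applies verbatim to the polynomial $\Delta$ of \Cref{lem:degree coeff Delta - deterministic}.
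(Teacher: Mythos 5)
Your proof is correct. The paper itself does not prove Theorem~\ref{th:Lagrange}; it invokes it as a classical result (Lemma~5 in Lecture~IV of~\cite{yap2000fundamental}, attributed to Lagrange, Hadamard and Fujiwara). Your argument---stripping the factor $x^j$ to get $Q$ with $Q(0)=c_j\neq 0$, bounding each coefficient by $|c_i|\leq |c_j|\rho^{-(i-j)}$ via the definition of $\rho$, and closing with the triangle inequality plus the geometric-series case split that is precisely where the factor $\tfrac{1}{2}$ comes from---is the standard proof of that classical bound, so it coincides with the cited source's approach rather than departing from anything in the paper.
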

The Lagrange bound separates the non-zero roots of a polynomial $P$ from $0$. To separate the roots of $\Delta$ from $1$, we apply this bound to the polynomial $\epsilon \mapsto \Delta(1-\epsilon)$. Using this approach, we prove the first part of \Cref{ThLagrange}, namely, the bound on $\alphad$ for deterministic perfect-information SGs.
\begin{theorem}\label{th:bound $\sens$-sensitive discount factor - deterministic}
Assume the game $\Gamma$ satisfies \Cref{assumption} and is deterministic ($M = 1$). 
Then, the $\sens$-sensitive threshold $\alphad$ satisfies $\displaystyle \alphad \leq 1-\frac{1}{24 W{2n\choose {\min \{ \sens+4 , n\}}}} \; .$
\end{theorem}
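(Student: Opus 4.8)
The plan is to separate the roots of the polynomial $\Delta$ from $1$ by applying the Lagrange bound (\Cref{th:Lagrange}) to the shifted polynomial $\tilde\Delta(\epsilon) := \Delta(1-\epsilon)$, and then to convert the resulting lower bound on $|\epsilon|$ into an upper bound on $\alphad$. The crucial extra ingredient compared to the Blackwell case is that $\sens$-sensitive optimality provides structural information on the \emph{low-order} coefficients of $\tilde\Delta$: I expect that $\sens$-sensitivity forces the first several Taylor coefficients of the difference $v_i^{\sigma,\tau}(\alpha)-v_i^{\sigma',\tau'}(\alpha)$ at $\alpha=1$ to vanish, which translates into $\tilde\Delta$ having a zero of order at least $\sens+1$ (up to the exact index bookkeeping) at $\epsilon=0$. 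This is precisely what lets the minimum index $j$ in \Cref{th:Lagrange} start higher, improving the bound.

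First I would recall from \Cref{lem:degree coeff Delta - deterministic} that $\Delta(\alpha)=\sum_{k=0}^K a_k\alpha^k$ with $|a_k|\le 12W$ and $K\le 2n-1$. Writing $\tilde\Delta(\epsilon)=\Delta(1-\epsilon)=\sum_{k=j}^{K} c_k\epsilon^k$, the coefficients $c_k$ are integer combinations of the $a_k$ obtained by the binomial expansion of $(1-\epsilon)^m$; a routine estimate gives $|c_k|\le \sum_m |a_m|\binom{m}{k}\le 12W\binom{2n}{k}$ (using $\sum_m\binom{m}{k}=\binom{K+1}{k+1}\le\binom{2n}{k}$ type bounds, with the largest binomial coefficient controlling the sum). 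The lowest index $j$ with $c_j\ne 0$ equals the multiplicity of $1$ as a root of $\Delta$; I would argue that $\sens$-sensitive optimality of the strategies under consideration forces this multiplicity to be at least $\sens+2$ whenever the difference is not identically zero, because the limits in \eqref{Eq:Def-d-sensitive} being zero up to order $\sens$ means $(1-\alpha)^{-\sens}(v_i^{\sigma,\tau^*}-v_i^{\sigma^*,\tau^*})\to 0$, i.e.\ the difference vanishes to order $\sens+1$ in $(1-\alpha)$, and the factor $(1-\alpha^q)(1-\alpha^{q'})$ in \eqref{eq:def_Delta} contributes a further factor of $(1-\alpha)$ to each, raising the order by one. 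Thus $j\ge \min\{\sens+2,\ldots\}$, and the value $\sens+4$ in the statement should arise from combining this lower bound on $j$ with the index shift $i-j$ in the exponent of \Cref{th:Lagrange} (the relevant competing index $i$ being $j+2$, giving exponent $1/2$ and the offset that turns $\sens+2$ into $\sens+4$ in the binomial).

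Having bounded $j$ from below and the $|c_k|$ from above, I would apply \Cref{th:Lagrange}: any nonzero root $\epsilon$ of $\tilde\Delta$ satisfies $|\epsilon|\ge\tfrac12\min_{i>j,\,c_i\ne 0}(|c_j|/|c_i|)^{1/(i-j)}$. Using $|c_j|\ge 1$ (integer, nonzero) and $|c_i|\le 12W\binom{2n}{\min\{\sens+4,n\}}$ for the relevant index, the worst-case exponent $1/(i-j)$ together with $(12W\binom{2n}{\cdot})^{-1}\le$ its own square root (since the quantity exceeds $1$) yields $|\epsilon|\ge \tfrac{1}{2}\cdot\tfrac{1}{12W\binom{2n}{\min\{\sens+4,n\}}}=\tfrac{1}{24W\binom{2n}{\min\{\sens+4,n\}}}$. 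Since $\epsilon=1-\alpha$, a root $\alpha$ of $\Delta$ in $]0,1[$ closest to $1$ is at distance at least this much from $1$, so no difference $v_i^{\sigma,\tau}-v_i^{\sigma',\tau'}$ changes sign on the interval $]1-\tfrac{1}{24W\binom{2n}{\min\{\sens+4,n\}}},1[$; as in the overview, this gives the claimed $\alphad\le 1-\tfrac{1}{24W\binom{2n}{\min\{\sens+4,n\}}}$.

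I expect the main obstacle to be the precise argument that $\sens$-sensitive optimality forces the multiplicity $j$ to be exactly (at least) $\sens+2$ rather than $\sens+1$ or $\sens+3$: this requires carefully relating the order of vanishing of the Laurent/Taylor expansion of the value difference at $\alpha=1$ to the definition \eqref{Eq:Def-d-sensitive}, accounting for the possible order-$(-1)$ pole of each $v_i$ at $\alpha=1$ and for the extra factors introduced by $(1-\alpha^q)(1-\alpha^{q'})$. Getting the index bookkeeping right so that the competing index in \Cref{th:Lagrange} lands on $\sens+4$ (and correctly handling the truncation $\min\{\sens+4,n\}$, which should reflect the degree constraint $K\le 2n-1$ capping how large $j$ can be before $\Delta\equiv 0$) is the delicate part; the coefficient bounds and the final substitution are routine by comparison.
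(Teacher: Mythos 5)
Your proposal assembles the right tools --- Lagrange's bound applied to $\epsilon \mapsto \Delta(1-\epsilon)$, integer coefficient estimates, and a link between the sensitivity order and the multiplicity of $1$ as a root of $\Delta$ --- but the central mechanism is backwards, in two related ways. First, your claim that ``$\sens$-sensitive optimality forces the first several Taylor coefficients of the value difference to vanish'' (equivalently, multiplicity at least $\sens+2$) is false: the definition \eqref{Eq:Def-d-sensitive} requires the limits to be \emph{nonnegative}, not zero, so a $\sens$-sensitive optimal pair may beat a deviation strictly at order $0$ (a positive constant difference), giving multiplicity $0$. In the paper the vanishing of low-order coefficients comes from a different source: one argues by contradiction, assuming the pair is discount optimal at $\alpha'$ but \emph{not} $\sens$-sensitive optimal, and takes the \emph{smallest} order $d'\leq \sens$ at which some deviation makes the limit strictly negative; minimality forces the value difference to vanish to order exactly $d'$ with negative leading coefficient, hence $b_0=\cdots=b_{d'+1}=0$ and $b_{d'+2}\neq 0$, i.e.\ an \emph{upper} bound $j=d'+2\leq \sens+2$ on the first nonzero index --- not the lower bound $j\geq \sens+2$ you assert. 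Second, and consequently, your statement that a higher $j$ ``improves the bound'' inverts the monotonicity of \Cref{th:Lagrange}: the root-free interval is essentially $]0,\tfrac{1}{24W\binom{2n}{j+2}}[$, which \emph{shrinks} as $j$ grows (up to $j+2=n$). The whole point is that a failure of $\sens$-sensitivity caps $j$ from above by $\sens+2$, capping the binomial at $\binom{2n}{\min\{\sens+4,n\}}$ and yielding a \emph{larger} interval on which the witnessing value difference stays strictly negative, which contradicts discount optimality at $\alpha'$. With your direction of the inequality nothing caps the binomial below $\binom{2n}{n}$, and you could only recover the Blackwell bound.

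Two further steps also fail as written. Your concluding claim that ``no difference changes sign'' on the enlarged interval cannot hold for \emph{all} pairs of strategies: differences with high multiplicity at $1$ may well have roots there (Lagrange only protects a smaller interval for them); the argument must be run with the specific deviation witnessing the failure of $\sens$-sensitivity, combining its negative sign near $\alpha=1$ with discount optimality at the single point $\alpha'$. And in the Lagrange step, bounding the competing coefficient by $|c_i|\leq 12W\binom{2n}{\min\{\sens+4,n\}}$ ``for the relevant index'' is unjustified: the minimum in \Cref{th:Lagrange} runs over all $i>j$, and for $i+1$ near $n$ one only has $|c_i|\leq 12W\binom{2n}{n}$. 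The paper needs the inequality ${m \choose i} \leq {m \choose j+1}^{i-j}$ (proved by induction, \eqref{eq:aux}) so that $\left(|c_j|/|c_i|\right)^{1/(i-j)}$ is bounded below by $1/\bigl(12W\binom{2n}{j+2}\bigr)$ uniformly in $i$; your ``take the $(i-j)$-th root, the quantity exceeds $1$'' shortcut only yields the $j$-independent bound $1/\bigl(12W\binom{2n}{n}\bigr)$, which loses exactly the $d$-dependence the theorem is about.
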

The binomial coefficient in our bound on $\alphad$ appears because of the change of variable $\epsilon = 1-\alpha$ in the polynomial $\Delta$, necessary to apply the Lagrange bound to the polynomial $\epsilon \mapsto \Delta(1-\epsilon)$.
Together with the result of~\cite{HansenICS2011}, our bound of $\alpha_d$
implies that for a fixed value of $d$, we can compute
$d$-sensitive optimal policies of a deterministic game
in pseudo-polynomial time, extending
a theorem of~\cite{zwick} ($d=-1$ case, then the bound is optimal).

\begin{corollary}\label{coroZP96}
If the game $\Gamma$ satisfies \Cref{assumption} and is deterministic ($M = 1$), we have  $\displaystyle \alpha_{-1} \leq 1-\frac{1}{O(Wn^3)}$ and $\displaystyle\alphabw \leq 1-\frac{1}{24 W{2n\choose n}}$.
  \end{corollary}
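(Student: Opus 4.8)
The plan is to derive both bounds as direct specializations of \Cref{th:bound $\sens$-sensitive discount factor - deterministic} to two well-chosen values of $\sens$, followed by elementary simplification of the binomial coefficient that appears in that theorem.

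For the mean-payoff threshold $\alpha_{-1}$, I would set $\sens = -1$. Then $\min\{\sens+4, n\} = \min\{3, n\}$, which equals $3$ as soon as $n \geq 3$; the cases $n \in \{1,2\}$ only shrink the binomial coefficient and are absorbed into the $O(\cdot)$ notation. The theorem thus gives $\alpha_{-1} \leq 1 - \frac{1}{24 W \binom{2n}{3}}$, and since $\binom{2n}{3} = \frac{2n(2n-1)(2n-2)}{6} = \Theta(n^3)$, this is precisely $1 - \frac{1}{O(Wn^3)}$.

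For the Blackwell threshold, the key observation is that $\sens$-sensitive optimality stabilizes at $\sens = n-2$: as recalled in \Cref{sec:stochastic games}, the Laurent-series argument (Theorem~10.1.6 of~\cite{puterman2014markov}) shows that every $(n-2)$-sensitive optimal strategy is Blackwell optimal. Consequently, any strategy that is discount optimal for all $\alpha \geq \alpha_{n-2}$ is $(n-2)$-sensitive optimal, hence Blackwell optimal, which yields $\alphabw \leq \alpha_{n-2}$. I would then apply the theorem with $\sens = n-2$, so that $\min\{\sens+4, n\} = \min\{n+2, n\} = n$, giving $\alpha_{n-2} \leq 1 - \frac{1}{24 W \binom{2n}{n}}$ and therefore $\alphabw \leq 1 - \frac{1}{24 W \binom{2n}{n}}$.

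I expect no genuine technical obstacle here, since all the analytic content lives in \Cref{th:bound $\sens$-sensitive discount factor - deterministic}; the corollary is pure bookkeeping. The only two points deserving a line of care are the small-$n$ edge cases in the $\alpha_{-1}$ bound (harmlessly absorbed by $O(\cdot)$), and the justification of $\alphabw \leq \alpha_{n-2}$, which rests on the finiteness of the state space ensuring that $\sens$-sensitive optimality coincides with Blackwell optimality once $\sens \geq n-2$.
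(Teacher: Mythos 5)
Your proposal is correct, but for the Blackwell bound it takes a genuinely different route from the paper. For the first inequality your argument is identical to the paper's: apply \Cref{th:bound $\sens$-sensitive discount factor - deterministic} with $\sens=-1$ and note $\binom{2n}{3}=\Theta(n^3)$. For the second inequality, however, the paper never passes through $\alpha_{n-2}$ or through sensitive optimality at all: it applies \Cref{prop_threshold_det} directly, together with $K+1\leq 2n$ and the fact that $j\mapsto \frac{1}{24W\binom{2n}{j+2}}$ is minimized at $j=n-2$, to conclude that \emph{every} difference polynomial $\Delta$ is root-free on the interval $]1-\frac{1}{24W\binom{2n}{n}},1[$ no matter which of its coefficients vanish; the constant-sign argument of \Cref{sec:stochastic games} then bounds $\alphabw$ with no appeal to $\sens$-sensitive optimality. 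You instead instantiate the $\alphad$ theorem at $\sens=n-2$ and invoke the stabilization fact that $(n-2)$-sensitive optimal strategies are Blackwell optimal. This is legitimate --- the paper asserts that fact for perfect-information SGs in \Cref{sec:stochastic games} --- and it is structurally the same argument as the paper's multiplicity approach (\Cref{th:bound on d}), with the classical threshold $n-2$ playing the role of $\bar{d}^{\de}(n,W)$. The trade-off: the paper's direct proof is self-contained, resting only on the separation machinery it has already established, whereas yours inherits a dependence on the Veinott-type stabilization result, which the paper states for two-player games only by analogy with the MDP case (Theorem~10.1.6 of Puterman) and does not actually prove; in exchange, your route makes transparent why the $\alphabw$ bound is exactly the $\alphad$ bound saturated at $\sens=n-2$.
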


{\bf Multiplicity approach.} In this approach, we combine our bound on $\alphad$ from \Cref{th:bound $\sens$-sensitive discount factor - deterministic} with a new result on the smallest integer $\sens$ such that $\sens$-sensitive optimal strategies are also Blackwell optimal. Our next theorem parametrizes the value of such $d$ by $n$ and $W$.
  \begin{theorem}\label{th:bound on k}
Assume the game $\Gamma$ satisfies \Cref{assumption} and is deterministic.
Then, there exists a constant $a>0$ such that $\bar{d}^{\de}(n,W)$-sensitive optimal strategies are Blackwell optimal, where $\bar{d}^{\de}(n,W) \coloneqq a \sqrt{(2n-1)(1+\log(12 W))}-2$.
\end{theorem}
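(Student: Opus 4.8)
\textbf{Proof plan for \Cref{th:bound on k}.}
The goal is to show that $d$-sensitive optimality forces Blackwell optimality already at a value of $d$ on the order of $\sqrt{n\log W}$, rather than the classical $n-2$. The plan is to translate ``$d$-sensitive optimal but not Blackwell optimal'' into a statement about the \emph{order of vanishing} of a difference of discounted value functions at $\alpha=1$, and then to bound that order using a multiplicity theorem for integer polynomials. Concretely, suppose a pair of stationary strategies is $d$-sensitive optimal but \emph{not} Blackwell optimal. Then for some state $i$ and some competing pair of stationary strategies there is a difference of discounted value functions $v_i^{\sigma,\tau}(\alpha)-v_i^{\sigma',\tau'}(\alpha)$ that is not identically zero, yet whose first $d+2$ Laurent coefficients in powers of $1-\alpha$ vanish (the $d$-sensitive conditions of \eqref{Eq:Def-d-sensitive}, together with the complementary inequalities coming from optimality against the other player, force equality through order $d$). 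Clearing the circuit denominators as in \eqref{eq:def_Delta}, this says that the integer polynomial $\Delta(\alpha)$ from \Cref{lem:degree coeff Delta - deterministic} has a root at $\alpha=1$ of multiplicity at least roughly $d+2$.

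The main step is then to invoke the Borwein--Erd\'elyi--K\'os multiplicity bound for polynomials with integer (or bounded) coefficients: if $\Delta$ has integer coefficients bounded by $12W$ in absolute value and degree at most $2n-1$, then the multiplicity $m$ of $1$ as a root of $\Delta$ cannot exceed something of order $\sqrt{(2n-1)\log(12W)}$. First I would state the precise form of this multiplicity theorem (a root at $1$ of a degree-$K$ integer polynomial with coefficients bounded by $H$ has multiplicity $O(\sqrt{K\log H})$), then combine it with the coefficient and degree bounds $|a_k|\le 12W$, $K\le 2n-1$ from \Cref{lem:degree coeff Delta - deterministic}. Setting the multiplicity bound strictly below $d+2$ yields a contradiction, so no such non-Blackwell-optimal $d$-sensitive strategy can exist once $d+2 > a\sqrt{(2n-1)(1+\log(12W))}$, which is exactly the threshold $\bar{d}^{\de}(n,W)=a\sqrt{(2n-1)(1+\log(12W))}-2$ for an appropriate absolute constant $a$.

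The step I expect to be the main obstacle is the \emph{translation from $d$-sensitivity to multiplicity} --- i.e.\ carefully justifying that $d$-sensitive optimality (which gives only one-sided inequalities on the leading surviving Laurent coefficient) combined with the failure of Blackwell optimality forces the exact vanishing of the first $d+2$ coefficients of the \emph{difference} $v_i^{\sigma,\tau}-v_i^{\sigma',\tau'}$. The subtlety is the saddle-point structure: the relevant competing pairs $(\sigma,\tau)$ and $(\sigma',\tau')$ must be chosen so that both the Min-deviation and Max-deviation inequalities apply, pinning the low-order coefficients to zero rather than merely one sign. I would handle this by fixing the $d$-sensitive optimal pair $(\sigma^*,\tau^*)$, and for a putative violation of Blackwell optimality selecting a pair that agrees with it in the discounted game just below $1$; the equality of values for both players on the relevant switch forces the Laurent expansion of the difference to start at order at least $d+1$ in $1-\alpha$, hence order at least $d+2$ after clearing the simple pole, giving the required multiplicity of $\Delta$ at $1$. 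A secondary technical point is confirming that the pole order at $\alpha=1$ of each $v_i^{\sigma,\tau}$ is at most $1$ (the Laurent expansion has a pole of order at most $-1$, as noted for the Veinott expansion), so that multiplying by $(1-\alpha^q)(1-\alpha^{q'})$ in \eqref{eq:def_Delta} cleanly converts a vanishing-to-order-$(d+1)$ statement about the value difference into a multiplicity-$(d+2)$ statement about the integer polynomial $\Delta$; this is where the deterministic circuit representation is essential, since it makes the denominators and the pole structure explicit.
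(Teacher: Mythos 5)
Your proposal rests on exactly the same two pillars as the paper's proof: the Borwein--Erd\'elyi--K\'os multiplicity theorem \cite{BEK1999} applied to $\Delta$ via the degree and coefficient bounds of \Cref{lem:degree coeff Delta - deterministic}, and the factorization \eqref{eq:property_Delta} linking the multiplicity of $1$ as a root of $\Delta$ to the order of vanishing of $v_i^{\sigma,\tau}(\alpha)-v_i^{\sigma',\tau'}(\alpha)$ at $\alpha=1$. The only structural difference is that you argue by contradiction ($d$-sensitive but not Blackwell optimal $\Rightarrow$ excessive multiplicity of $1$ as a root of $\Delta$), whereas the paper argues directly: BEK caps the vanishing order of \emph{every} unilateral-deviation difference at $\bar{d}^{\de}(n,W)$, writing it as $(1-\alpha)^k Q(\alpha)$ with $k\leq \bar{d}^{\de}(n,W)$ and $Q(1)\neq 0$; then $\bar{d}^{\de}(n,W)$-sensitive optimality forces $Q(1)>0$, which immediately yields the sensitivity inequalities \eqref{Eq:Def-d-sensitive} at \emph{every} order $d$, hence Blackwell optimality. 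The direct route buys you something concrete: it bypasses entirely the step you flag as your main obstacle.

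That step is also where your sketch goes astray. The vanishing of the low-order Laurent coefficients does not come from ``complementary inequalities coming from optimality against the other player,'' nor from comparing $(\sigma^*,\tau^*)$ with ``a pair that agrees with it in the discounted game just below $1$'' --- a difference of values between two distinct strategy pairs is not a unilateral-deviation difference, so the $d$-sensitive optimality conditions say nothing about it, and that version of the argument would stall. The correct mechanism is simpler: failure of Blackwell optimality, together with the finiteness of stationary deterministic strategies and the fact that rational functions have constant sign near $1$, produces a \emph{single unilateral} deviation (say $\tau'$ for $\Max$) and a state $i$ with $v_i^{\sigma^*,\tau^*}(\alpha)-v_i^{\sigma^*,\tau'}(\alpha)<0$ on some interval $]\beta,1[$, so the leading Laurent coefficient of this difference is strictly negative; the one-sided inequality from $d$-sensitive optimality applied to this \emph{same} deviation then forces the leading order to exceed $d$, and the coefficients at orders $-1,\dots,d$ vanish automatically because they precede the leading one. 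With that repair (and noting that clearing the factor $(1-\alpha)^2$ in \eqref{eq:property_Delta} then gives multiplicity at least $d+3$, not ``roughly $d+2$'' --- which is what makes the contradiction with the BEK bound strict at $d=\bar{d}^{\de}(n,W)$), your contrapositive argument closes and is equivalent to the paper's.
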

Note that $\bar{d}^{\de}(n,W) = O\left(\sqrt{n(1+\log(W)}\right)$, which may be much smaller than $n-2$ (proved by~\cite{Veinott1969} for MDPs) in some regimes where $\log(W) = o(n)$. To show that $d=\bar{d}^{\de}(n,W)$ suffices for Blackwell optimality, we show that if a pair of strategies is $\bar{d}^{\de}(n,W)$-sensitive optimal, it is also $d$-sensitive optimal for all $d \geq \bar{d}^{\de}(n,W)$, so that it is $d$-sensitive optimal for $d=-1,0,\dots$, therefore it is Blackwell optimal. A key ingredient in our proof is a bound on the multiplicity of $1$ as a root of a polynomial as a function of its degree and the size of its coefficients~\cite{BEK1999}, used to bound the multiplicity of $1$ as a root of $\Delta(\alpha)$. \todo{should we explain why the multiplicity at $1$ matters?}
  \Cref{th:bound on d} follows directly from the bound on $\alphad$ of \Cref{th:bound $\sens$-sensitive discount factor - deterministic} by choosing $d= \bar{d}^{\de}(n,W)$. We refer to Table \ref{tab:bound deterministic SGs} for comparisons between the bounds obtained in this section, which improve by $\Omega(n)$ (for $-\log(1-\alphabw)$) compared to previous works.
\subsection{Separation based on Mahler measures}\label{sec:determ mahler measures}

We now present a bound on $\alphabw$ based on the {\em Mahler measure} of algebraic numbers~\cite{lehmer1933factorization,mahler1962some}.
\begin{definition}
%
The Mahler measure $M(P) \in \R$ of a polynomial $P$ is given by $M(P) := a  \prod_{i=1}^{d} \max \{1,|z_j|\}$ if $P$ factorizes over the complex numbers as $P=a \prod_{i=1}^{d}(x-z_{i})$. The Mahler measure $M(z) \in \R$ of an algebraic number $z$ is the Mahler measure of its minimal polynomial.
\end{definition}
Mahler measures have applications in areas like polynomial factorization, Diophantine approximation, and knot theory, see~\cite{smyth2008mahler}.
Mignotte and Waldschmidt~\cite{mignotte1994algebraic}
and Dubickas~\cite{Dubickas1995} use them to separate algebraic numbers and $1$. 
 The following is Theorem~1 of~\cite{Dubickas1995}.  
 \begin{theorem}
 \label{th:mahler}
 Let $\epsilon>0$. 
 There exists a constant $D_\epsilon \in \N$ such that any algebraic number $z$ of degree $d > D_\epsilon$ which is not a root of unity 
 satisfies:
  \begin{equation}\label{dubickas}
 |z -1 | > e^{-(\pi/4 + \epsilon) \sqrt{d \log d  \log M(z)}}.
 \end{equation}
 \end{theorem}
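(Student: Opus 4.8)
The plan is to turn the separation estimate into a lower bound for the absolute value of a nonzero algebraic integer, and then to beat the trivial such bound by an auxiliary-polynomial construction. Write the minimal polynomial of $z$ as $P(x)=a_d\prod_{i=1}^{d}(x-z_i)$ with $a_d\in\Z$ and $z=z_1$, so that $M(z)=|a_d|\prod_{i=1}^{d}\max\{1,|z_i|\}$. Since $z$ is not a root of unity and $d>1$, the rational number $1$ is not a conjugate of $z$, hence $P(1)=a_d\prod_{i=1}^{d}(1-z_i)$ is a nonzero rational integer and $|P(1)|\ge 1$. Isolating the factor $i=1$ and bounding $|1-z_i|\le 2\max\{1,|z_i|\}$ for $i\ge 2$ yields the trivial separation $|z-1|\ge 2^{-(d-1)}M(z)^{-1}$, i.e. $-\log|z-1|\le (d-1)\log 2+\log M(z)$. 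This is linear in $d$, whereas the target is the subexponential $\tfrac{\pi}{4}\sqrt{d\log d\,\log M(z)}$, so the entire difficulty is to remove the factor $2^{d}$.

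Next I would explain why the loss is genuine and how to avoid it. Replacing $P$ by a single integer polynomial $\Phi$ that vanishes to high order at $1$ does not help if $\Phi$ is controlled only through its coefficients: a Liouville bound $\prod_{i=1}^{d}|\Phi(z_i)|\ge|a_d|^{-\deg\Phi}$ combined with the trivial per-conjugate estimate $|\Phi(z_i)|\le(\deg\Phi)\,\|\Phi\|\,\max\{1,|z_i|\}^{\deg\Phi}$ again loses one height factor \emph{per conjugate}, reinstating a linear-in-$d$ term. The correct device is to construct $\Phi\in\Z[x]$ that vanishes to order $T$ at $1$ \emph{and} has small supremum norm on a disk $D_\rho=\{\,|x-1|\le\rho\,\}$, using integer-transfinite-diameter (integer-Chebyshev) techniques rather than plain Siegel coefficient bounds. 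Such a $\Phi$ satisfies $\|\Phi\|_{D_\rho}\lesssim\rho^{T}$ up to capacity factors, so if $|z-1|\le\rho$ then $|\Phi(z)|$ is extremely small, while $\prod_{i\ge 2}|\Phi(z_i)|$ is controlled through the supremum norm on regions holding the conjugates together with $M(z)$ for the far ones.

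Then I would confront the two bounds. The Liouville lower bound $|\Phi(z)|\ge\big(\prod_{i\ge2}|\Phi(z_i)|\big)^{-1}|a_d|^{-\deg\Phi}$ against the capacity upper bound $|\Phi(z)|\le\|\Phi\|_{D_\rho}$ gives an inequality tying $-\log|z-1|$ to the three free parameters: the degree of $\Phi$, the vanishing order $T$, and the radius $\rho$. Optimizing these (balancing the vanishing order against the degree and the capacity of $D_\rho$) collapses the linear-in-$d$ contribution and produces the square-root shape $-\log|z-1|\lesssim\sqrt{d\log d\,\log M(z)}$. The hypothesis that $z$ is not a root of unity is used twice: to guarantee $P(1)\neq0$, and, in a zero estimate, to guarantee that the auxiliary value $\Phi(z)$ can be taken nonzero. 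It is genuinely needed, since cyclotomic $z=e^{2\pi i/N}$ have $\log M(z)=0$ yet $|z-1|\asymp 1/N\to0$, so they would violate the claimed bound.

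The main obstacle is not the overall square-root shape, which the balancing above yields fairly robustly, but the sharp constant $\pi/4$. Pinning it down requires the optimal integer-transfinite-diameter / Jensen-type estimate for the auxiliary polynomial on $D_\rho$ --- a capacity-type quantity for the relevant set is precisely what supplies the factor $\tfrac{\pi}{4}$ --- together with the exact asymptotics of the ensuing optimization; this is the refinement that Dubickas carries out over the earlier Mignotte--Waldschmidt bound. Since in the present paper this result serves only as an external input to the Mahler-measure separation of \Cref{sec:determ mahler measures}, the plan is to invoke \Cref{th:mahler} as a black box: reproducing the optimal constant is a self-contained piece of transcendence theory rather than part of our argument.
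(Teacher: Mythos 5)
Your proposal ends exactly where the paper does: the paper offers no proof of this statement at all, presenting it verbatim as Theorem~1 of Dubickas~\cite{Dubickas1995}, so invoking it as a black-box external input is precisely the paper's approach. Your preliminary sketch (the trivial $2^{-(d-1)}M(z)^{-1}$ bound, the need for auxiliary polynomials vanishing to high order at $1$, and the cyclotomic counterexample showing the root-of-unity hypothesis is necessary) is accurate motivation, but none of it is required here.
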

To apply \Cref{th:mahler}, we need an estimate of $M(z)$, where $z$ is a root of $\Delta$. 
Since $\Delta(z)=0$, the minimal polynomial of $z$ divides $\Delta$, and so $M(z) \leq M(\Delta)$. 
A classical result of Landau~\cite{Landau1905} states that for any complex polynomial $P =\sum^{d}_{k=0} c_k x^k$ we have $M(P) \leq \sqrt{\sum^{d}_{k=0} |c_k|^2}$.
Applying this to the polynomial $\Delta$ (whose coefficients and degree are bounded in \Cref{lem:degree coeff Delta - deterministic}), we arrive at the following upper bound for $\alphabw$. 
\begin{theorem}\label{th-dubickas} 
If the game $\Gamma$ satisfies \Cref{assumption} and is deterministic, for each $\epsilon > 0$ there exists a constant $a_\epsilon>0$ such that the Blackwell threshold $\alphabw$ satisfies $\alphabw \leq \alphama^{\sf det}$, where 
\[  - \log (1-\alphama^{\sf det}) = \max \Big\{
(\frac{\pi}{4} + \epsilon)\sqrt{(2n-1) \log(2n-1) \log ( 12 \sqrt{2} \sqrt{n} W)}
   ,  a_\epsilon +\log( 12 \sqrt{2} \sqrt{n}W) \Big\} \; .
\]
\end{theorem}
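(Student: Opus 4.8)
The plan is to follow the strategy sketched in the overview: reduce the bound on $\alphabw$ to a root-separation problem and then invoke Dubickas' theorem (\Cref{th:mahler}). Recall that it suffices to exhibit $\bar\alpha < 1$ such that none of the finitely many difference functions $\alpha \mapsto v_i^{\sigma,\tau}(\alpha) - v_i^{\sigma',\tau'}(\alpha)$ vanishes on $]\bar\alpha, 1[$; then $\alphabw \le \bar\alpha$. Since the factors $1-\alpha^q$ and $1-\alpha^{q'}$ are strictly positive on $]0,1[$, every zero of such a difference function in $]0,1[$ is a real zero of the corresponding polynomial $\Delta$ in $]0,1[$. Hence it is enough to lower bound $1-z$ uniformly over all real roots $z \in\, ]0,1[$ of all such polynomials $\Delta$. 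Using \Cref{lem:degree coeff Delta - deterministic} together with Landau's inequality $M(\Delta) \le \sqrt{\sum_k |a_k|^2}$, I would first record the estimates $M(\Delta) \le 12 W \sqrt{K+1} \le 12\sqrt 2\, \sqrt n\, W$ and $\deg \Delta = K \le 2n-1$.

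Next, fix such a root $z$ and let $p$ be its minimal polynomial, taken with coprime integer coefficients, of degree $d \le \deg \Delta \le 2n-1$. Since $p \mid \Delta$ we have $M(z) = M(p) \le M(\Delta) \le 12\sqrt2\,\sqrt n\, W$. Because $z \in\, ]0,1[$ has modulus strictly less than $1$, it is not a root of unity, so the hypotheses of \Cref{th:mahler} are met whenever $d > D_\epsilon$. In that regime, using the monotonicity of $d \mapsto d\log d$ on $d\ge 1$ and the bound on $M(z)$, Dubickas' inequality yields
\[
|z - 1| > e^{-(\pi/4+\epsilon)\sqrt{d \log d\, \log M(z)}} \ge e^{-(\pi/4+\epsilon)\sqrt{(2n-1)\log(2n-1)\,\log(12\sqrt2\,\sqrt n\, W)}},
\]
which is exactly the first term inside the maximum.

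It remains to treat the finitely many exceptional degrees $d \le D_\epsilon$, the only place where an extra elementary argument is needed. Here I would use that $p$ has integer coefficients and $z \ne 1$, so $p(1)$ is a nonzero integer and $|p(1)| \ge 1$. Writing $p = a\prod_i (x - z_i)$ with $z_{j_0} = z$, and bounding $|1 - z_i| \le 1 + |z_i| \le 2\max\{1,|z_i|\}$ for the conjugates, one obtains $1 \le |p(1)| \le |1-z|\,2^{d-1} M(z)$ (using $\max\{1,|z|\}=1$), hence the clean bound $|1-z| \ge 1/(2^{d-1} M(z))$. For $d \le D_\epsilon$ this gives $-\log(1-z) \le (D_\epsilon-1)\log 2 + \log(12\sqrt2\,\sqrt n\, W)$, i.e.\ the second term with $a_\epsilon := (D_\epsilon - 1)\log 2$. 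Taking the worse of the two cases over all roots, all pairs of stationary deterministic strategies, and all states produces a uniform lower bound on $1-z$, so that $\bar\alpha \le \alphama^{\sf det}$ and therefore $\alphabw \le \alphama^{\sf det}$.

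The main obstacle is conceptual rather than computational: Dubickas' theorem holds only for algebraic numbers of degree exceeding the non-effective threshold $D_\epsilon$, so the argument genuinely splits into two regimes and one must supply a separate, degree-robust separation bound for the bounded-degree case. The elementary integer-polynomial estimate above does exactly this and is precisely what accounts for the second term $a_\epsilon + \log(12\sqrt2\,\sqrt n\, W)$; the other small but essential check is observing that $z \in\, ]0,1[$ automatically excludes the root-of-unity case, so that \Cref{th:mahler} applies without further caveats.
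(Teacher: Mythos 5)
Your proof is correct, and its skeleton coincides with the paper's: reduce the bound on $\alphabw$ to separating the real roots of the polynomials $\Delta$ from $1$, bound $M(\Delta) \le 12\sqrt{2}\sqrt{n}\,W$ via \Cref{lem:degree coeff Delta - deterministic} and Landau's inequality, and split according to whether the degree $d$ of the root exceeds $D_\epsilon$, invoking \Cref{th:mahler} in the high-degree regime. The genuine difference lies in the exceptional regime $d \le D_\epsilon$: the paper applies \Cref{lemmaTechnical2} (the Lagrange bound after the substitution $x = 1-y$) to the minimal polynomial $P$, combined with the height--Mahler bound $H(P) \le 2^d M(P)$, which yields $|z-1| \ge \left(2^{D_\epsilon+1}\binom{D_\epsilon+1}{\lceil D_\epsilon/2\rceil}M(\Delta)\right)^{-1}$ and the constant $a_\epsilon = \log\left(2^{D_\epsilon+1}\binom{D_\epsilon+1}{\lceil D_\epsilon/2\rceil}\right)$; you instead use the elementary resultant-type estimate $1 \le |p(1)| \le |1-z|\,2^{d-1}M(z)$, valid because $p(1)$ is a nonzero integer (irreducibility of $p$ and $z \ne 1$) and $\max\{1,|z|\}=1$, giving the cleaner bound $|1-z| \ge 1/(2^{d-1}M(z))$ and the slightly better constant $a_\epsilon = (D_\epsilon-1)\log 2$. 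Since the theorem only asserts existence of some $a_\epsilon>0$, either choice suffices. A minor additional merit of your write-up: by restricting attention to roots $z \in\, ]0,1[$ you make the non-root-of-unity hypothesis of \Cref{th:mahler} automatic, whereas the paper considers arbitrary real roots $z \neq 1$ and implicitly relies on the fact that the only real roots of unity, $\pm 1$, have degree $1$ and hence fall into the low-degree case. Both routes are sound and prove the same statement; yours uses marginally less machinery in the bounded-degree branch.
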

 The maximum in \Cref{th-dubickas} is necessary because for a fixed $\epsilon>0$, the bound in \Cref{th:mahler} only applies to algebraic numbers with a degree greater than $D_{\epsilon}$. Thus, if a root of $\Delta$ has degree greater than $D_{\epsilon}$, we can apply \Cref{th:mahler}, otherwise we apply the Lagrange bound.
\begin{remark}\label{rk-regimes}
  When $W$ is ``small'' ($W=n^{O(1)}$), we obtain from the definition of $\alphama^{\sf det}$ that 
  $-\log (1-\alphama^{\sf det}) = O( \sqrt{n} \log n)$. When $W$ is ``large'' ($W=\exp(\Omega(n \log^2 n))$), we have $ -\log (1-\alphama^{\sf det}) = O(\log W + \frac{\log n}{2})$, the $(\log n)/2$ term being of a lower order.
    There is an intermediate regime in which
    $n^{\Omega(1)}\leq W \leq \exp(O(n \log^2 n))$ and for which $
    -\log (1-\alphama^{\sf det}) = O( \sqrt{n\log n \log W})$. 

Note that the bound on $\alphabw$ from~\Cref{th-dubickas} is
incomparable to the one of~\Cref{coroZP96}:
depending on the value of $W$, none of the bounds dominates the other. Taking the best (smallest) of the two bounds, we arrive at~\Cref{table-regimes} in which we bound the Blackwell threshold depending on the different regimes for $\log(W)$ as a function of the number of states $n$ (up to terms of lower order). \todo{We need to update this remark with the new bound on $\alphabw$ (we don't have the max anymore). We may not need the notation $\alphama^{\sf det}$ anymore.}
\end{remark}

\begin{remark}
Using the preprocessing algorithm of Frank and Tardos~\cite{franktardos}, for any deterministic mean-payoff game with $n$ states of Min and $m$ states of Max, and arbitrary rational weights, we can construct a mean-payoff game with integer weights such that $W=2^{O(nm)^3}$ and which has the same optimal strategies.
For $n=m$, this estimate of $W$ is larger than the separation
order
$W=\exp(\Theta( n\log^2 n))$ between the ``intermediate''
and ``high $W$'' regimes in~\Cref{table-regimes}. So, all the regimes
in~\Cref{rk-regimes} and~\Cref{table-regimes}
are relevant.
\todo{does this remark need some updates}
\end{remark}
\todo{next table to be updated too (given new results with Mahler; also, can we find a way to indicate which bound dominates the other in each regime)}
  \begin{table}[htp]
        \caption{Bound for the Blackwell threshold $\alphabw$ of deterministic perfect-information SGs in different regimes of $\log(W)$ with respect to the number of states $n$.
        }
      \label{table-regimes}
      \begin{center}
    \setlength{\tabcolsep}{-4.5pt}
    \scriptsize
    \begin{tabular}{r@{\extracolsep{3pt}}rcccccccccl}
    \toprule
$      \log W$ :&
            &$\Theta(\log n)$&&$\Theta(\frac{n}{\log n})$&&
$\Theta(n)$&&$\Theta(n\log n)$&&$\Theta(n\log^2n)$&\\
      $      -\log (1-\alphabw)$: &$ \sqrt{n}\log n$& $ \mid$ & $ \sqrt{n\log n\log W}$& $\mid$ &$n\log 2$ & $\mid$ &$ \log W$ &$\mid$ &$ \sqrt{n\log n\log W}$ &$\mid$ &$  \log W$ \\
      \bottomrule
    \end{tabular}
      \end{center}

    \end{table}
\section{Results for perfect-information stochastic games}\label{SectionStochastic}
We now focus on the general case of perfect-information SGs. 
We start by studying the structure of the discounted value function $v^{\sigma,\tau}_{i}(\alpha)$ associated with a pair of stationary strategies $(\sigma,\tau)$ and a state $i$. It is well-known that this function is rational, i.e., it is the ratio of two polynomials (e.g., Lemma 10.1.3 in~\cite{puterman2014markov}). We define $\Delta(\alpha)$ as the numerator appearing in $v^{\sigma,\tau}_{i}(\alpha) - v^{\sigma',\tau'}_{i}(\alpha)$ (we refer to \Cref{app:proof stochastic} for the precise definition of $\Delta$).
Thus, to study the zeros of $\alpha \mapsto v^{\sigma,\tau}_{i}(\alpha) - v^{\sigma',\tau'}_{i}(\alpha)$ we can focus on studying the roots of $\Delta$. 
We first analyze the degree and the size of the coefficients of $\Delta$.
\begin{proposition}\label{prop:bound on Delta - stochastic}
Under Assumption \ref{assumption}, the polynomial $\Delta$ can be written as $\Delta(\alpha) = \sum_{k=0}^{2n-1} c_k \alpha^k$, where $|c_k| \leq 2 n W M^{2n-1} {2n - 1 \choose k}$ for all $k \in \{0,\dots , 2n-1\}$.
\end{proposition}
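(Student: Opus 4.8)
The plan is to write $v^{\sigma,\tau}_i(\alpha)$ as a ratio of two polynomials via the adjugate formula, to clear the common denominator $M$ of the transition probabilities so as to work with integer matrices, and then to bound the coefficients of $\Delta$ by combining row-multilinearity of the determinant with Hadamard's inequality. First I would fix the pair $(\sigma,\tau)$ and write the induced transition matrix $P$ together with the integer reward vector $r$ (so $\|r\|_\infty \le W$); solving the linear system that defines the discounted value and applying Cramer's rule gives $v^{\sigma,\tau}_i(\alpha) = [\adj(I-\alpha P)\,r]_i / \det(I-\alpha P)$ for a suitable reward vector. Setting $\tilde P := M P$, which has nonnegative integer entries and all row sums equal to $M$, and $A := M I - \alpha \tilde P = M(I-\alpha P)$, the homogeneity of the adjugate and the determinant ($\adj(cB)=c^{n-1}\adj(B)$, $\det(cB)=c^{n}\det(B)$) yield $v^{\sigma,\tau}_i(\alpha) = M\,[\adj(A)\,r]_i / \det(A)$. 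Subtracting the analogous expression for $(\sigma',\tau')$ (with $A'$, $\tilde P'$, $r'$) and discarding the common nonzero constant factor $M$, the numerator of $v^{\sigma,\tau}_i-v^{\sigma',\tau'}_i$ is
\[
\Delta(\alpha) = [\adj(A)\,r]_i\,\det(A') - [\adj(A')\,r']_i\,\det(A).
\]
Each minor occurring in $\adj(A)$ is an $(n-1)\times(n-1)$ determinant, hence a polynomial of degree $\le n-1$, while $\det(A')$ has degree $\le n$; thus $\deg\Delta \le 2n-1$, which already gives $K=2n-1$.

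The heart of the argument is a uniform coefficient bound: for any $k\times k$ determinant built from rows of $A$ (in particular $\det A$ with $k=n$, and every minor of $\adj(A)$ with $k=n-1$), I would show that the coefficient of $\alpha^m$ has absolute value at most $\binom{k}{m}M^{k}$. To prove this, expand by row-multilinearity in $A=MI-\alpha\tilde P$: the coefficient of $\alpha^m$ is $(-1)^m$ times the sum, over the $\binom{k}{m}$ choices of $m$ rows, of the determinant whose chosen rows are taken from $\tilde P$ and whose remaining rows are taken from $M I$. Every row of $M I$ has Euclidean norm $M$, and every row of $\tilde P$ has $\ell_1$-norm (hence $\ell_2$-norm) at most $M$ since its row sum is $M$; Hadamard's inequality then bounds each such determinant by $M^{k}$, giving the claimed $\binom{k}{m}M^{k}$.

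Finally I would assemble the pieces. Applying the estimate with $k=n-1$ and summing over the $n$ reward-weighted cofactors gives $|[\alpha^m]\,[\adj(A)\,r]_i| \le nW\binom{n-1}{m}M^{n-1}$, while $k=n$ gives $|[\alpha^m]\det(A')| \le \binom{n}{m}M^{n}$. Combining the two factors through the convolution formula for product coefficients and Vandermonde's identity,
\[
\bigl|[\alpha^k]\bigl([\adj(A)\,r]_i\,\det(A')\bigr)\bigr|
\le nW M^{2n-1}\sum_{m=0}^{k}\binom{n-1}{m}\binom{n}{k-m}
= nW M^{2n-1}\binom{2n-1}{k}.
\]
The symmetric term $[\adj(A')\,r']_i\det(A)$ satisfies the same bound, so $|c_k|\le 2nW M^{2n-1}\binom{2n-1}{k}$, as claimed.

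The step I expect to be the main obstacle—and where looser analyses lose factors—is this uniform coefficient estimate. A naive column-wise splitting mixes the $MI$ and $\tilde P$ contributions within a single row, forcing an $\ell_1$-bound of $2M$ per row and producing a spurious $2^{n}$ factor. Splitting by \emph{rows} instead keeps every row purely of norm $\le M$, and it is precisely this, together with Vandermonde's identity, that yields the clean binomial coefficient $\binom{2n-1}{k}$ rather than a $2^{\Theta(n)}$ blow-up.
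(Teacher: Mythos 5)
Your proposal is correct and takes essentially the same route as the paper: the same cofactor/adjugate expression for $\Delta$, the same per-factor coefficient bounds ($\binom{n}{m}M^{n}$ for $\det(MI-\alpha Q)$ and $nW\binom{n-1}{m}M^{n-1}$ for the reward-weighted cofactor sum), and the same Vandermonde convolution giving $2nWM^{2n-1}\binom{2n-1}{k}$. The only cosmetic difference is that the paper proves the key coefficient bound via compound matrices and their traces (bounding each $k\times k$ minor of $Q$ by $M^{k}$ using the row sums), which is the same expansion as your row-multilinearity plus Hadamard argument.
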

\Cref{prop:bound on Delta - stochastic} improves upon the corresponding results of \cite{grand2023reducing} for MDPs, which show that $|c_k| \leq 2 n W M^{2n} 4^n$ across all $k\in \{0,\ldots , 2n-1\}$. 

{\bf Results based on the Lagrange bound.}
Applying the Lagrange bound to the polynomial $\epsilon \mapsto \Delta(1-\epsilon)$, we prove the second part of \Cref{ThLagrange}.  We start with the bound on $\alphabw$.

\begin{corollary}\label{coro:bound blackwell factor - stochastic}
Under \Cref{assumption}, we have
$\displaystyle
\alphabw \leq 1- \frac{2^{\lfloor \frac{2}{3} n \rfloor-2}}{nW (2M)^{2n-1} {2n-1 \choose \lfloor \frac{2}{3} n \rfloor}} \;.$
\end{corollary}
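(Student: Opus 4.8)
The plan is to bound how close to $1$ a real root of $\Delta$ can lie, exactly along the lines of the overview: once $\bar\alpha$ exceeds the largest root in $]0,1[$ of every polynomial $\Delta$ arising from \Cref{prop:bound on Delta - stochastic}, all differences of discounted value functions keep a constant sign on $]\bar\alpha,1[$, so it suffices to produce a uniform lower bound on $1-z$ valid for every real root $z\in\,]0,1[$ of every such $\Delta$. To convert ``distance from $1$'' into ``distance from $0$'', I would substitute $\alpha=1-\epsilon$ and study $Q(\epsilon):=\Delta(1-\epsilon)$, whose nonzero roots near $0$ correspond to roots of $\Delta$ near $1$; the Lagrange bound of \Cref{th:Lagrange}, which separates nonzero roots from $0$, then applies directly to $Q$.

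The first computational step is to express the coefficients of $Q(\epsilon)=\sum_i b_i\epsilon^i$ in terms of those of $\Delta(\alpha)=\sum_{k=0}^{2n-1}c_k\alpha^k$. Expanding $(1-\epsilon)^k$ gives $b_i=(-1)^i\sum_{k=i}^{2n-1}\binom{k}{i}c_k$. Two properties of the $b_i$ drive the argument. First, $\Delta$ has integer coefficients by construction (clearing the common denominator $M$ turns $I-\alpha P$ into an integer-coefficient matrix polynomial, and cofactors and determinants preserve integrality), so every $b_i$ is an integer; hence the lowest-order nonzero coefficient $b_{j_0}$, where $j_0$ is the multiplicity of $1$ as a root of $\Delta$, satisfies $|b_{j_0}|\geq 1$. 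Second, combining $|c_k|\leq 2nWM^{2n-1}\binom{2n-1}{k}$ from \Cref{prop:bound on Delta - stochastic} with the subset-of-a-subset identity $\binom{2n-1}{k}\binom{k}{i}=\binom{2n-1}{i}\binom{2n-1-i}{k-i}$ lets me telescope the inner sum, $\sum_{k=i}^{2n-1}\binom{2n-1-i}{k-i}=2^{2n-1-i}$, and obtain the clean bound $|b_i|\leq\beta_i:=2nWM^{2n-1}\binom{2n-1}{i}2^{2n-1-i}$.

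With these ingredients, \Cref{th:Lagrange} applied to $Q$ yields, for any nonzero root $\epsilon$, the chain $|\epsilon|\geq\tfrac12\min_{i>j_0,\,b_i\neq 0}(|b_{j_0}|/|b_i|)^{1/(i-j_0)}\geq\tfrac12\min_{i>j_0}\beta_i^{-1/(i-j_0)}$. The key simplification is that each $\beta_i\geq 1$, so $\beta_i^{-1/(i-j_0)}\geq\beta_i^{-1}\geq(\max_i\beta_i)^{-1}$ for every admissible $i$; this discards the dependence on the unknown multiplicity $j_0$ and reduces the problem to maximizing $\beta_i$ over $i\in\{0,\dots,2n-1\}$. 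A ratio test gives $\beta_{i+1}/\beta_i=\tfrac12\cdot\frac{2n-1-i}{i+1}$, which is $\geq 1$ precisely when $i\leq\frac{2n}{3}-1$, so the maximum is attained at $i^\star=\lfloor\tfrac23 n\rfloor$. Substituting $\beta_{i^\star}$ into $1-\alphabw\geq\tfrac{1}{2\beta_{i^\star}}$ and regrouping the powers of $2$ produces exactly the claimed bound.

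The main obstacle I anticipate is not any single inequality but the bookkeeping in the Lagrange step: one must justify replacing the $j_0$-dependent minimum by the global maximizer of $\beta_i$, which hinges precisely on $\beta_i\geq 1$ together with the observation that the worst case in Lagrange corresponds to an adjacent pair $i=j_0+1$. A secondary point requiring care is the integrality of the $c_k$ (and hence of the $b_i$): the lower bound $|b_{j_0}|\geq 1$ is exactly what prevents the separation estimate from collapsing, so it must be tracked through the cofactor/determinant construction of $\Delta$.
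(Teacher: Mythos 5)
Your proof is correct and yields exactly the stated bound, and its skeleton is the paper's: pass to $\Delta(1-\epsilon)$, use integrality of $\Delta$ to get $|b_{j_0}|\geq 1$ at the vanishing order $j_0$, bound $|b_i|\leq\beta_i$ via a Vandermonde-type identity (your $\beta_i$ coincides with the paper's bound \eqref{boundg} on its coefficients $g_i$, namely $nW(2M)^{2n-1}2^{1-i}\binom{2n-1}{i}$), apply \Cref{th:Lagrange}, and optimize a binomial expression at index $\lfloor\tfrac{2}{3}n\rfloor$. The genuine difference is how the Lagrange exponent $1/(i-j_0)$ is treated. The paper (\Cref{prop_threshold_sto}) keeps the dependence on $j_0$: it controls the binomial factor under the exponent with the inductive inequality \eqref{eq:aux}, $\binom{2n-1}{i}\leq\binom{2n-1}{j+1}^{i-j}$, obtaining a root-free interval $\bigl]0,\tfrac{2^{j-1}}{nW(2M)^{2n-1}\binom{2n-1}{j+1}}\bigr[$ for each possible vanishing order $j$, and the corollary is then proved by minimizing this threshold over $j$ (convexity, minimum at $j=\lfloor\tfrac{2}{3}n-1\rfloor$). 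You instead collapse the exponent in one stroke, $\beta_i^{-1/(i-j_0)}\geq\beta_i^{-1}\geq(\max_i\beta_i)^{-1}$, legitimate because every $\beta_i\geq 1$, which eliminates $j_0$ immediately and turns the optimization into maximizing $\beta_i$ by a ratio test. The two routes land on the identical constant because the paper's $j$-th threshold equals $\tfrac{1}{2}\beta_{j+1}^{-1}$ (its inner minimum over $i$ is attained at the adjacent index $i=j+1$), so minimizing it over $j$ is the same as your $\tfrac{1}{2}(\max_i\beta_i)^{-1}$. Your shortcut buys economy: no \eqref{eq:aux} and no convexity-in-$j$ argument. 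What it gives up is the multiplicity-dependent threshold of \Cref{prop_threshold_sto}, which the paper reuses — with $j$ pinned to $d'+2$ by $d$-sensitive optimality — to prove the unichain bound on $\alphad$ in \Cref{th:bound $\sens$-sensitive discount factor - stochastic}; your coarser estimate cannot be recycled for that purpose. One small caveat: your closing remark that the argument hinges on ``the worst case in Lagrange corresponding to an adjacent pair $i=j_0+1$'' describes the paper's bookkeeping rather than yours, since once the exponent is dropped no adjacency observation is needed in your version.
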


We now state the bound on $\alphad$ for non-deterministic SGs. Note that in the next result, we assume that the SG is unichain, i.e., that the Markov chain induced by any pair of stationary strategies is unichain. \todo{explain better why we need the unichain assumption. RK: I added the next sentence with this aim} This assumption is necessary to precisely connect the coefficients of $\Delta(\alpha)$ with the coefficients of the Laurent series expansion of $v^{\sigma,\tau}_{i}(\alpha) - v^{\sigma',\tau'}_{i}(\alpha)$.
\begin{corollary}\label{th:bound $\sens$-sensitive discount factor - stochastic}
If $\Gamma$ satisfies \Cref{assumption} and is unichain, the \sens-sensitive threshold $\alphad$ satisfies 
$\displaystyle
\alphad \leq 1-\frac{2^{\min \{ \sens+2 , \lfloor \frac{2}{3} n - 1 \rfloor \} - 1}}{nW (2M)^{2n-1} {2n-1 \choose \min \{ \sens+2 , \lfloor \frac{2}{3} n - 1 \rfloor \} + 1}} \;.$
\end{corollary}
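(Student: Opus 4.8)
The plan is to mirror the proof of \Cref{coro:bound blackwell factor - stochastic}, but to exploit the extra structure that $\sens$-sensitive optimality imposes on the polynomial $\Delta$. Recall from \Cref{prop:bound on Delta - stochastic} that we can write $\Delta(\alpha) = \sum_{k=0}^{2n-1} c_k \alpha^k$ with $|c_k| \leq 2nW M^{2n-1}\binom{2n-1}{k}$. The key observation is that $\sens$-sensitive optimality is equivalent to a statement about the \emph{low-order} coefficients of the Laurent series of $v^{\sigma,\tau}_i(\alpha) - v^{\sigma',\tau'}_i(\alpha)$ at $\alpha = 1$. Writing $\epsilon = 1-\alpha$, the difference of value functions has an expansion whose first $\sens+1$ coefficients (from order $-1$ up to order $\sens$) are forced to vanish whenever the pair is $\sens$-sensitive optimal on the ``wrong'' side but not yet distinguished. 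Thus, rather than separating a generic root of $\Delta$ from $1$, I want to argue that $1$ is a root of $\Delta(1-\epsilon)$ (viewed in the variable $\epsilon$, i.e.\ that $\epsilon=0$ is a root) of multiplicity at least $\sens+2$, so that the \emph{lowest} nonzero coefficient in the $\epsilon$-expansion of $\Delta(1-\epsilon)$ occurs at index $j = \sens+2$.

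First I would make precise the connection between $\sens$-sensitive optimality and the coefficients of $\Delta(1-\epsilon)$; this is exactly where the unichain assumption enters. Under the unichain hypothesis the value functions have poles of order exactly $-1$ (or are analytic), and the Laurent coefficients of $v^{\sigma,\tau}_i - v^{\sigma',\tau'}_i$ coincide, up to a nonvanishing factor $(1-\alpha^q)(1-\alpha^{q'})$-type normalization evaluated near $\alpha=1$, with the low-order $\epsilon$-coefficients of $\Delta(1-\epsilon)$. The content here is that multiplying by these normalizing denominators, which are nonzero and analytic at $\alpha=1$ after dividing out the appropriate power of $(1-\alpha)$, shifts but does not destroy the vanishing of the first $\sens+2$ coefficients. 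I would state this as: if the two strategies are not $\sens$-sensitively distinguished up to order $\sens$, then $\Delta(1-\epsilon) = \sum_{k \geq \sens+2} a_k \epsilon^k$ with $a_{\sens+2} \neq 0$ being the first candidate nonzero coefficient. This justifies applying the Lagrange bound of \Cref{th:Lagrange} starting the sum at $j = \sens+2$ rather than at $j=0$.

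Next I would expand $\Delta(1-\epsilon) = \sum_k a_k \epsilon^k$ and bound the coefficients $a_k$ in terms of the $c_k$. Since $a_k = \sum_{\ell \geq k} c_\ell \binom{\ell}{k}(-1)^{k}$ (binomial expansion of $(1-\epsilon)^\ell$), and using $|c_\ell| \leq 2nW M^{2n-1}\binom{2n-1}{\ell}$, a Vandermonde-type identity $\sum_\ell \binom{2n-1}{\ell}\binom{\ell}{k} = \binom{2n-1}{k}2^{2n-1-k}$ yields $|a_k| \leq 2nW M^{2n-1}\binom{2n-1}{k}2^{2n-1-k}$. In particular the leading coefficient satisfies $|a_{\sens+2}| \geq 1$ (it is a nonzero integer after clearing the common denominator $M^{2n-1}$ — one must be slightly careful here that $a_{\sens+2}$ is a genuine integer once scaled, which follows from the integrality of the cofactor expansion). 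Plugging $j = \sens+2$ and the bound on $|a_i|/|a_{\sens+2}|$ into \Cref{th:Lagrange}, the worst index $i$ is governed by $2^{2n-1-i}\binom{2n-1}{i}$, and optimizing the exponent $1/(i-j)$ together with the cap $i \leq 2n-1$ produces the binomial $\binom{2n-1}{\min\{\sens+2,\lfloor \frac23 n - 1\rfloor\}+1}$ and the power of $2$ in the stated bound. Finally, translating $|\epsilon| = |1-\alpha|$ back into an upper bound on $\alphad$ via $\alphad \leq 1 - |\epsilon|_{\min}$ gives the result.

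The main obstacle I anticipate is the first step: rigorously establishing that $\sens$-sensitive optimality forces the vanishing of the first $\sens+2$ coefficients of $\Delta(1-\epsilon)$, and that the unichain assumption is exactly what makes the order-$(-1)$ pole structure clean enough for this translation. In the multichain case the pole order can be larger and the correspondence between Laurent coefficients of the value difference and the $\epsilon$-coefficients of $\Delta$ is muddied by the factor $(1-\alpha^q)(1-\alpha^{q'})$ vanishing to higher order, which is precisely why the statement is restricted to unichain games. The secondary technical point is the integrality/scaling argument guaranteeing $|a_{\sens+2}| \geq 1$ so that the ratios in the Lagrange bound are controlled; this requires tracking the common denominator $M^{2n-1}$ carefully through the change of variables, but it is routine once the structural claim is in place.
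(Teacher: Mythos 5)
Your plan follows the paper's own proof in all essentials: use the unichain assumption to translate failures of $\sens$-sensitive optimality into the vanishing of low-order coefficients of $\Delta(1-\epsilon)$, bound those coefficients via \Cref{prop:bound on Delta - stochastic} together with the identity $\sum_{k=i}^{m}\binom{m}{k}\binom{k}{i}=2^{m-i}\binom{m}{i}$, and then apply the Lagrange bound (\Cref{th:Lagrange}) starting from a shifted index. However, there is a genuine gap at the central step. You assert that the first nonzero coefficient of $\Delta(1-\epsilon)$ occurs at index exactly $j=\sens+2$. That is not what the argument yields. The statement is proved by contradiction: take $(\sigma^*,\tau^*)$ discount optimal at some $\alpha'$ above the claimed threshold and suppose it is not $\sens$-sensitive optimal. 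Then there are a state $i$, a deviation (say $\tau$), and a \emph{smallest} order $d'$ at which $\lim_{\alpha\to 1^-}(1-\alpha)^{-d'}\bigl(v_i^{\sigma^*,\tau^*}(\alpha)-v_i^{\sigma^*,\tau}(\alpha)\bigr)<0$, and this $d'$ can be any value in $\{-1,0,\dots,\sens\}$ — the deviation may already win at the mean-payoff level. The unichain factorization $\Delta(\alpha)=(1-\alpha)^2 p(\alpha)q(\alpha)\bigl(v_i^{\sigma^*,\tau^*}(\alpha)-v_i^{\sigma^*,\tau}(\alpha)\bigr)$ with $p(1)\neq 0$, $q(1)\neq 0$ then forces $g_0=\dots=g_{d'+1}=0$ and $g_{d'+2}\neq 0$, i.e.\ the first nonzero coefficient sits at $j=d'+2$, which is only known to satisfy $j\leq\sens+2$.

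This matters because the zero-free interval produced by \Cref{prop_threshold_sto} is $]0,f(j)[$ with $f(j)=\frac{2^{j-1}}{nW(2M)^{2n-1}\binom{2n-1}{j+1}}$, so one must take the worst case over all admissible $j\leq\sens+2$: since $f$ is convex in $j$ with minimum at $j=\lfloor\frac{2}{3}n-1\rfloor$, that worst case equals $f\bigl(\min\{\sens+2,\lfloor\frac{2}{3}n-1\rfloor\}\bigr)$ — \emph{this} convexity argument over the violation order is where the $\min$ in the statement comes from. Your derivation instead fixes $j=\sens+2$ and attributes the $\min$ to the inner optimization over $i$ in the Lagrange bound; but that optimization (via $\binom{2n-1}{i}\leq\binom{2n-1}{j+1}^{i-j}$) produces $f(j)$ with no $\min$ at all, so as written your argument neither covers violations occurring at orders $d'<\sens$ nor justifies the claimed formula. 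Two smaller points: (i) finish the contradiction explicitly — on the zero-free interval the difference $v_i^{\sigma^*,\tau^*}-v_i^{\sigma^*,\tau}$ has constant sign, the strict negativity of the limit at order $d'$ makes that sign negative, and $\alpha'$ lies in the interval, contradicting discount optimality at $\alpha'$; (ii) your integrality worry is resolved by construction, since $\Delta$ in \eqref{eq:def_Delta_stoch} is assembled from determinants and cofactors of $MI-\alpha Q$ with $Q=MP$ an integer matrix, so its coefficients, hence all $g_i$, are integers and $|g_j|\geq 1$ requires no rescaling.
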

{\bf Results based on the Mahler bound.}
Next, we present the results for $\alphabw$ based on the Mahler bound. The proof follows the same lines as in the deterministic case (\Cref{th-dubickas}). 
 \begin{theorem}\label{th-dubickas2} 
 If the game $\Gamma$ satisfies \Cref{assumption}, 
 for each $\epsilon > 0$ there exists a constant $a_\epsilon$ such that the Blackwell threshold $\alphabw$ satisfies
   $\alphabw \leq \alphama $,
   where $- \log (1-\alphama) =  \max \{ \beta_1 , \beta_2 \}$, $\beta_1 := ( \frac{\pi}{4} + \epsilon)\sqrt{(2n-1) \log(2n-1) \log \left( L \right) }$, $\beta_2 :=  a_\epsilon + \log \left( L \right)$, and $L := 2 n W M^{2n-1} \sqrt{{2(2n - 1) \choose 2n -1}}$.
 \end{theorem}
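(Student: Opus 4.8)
The plan is to prove \Cref{th-dubickas2} by transposing the argument of \Cref{th-dubickas} from the deterministic to the general stochastic setting, with the polynomial $\Delta$ now coming from \Cref{prop:bound on Delta - stochastic} instead of \Cref{lem:degree coeff Delta - deterministic}. As recalled in the overview, bounding $\alphabw$ reduces to separating the roots of $\Delta$ from $1$: if no root of any of the finitely many polynomials $\Delta$ (one per pair of stationary deterministic strategies and per state) lies in $]\bar\alpha,1[$, then $\alphabw\le\bar\alpha$. So it suffices to produce a quantity $L$ such that every root $z\neq 1$ of $\Delta$ satisfies $|z-1|>1-\alphama$, and this is exactly what the Mahler-measure bound \Cref{th:mahler} delivers.

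\medskip
\noindent\textbf{Key steps.} First I would estimate the Mahler measure of $\Delta$. Since any root $z$ of $\Delta$ has minimal polynomial dividing $\Delta$, we have $M(z)\le M(\Delta)$, and by Landau's inequality $M(\Delta)\le\sqrt{\sum_k |c_k|^2}$. Using the coefficient bound $|c_k|\le 2nWM^{2n-1}\binom{2n-1}{k}$ from \Cref{prop:bound on Delta - stochastic} and the identity $\sum_{k=0}^{2n-1}\binom{2n-1}{k}^2=\binom{2(2n-1)}{2n-1}$, I get $M(\Delta)\le 2nWM^{2n-1}\sqrt{\binom{2(2n-1)}{2n-1}} = L$, which is precisely the quantity $L$ in the statement. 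Second, fix $\epsilon>0$ and apply \Cref{th:mahler}: for any root $z\neq 1$ of $\Delta$ of degree $d>D_\epsilon$, since $z$ is not a root of unity (a root $z=1$ is excluded, and any other root of unity would have $|z-1|$ already bounded below, but more cleanly one argues the relevant roots are not roots of unity), we have $|z-1|>e^{-(\pi/4+\epsilon)\sqrt{d\log d\,\log M(z)}}$. Bounding $d\le 2n-1$ and $\log M(z)\le\log L$, and using monotonicity of $t\mapsto \sqrt{t\log t}$, this yields $|z-1|>e^{-\beta_1}$ with $\beta_1=(\pi/4+\epsilon)\sqrt{(2n-1)\log(2n-1)\log L}$. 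Third, handle the low-degree roots with $d\le D_\epsilon$ via the Lagrange bound (\Cref{th:Lagrange}) applied to $\epsilon\mapsto\Delta(1-\epsilon)$, as in \Cref{coro:bound blackwell factor - stochastic}; this gives a separation of the form $e^{-\beta_2}$ with $\beta_2=a_\epsilon+\log L$, where $a_\epsilon$ absorbs the constants depending on $D_\epsilon$. Taking the smaller separation (hence the larger exponent) over the two cases gives $|z-1|>e^{-\max\{\beta_1,\beta_2\}}$ for every root $z\neq 1$, so $\alphabw\le 1-e^{-\max\{\beta_1,\beta_2\}}=\alphama$.

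\medskip
\noindent\textbf{Main obstacle.} The delicate point is the same dichotomy that forced the maximum in \Cref{th-dubickas}: \Cref{th:mahler} only applies once the degree of $z$ exceeds the threshold $D_\epsilon$, which is ineffective (it depends on $\epsilon$ but is not given explicitly). For roots of small degree one cannot invoke Dubickas' theorem, so the Lagrange fallback must cover all $z$ with $d\le D_\epsilon$ uniformly, and the constant $a_\epsilon$ in $\beta_2$ must be chosen large enough to dominate the Lagrange separation for every such low-degree case. Making this clean requires checking that the Lagrange-based bound on $1-\alpha$ from \Cref{prop:bound on Delta - stochastic} is of the form $L^{-O(1)}$ up to the $\binom{2n-1}{\cdot}$ factor, so that $-\log(1-\alpha)$ is indeed $O(\log L)$ and can be folded into $a_\epsilon+\log L$. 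A secondary subtlety is ruling out (or harmlessly handling) roots of unity other than $1$: one must verify that such roots, if present, do not obstruct the sign-stability argument, which follows because they are isolated from $1$ by a gap independent of the analytic bound. Apart from these points, the argument is a routine adaptation of the deterministic proof, with the binomial coefficient bookkeeping being the only computational difference.
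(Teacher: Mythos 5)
Your overall architecture is the same as the paper's: bound $M(\Delta)\le L$ via \Cref{prop:bound on Delta - stochastic}, Landau's inequality and the Vandermonde identity; apply \Cref{th:mahler} to roots of degree $d>D_\epsilon$; and fall back on a Lagrange-type bound for $d\le D_\epsilon$. Your Mahler-measure estimate and your high-degree case are correct and match the paper. The gap is in the low-degree fallback, which is exactly the point you flagged as the main obstacle: you propose to handle $d\le D_\epsilon$ by applying the Lagrange bound to $\Delta$ itself, ``as in \Cref{coro:bound blackwell factor - stochastic}'', and to fold the result into $\beta_2=a_\epsilon+\log L$. That fold fails. With $k=\lfloor \frac{2}{3}n\rfloor$, the separation given by \Cref{prop_threshold_sto} is at best $2^{k-2}\big/\bigl(nW(2M)^{2n-1}\binom{2n-1}{k}\bigr)$, and its negative logarithm exceeds $\log L$ by
\[
\Bigl((2n-1)\log 2-\tfrac12\log\tbinom{4n-2}{2n-1}\Bigr)+\log\tbinom{2n-1}{k}-k\log 2+O(1)\;\ge\;\log\tbinom{2n-1}{k}-k\log 2+O(1)\;\ge\;k\log\tfrac32-o(n),
\]
using $\binom{4n-2}{2n-1}\le 2^{4n-2}$ and $\binom{2n-1}{k}\ge\bigl(\frac{2n-1}{k}\bigr)^k$. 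In other words, the $\sqrt{\binom{4n-2}{2n-1}}$ inside $L$ is entirely consumed cancelling the discrepancy between $(2M)^{2n-1}$ and $M^{2n-1}$, and the remaining excess is $\Theta(n)$, not a constant. Your route therefore proves only $-\log(1-\alphama)\le\max\{\beta_1,\log L+\Theta(n)\}$, which is strictly weaker than the stated theorem, where $a_\epsilon$ depends only on $\epsilon$.

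The missing idea is to exploit the low degree of the \emph{root} rather than re-using the global Lagrange bound on $\Delta$: the paper applies \Cref{lemmaTechnical2} to the minimal polynomial $P$ of the root $z$, whose degree is $d\le D_\epsilon$. Then the binomial factor is at most $\binom{D_\epsilon+1}{\lceil D_\epsilon/2\rceil}$, a constant depending only on $\epsilon$, and the height of $P$ --- which is not directly controlled by \Cref{assumption} --- is bounded through the Mahler measure: $H(P)\le 2^d M(P)\le 2^{D_\epsilon}M(\Delta)\le 2^{D_\epsilon}L$. This yields $|z-1|\ge e^{-a_\epsilon}/L$ with $a_\epsilon=\log\bigl(2^{D_\epsilon+1}\binom{D_\epsilon+1}{\lceil D_\epsilon/2\rceil}\bigr)$, i.e.\ exactly $\beta_2$. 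As a minor point, your parenthetical about roots of unity is resolved more cleanly than you suggest: it suffices to separate from $1$ the \emph{real} roots of $\Delta$ lying in $]0,1[$, and no such root is a root of unity (the only real roots of unity are $\pm 1$), so \Cref{th:mahler} applies whenever $d>D_\epsilon$; note also that your claim that roots of unity ``have $|z-1|$ already bounded below'' is false for complex roots of unity $e^{2\pi i/N}$ with $N$ large, so restricting to real roots is not just cosmetic.
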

We end this section with a remark on the {\em multiplicity approach} highlighted in Theorems~\ref{th:bound on d} and~\ref{th:bound on k}. 
  \begin{remark}\label{rmk:bound on d - stochastic}
Under \Cref{assumption}, 
using arguments similar to the ones used to prove \Cref{th:bound on k}, we get that every $\bar{d}^{\sto}(n,W)$-sensitive optimal strategy is Blackwell optimal, where 
 $\bar{d}^{\sto}(n,W)\coloneqq
  a \sqrt{(2n-1) \left(1+\log\left(2nWM^{2n-1} {{2n-1}\choose {n}}\right)\right)} $ for some $a>0$.
However, $d=n-2$ is enough~\cite{Veinott1969}, and $n-2\ll \bar{d}^{\sto}(n,W)$. This shows that while the multiplicity approach may be useful in the deterministic case (in the sense that there are some regimes of $n$ and $W$ where $\bar{d}^{\det}(n,W) < n-2$), it does not yield any improvement over existing bounds in the stochastic case, highlighting the strength of our new results based on the Lagrange and Mahler bounds.
    \end{remark}
\section{Discussion}\label{sec:discussion}
We obtain bounds on the Blackwell threshold $\alphabw$ and on the $d$-sensitive threshold $\alphad$. We improve the existing bounds on $-\log(1-\alphabw)$ by a factor $\Omega(n)$ (compared to~\cite{andersson2009complexity} for SGs and \cite{grand2023reducing} for MDPs), and we provide the first bound on $\alphad$ beyond the case $d=-1$ in deterministic games. Blackwell and mean-payoff optimal strategies have received some attention in reinforcement learning and SGs in recent years, and our bounds control the complexity of the main method to compute Blackwell and $\sens$-sensitive optimal strategies for SGs, by choosing $\alpha > \alphabw$ (or $\alpha > \alphad$) in algorithms for solving discounted SGs.
A crucial advantage of this approach is that our bounds can be combined with {\em any} progress in solving discounted SGs.
However, to the best of our knowledge, all algorithms whose complexity depends on the discount factor $\alpha$ scale as $\tilde{O}((1-\alpha)^{k})$ for some negative $k$ (e.g. $k=-1$ for strategy iteration~\cite{ye2011simplex,hansen2013strategy,akian2013policy},
or $k=-3$ for sampling-based methods~\cite{sidford2020solving}). Our bounds on $(1-\alphabw)^{-1}$ involve some terms that grow superpolynomially in $n$ (see Tables~\ref{tab:bound deterministic SGs} and~\ref{tab:bound general SGs}). Obtaining stronger bounds for $\alphabw$ is an important future research direction. \todo{we need to better sell our results}

\noindent
    {\em Open questions.}
    It would be interesting to obtain bounds for $\alphad$ for general (non-deterministic) perfect-information SGs, {\em without} the unichain assumption. Additionally, our work provides several different upper bounds for $\alphabw$, all of which may be exponentially close to $1$. It is essential to understand if this is a limitation of our line of analysis - for instance, does there exist stronger separation results between a root of a polynomial $P$ and $1$ for the specific polynomials $\Delta$ appearing in our proof? -, or if this is an inherent difficulty of the problem. Obtaining lower bounds for $\alphabw$ and $\alphad$ in all generality is an important next step.

\bibliographystyle{alpha}
\bibliography{tropical}

\newcommand{\etalchar}[1]{$^{#1}$}
\def\cprime{$'$} \def\cprime{$'$}
\begin{thebibliography}{ACTDG12}

\bibitem[ACTDG12]{akian2012policy}
M.~Akian, J.~Cochet-Terrasson, S.~Detournay, and S.~Gaubert.
\newblock Policy iteration algorithm for zero-sum multichain stochastic games
  with mean payoff and perfect information.
\newblock {\em arXiv preprint arXiv:1208.0446}, 2012.

\bibitem[AG13]{akian2013policy}
M.~Akian and S.~Gaubert.
\newblock Policy iteration for perfect information stochastic mean payoff games
  with bounded first return times is strongly polynomial.
\newblock {\em arXiv preprint arXiv:1310.4953}, 2013.

\bibitem[AM09]{andersson2009complexity}
D.~Andersson and P.~B. Miltersen.
\newblock The complexity of solving stochastic games on graphs.
\newblock In {\em International Symposium on Algorithms and Computation}, pages
  112--121. Springer, 2009.

\bibitem[BEGM10]{boros2010pumping}
E.~Boros, K.~Elbassioni, V.~Gurvich, and K.~Makino.
\newblock A pumping algorithm for ergodic stochastic mean payoff games with
  perfect information.
\newblock In {\em Integer Programming and Combinatorial Optimization: 14th
  International Conference, IPCO 2010, Lausanne, Switzerland, June 9-11, 2010.
  Proceedings 14}, pages 341--354. Springer, 2010.

\bibitem[BEK99]{BEK1999}
P.~Borwein, T.~Erdélyi, and G.~Kós.
\newblock Littlewood-type problems on [0,1].
\newblock {\em Proceedings of the London Mathematical Society}, 79(1):22–46,
  1999.

\bibitem[Bla62]{blackwell1962discrete}
D.~Blackwell.
\newblock Discrete dynamic programming.
\newblock {\em The Annals of Mathematical Statistics}, pages 719--726, 1962.

\bibitem[Boo23]{boone2023discounted}
V.~Boone.
\newblock When do discounted-optimal policies also optimize the gain?
\newblock {\em arXiv preprint arXiv:2304.08048}, 2023.

\bibitem[CGB03]{cuninghame2003equation}
R.~A. Cuninghame-Green and P.~Butkovic.
\newblock The equation ax= by over (max,+).
\newblock {\em Theoretical Computer Science}, 293(1):3--12, 2003.

\bibitem[CGK{\etalchar{+}}23]{chatterjee2023solving}
K.~Chatterjee, E.~K. Goharshady, M.~Karrabi, P.~Novotn{\`y}, and
  {\DJ}.~{\v{Z}}ikeli{\'c}.
\newblock Solving long-run average reward robust mdps via stochastic games.
\newblock {\em arXiv preprint arXiv:2312.13912}, 2023.

\bibitem[CHV{\etalchar{+}}18]{clarke2018handbook}
E.~M. Clarke, T.~A. Henzinger, H.~Veith, R.~Bloem, et~al.
\newblock {\em Handbook of model checking}, volume~10.
\newblock Springer, 2018.

\bibitem[CMP87]{cerlienco1987computing}
L.~Cerlienco, M.~Mignotte, and F.~Piras.
\newblock Computing the measure of a polynomial.
\newblock {\em Journal of Symbolic Computation}, 4(1):21--33, 1987.

\bibitem[DDE{\etalchar{+}}20]{dewanto2020average}
V.~Dewanto, G.~Dunn, A.~Eshragh, M.~Gallagher, and F.~Roosta.
\newblock Average-reward model-free reinforcement learning: a systematic review
  and literature mapping.
\newblock {\em arXiv preprint arXiv:2010.08920}, 2020.

\bibitem[DG22]{dewanto2022examining}
V.~Dewanto and M.~Gallagher.
\newblock Examining average and discounted reward optimality criteria in
  reinforcement learning.
\newblock In {\em Australasian Joint Conference on Artificial Intelligence},
  pages 800--813. Springer, 2022.

\bibitem[Dub95]{Dubickas1995}
A.~Dubickas.
\newblock On algebraic numbers of small measure.
\newblock {\em Lithuanian Mathematical Journal}, 35:333--342, 1995.

\bibitem[Fri09]{Friedmann-AnExponentialLowerB}
O.~Friedmann.
\newblock An exponential lower bound for the parity game strategy improvement
  algorithm as we know it.
\newblock In {\em LICS}, pages 145--156. IEEE, August 2009.

\bibitem[FS12]{feinberg2012handbook}
E.~A. Feinberg and A.~Shwartz.
\newblock {\em Handbook of {Markov} decision processes: methods and
  applications}, volume~40.
\newblock Springer Science \& Business Media, 2012.

\bibitem[FT87]{franktardos}
A.~Frank and E.~Tardos.
\newblock An application of simultaneous diophantine approximation in
  combinatorial optimization.
\newblock {\em Comb.}, 7(1):49--65, 1987.

\bibitem[Fuj16]{fujiwara1916obere}
M.~Fujiwara.
\newblock {\"U}ber die obere schranke des absoluten betrages der wurzeln einer
  algebraischen gleichung.
\newblock {\em Tohoku Mathematical Journal, First Series}, 10:167--171, 1916.

\bibitem[GCP23]{grand2023reducing}
J.~Grand-Cl{\'e}ment and M.~Petrik.
\newblock Reducing {B}lackwell and average optimality to discounted {MDPs} via
  the {B}lackwell discount factor.
\newblock {\em Advances in Neural Information Processing Systems},
  36:52628--52647, 2023.

\bibitem[GCPV23]{grand2023beyond}
J.~Grand-Clement, M.~Petrik, and N.~Vieille.
\newblock Beyond discounted returns: Robust markov decision processes with
  average and blackwell optimality.
\newblock {\em arXiv preprint arXiv:2312.03618}, 2023.

\bibitem[Gil57]{gillette1957stochastic}
D.~Gillette.
\newblock Stochastic games with zero stop probabilities.
\newblock {\em Contributions to the Theory of Games}, 3(39):179--187, 1957.

\bibitem[GKK88]{gurvich1988cyclic}
V.~Gurvich, A.~Karzanov, and L.~Khachiyan.
\newblock Cyclic games and finding minimax mean cycles in digraphs.
\newblock {\em Zh. Vychisl. Mat. i Mat. Fiz}, 28(9):1407--1417, 1988.

\bibitem[GS08]{gaubert2008cyclic}
S.~Gaubert and S.~Sergeev.
\newblock Cyclic projectors and separation theorems in idempotent convex
  geometry.
\newblock {\em Journal of Mathematical Sciences}, 155:815--829, 2008.

\bibitem[Had93]{hadamard1893}
J.~Hadamard.
\newblock {\'E}tude sur les propri\'et\'es des fonctions enti\`eres et en
  particulier d'une fonction consid\'er\'e par {R}iemann.
\newblock {\em Journal de Math\'ematiques Pures et Appliqu\'ees}, 58:171--215,
  1893.

\bibitem[HMZ11]{HansenICS2011}
Thomas~Dueholm Hansen, Peter~Bro Miltersen, and Uri Zwick.
\newblock Strategy iteration is strongly polynomial for 2-player turn-based
  stochastic games with a constant discount factor.
\newblock In Bernard Chazelle, editor, {\em Innovations in Computer Science -
  ICS 2010, Tsinghua University, Beijing, China, January 7-9, 2011.
  Proceedings}, pages 253--263. Tsinghua University Press, 2011.

\bibitem[HMZ13]{hansen2013strategy}
T.~Hansen, P.~Miltersen, and U.~Zwick.
\newblock Strategy iteration is strongly polynomial for 2-player turn-based
  stochastic games with a constant discount factor.
\newblock {\em Journal of the ACM (JACM)}, 60(1):1--16, 2013.

\bibitem[JS21]{jin2021towards}
Y.~Jin and A.~Sidford.
\newblock Towards tight bounds on the sample complexity of average-reward mdps.
\newblock In {\em International Conference on Machine Learning}, pages
  5055--5064. PMLR, 2021.

\bibitem[Lag69]{Lagrange}
J.~L. Lagrange.
\newblock Sur la résolution des équations numériques.
\newblock {\em Mémoires de l'Académie royale des Sciences et Belles-Lettres
  de Berlin}, XXIII, 1769.

\bibitem[Lan05]{Landau1905}
E.~Landau.
\newblock Sur quelques théorèmes de {M}. {P}etrovitch relatifs aux zéros des
  fonctions analytiques.
\newblock {\em Bulletin de la Société Mathématique de France}, 33:251--261,
  1905.

\bibitem[Leh33]{lehmer1933factorization}
D.~H. Lehmer.
\newblock Factorization of certain cyclotomic functions.
\newblock {\em Annals of mathematics}, 34(3):461--479, 1933.

\bibitem[Lit94]{littman1994markov}
M.~L. Littman.
\newblock Markov games as a framework for multi-agent reinforcement learning.
\newblock In {\em Machine learning proceedings 1994}, pages 157--163. Elsevier,
  1994.

\bibitem[LL69]{liggettlippman}
T.~M. Liggett and S.~A. Lippman.
\newblock Stochastic games with perfect information and time average payoff.
\newblock {\em SIAM Rev.}, 11:604--607, 1969.

\bibitem[LLP{\etalchar{+}}99]{lazarus1999combinatorial}
A.~J. Lazarus, D.~E. Loeb, J.~G. Propp, W.~R. Stromquist, and D.~H. Ullman.
\newblock Combinatorial games under auction play.
\newblock {\em Games and Economic Behavior}, 27(2):229--264, 1999.

\bibitem[LS24]{loff2024smoothed}
B.~Loff and M.~Skomra.
\newblock {Smoothed Analysis of Deterministic Discounted and Mean-Payoff
  Games}.
\newblock In Karl Bringmann, Martin Grohe, Gabriele Puppis, and Ola Svensson,
  editors, {\em 51st International Colloquium on Automata, Languages, and
  Programming (ICALP 2024)}, volume 297 of {\em Leibniz International
  Proceedings in Informatics (LIPIcs)}, pages 147:1--147:16, Dagstuhl, Germany,
  2024. Schloss Dagstuhl -- Leibniz-Zentrum f{\"u}r Informatik.

\bibitem[Mah62]{mahler1962some}
K.~Mahler.
\newblock On some inequalities for polynomials in several variables.
\newblock {\em J. London Math. Soc}, 37(1):341--344, 1962.

\bibitem[MK25]{mukherjee2025howard}
D.~Mukherjee and S.~Kalyanakrishnan.
\newblock Howard's policy iteration is subexponential for deterministic
  {M}arkov {D}ecision {P}roblems with rewards of fixed bit-size and arbitrary
  discount factor.
\newblock In {\em International Conference on Automated Planning and
  Scheduling}, 2025.

\bibitem[MW94]{mignotte1994algebraic}
M.~Mignotte and M.~Waldschmidt.
\newblock On algebraic numbers of small height: linear forms in one logarithm.
\newblock {\em Journal of Number Theory}, 47(1):43--62, 1994.

\bibitem[Nar14]{narahari2014game}
Y.~Narahari.
\newblock {\em Game theory and mechanism design}, volume~4.
\newblock World Scientific, 2014.

\bibitem[OB21]{oliu2021new}
M.~Oliu-Barton.
\newblock New algorithms for solving zero-sum stochastic games.
\newblock {\em Mathematics of Operations Research}, 46(1):255--267, 2021.

\bibitem[Pur95]{puri1995theory}
A.~Puri.
\newblock {\em Theory of hybrid systems and discrete event systems}.
\newblock University of California, Berkeley, 1995.

\bibitem[Put14]{puterman2014markov}
M.~L. Puterman.
\newblock {\em Markov decision processes: discrete stochastic dynamic
  programming}.
\newblock John Wiley \& Sons, 2014.

\bibitem[Rum79]{rump1979polynomial}
S.~M. Rump.
\newblock Polynomial minimum root separation.
\newblock {\em Mathematics of Computation}, 33(145):327--336, 1979.

\bibitem[Sha53]{shapley1953stochastic}
L.~S. Shapley.
\newblock Stochastic games.
\newblock {\em Proceedings of the national academy of sciences},
  39(10):1095--1100, 1953.

\bibitem[Smy08]{smyth2008mahler}
C.~Smyth.
\newblock Mahler measure of one-variable polynomials: a survey.
\newblock In {\em Number Theory and Polynomials}, pages 322--349. Cambridge
  University Press, 2008.

\bibitem[SWYY20]{sidford2020solving}
A.~Sidford, M.~Wang, L.~Yang, and Y.~Ye.
\newblock Solving discounted stochastic two-player games with near-optimal time
  and sample complexity.
\newblock In {\em International Conference on Artificial Intelligence and
  Statistics}, pages 2992--3002. PMLR, 2020.

\bibitem[TRMV21]{tang2021taylor}
Y.~Tang, M.~Rowland, R.~Munos, and M.~Valko.
\newblock Taylor expansion of discount factors.
\newblock In {\em International Conference on Machine Learning}, pages
  10130--10140. PMLR, 2021.

\bibitem[Vei69]{Veinott1969}
A.~F. Veinott(Jr.).
\newblock {Discrete Dynamic Programming with Sensitive Discount Optimality
  Criteria}.
\newblock {\em The Annals of Mathematical Statistics}, 40(5):1635--1660, 1969.

\bibitem[WVA{\etalchar{+}}24]{wang2024robust}
Y.~Wang, A.~Velasquez, G.~Atia, A.~Prater-Bennette, and S.~Zou.
\newblock Robust average-reward reinforcement learning.
\newblock {\em Journal of Artificial Intelligence Research}, 80:719--803, 2024.

\bibitem[WWY22]{wang2022near}
J.~Wang, M.~Wang, and L.~F. Yang.
\newblock Near sample-optimal reduction-based policy learning for average
  reward {MDP}.
\newblock {\em arXiv preprint arXiv:2212.00603}, 2022.

\bibitem[Yap00]{yap2000fundamental}
C.~K. Yap.
\newblock {\em Fundamental problems of algorithmic algebra}, volume~49.
\newblock Oxford University Press Oxford, 2000.

\bibitem[Ye11]{ye2011simplex}
Yinyu Ye.
\newblock The simplex and policy-iteration methods are strongly polynomial for
  the markov decision problem with a fixed discount rate.
\newblock {\em Mathematics of Operations Research}, 36(4):593--603, 2011.

\bibitem[YGA{\etalchar{+}}16]{yang2016efficient}
S.~Yang, Y.~Gao, B.~An, H.~Wang, and X.~Chen.
\newblock Efficient average reward reinforcement learning using constant
  shifting values.
\newblock In {\em Proceedings of the AAAI Conference on Artificial
  Intelligence}, volume~30, 2016.

\bibitem[ZKBY23]{zhang2023model}
K.~Zhang, S.~M. Kakade, T.~Basar, and L.~F. Yang.
\newblock Model-based multi-agent rl in zero-sum markov games with near-optimal
  sample complexity.
\newblock {\em Journal of Machine Learning Research}, 24(175):1--53, 2023.

\bibitem[ZP96]{zwick}
U.~Zwick and M.~Paterson.
\newblock The complexity of mean payoff games on graphs.
\newblock {\em Theoret. Comput. Sci.}, 158(1-2):343--359, 1996.

\end{thebibliography}
\appendix
\section{Bounds from previous work}\label{app:bounds from past work}
{\bf Bounds from \cite{andersson2009complexity}.} The authors in \cite{andersson2009complexity} focus on perfect-information SGs. Lemma~1 of~\cite{andersson2009complexity} shows that $\alphabw \leq 1- 2 (n!)^2 4^{n} \max\{M,W\}^{2n^2}$. We then apply the classical bound $n! = O\left(\sqrt{n} \left(\frac{n}{e}\right)^{n} \exp\left(\frac{1}{12 n}\right) \right)$ to obtain the bound presented in Tables~\ref{tab:bound deterministic SGs} and~\ref{tab:bound general SGs}.

\noindent 
{\bf Bounds from \cite{grand2023reducing}.} The authors of \cite{grand2023reducing} focus on MDPs. Their main bound for $\alphabw$ is given in Theorem~4.4, and their bound can be simplified to $-\log(1-\alphabw) = O\left(n \log(W) + n^2 (1+\log(M)\right)$, see the calculation in Appendix E of \cite{grand2023reducing}.

\noindent 
{\bf Bounds from \cite{mukherjee2025howard}.} The bound for $-\log(1-\alphabw)$ is given in Section~4.4 of \cite{mukherjee2025howard}, with the notation $b$ for our term $\log(W)$.

\section{Proofs for Section \ref{SectionDeterm}}\label{app:proof deterministic}

In this section we provide the proofs of the results of \Cref{SectionDeterm}. We first prove \Cref{lem:degree coeff Delta - deterministic}.
\begin{proof}[Proof of \Cref{lem:degree coeff Delta - deterministic}]
Note that the polynomial $\Delta$ defined in~\eqref{eq:def_Delta} satisfies
\begin{align*}
\Delta (\alpha )  &= (1-\alpha^q)(1-\alpha^{q'})(\<r,\pi>-\<r,\pi'>)+ (1-\alpha^{q'})\alpha^p \<r,\gamma> - (1-\alpha^{q})\alpha^{p'} \<r,\gamma'>
\; .
\end{align*}
Thus, if the absolute value of the instantaneous rewards are bounded by $W$, the coefficients of the polynomial of \Cref{lem:degree coeff Delta - deterministic} satisfy 
 $|a_k|\leq 12 W$ for all $k$. In addition, we have
\[ K = \max \{ q+q'+\max\{p,p'\}-1,q'+p+q-1, q+p'+q'-1\}\leq 2n-1\enspace , 
\]
because we can choose $\pi$, $\gamma$, $\pi'$ and  $\gamma'$ elementary (i.e., such that in the corresponding sequences of states, no state appears twice, except the initial and last state in the case of circuits), and so $p+q\leq n$, $p\leq n-1$, $p'+q'\leq n$ and $p'\leq n-1$. 
\end{proof}

\subsection{Proofs for Section \ref{sec:determ Lagrange}}
We now provide the intermediate results that we need to prove Theorem \ref{th:bound $\sens$-sensitive discount factor - deterministic}.
In what follows, we denote by $H(P)$ the height of the polynomial $P=\sum^{d}_{k=0} c_k x^k$ defined as $H(P) \coloneqq  \max_{k \in \{ 0, \ldots , d\}} | c_k |$. 
\begin{lemma}\label{lemmaTechnical2} 
Let $P =\sum^{d}_{k=0} c_k x^k$ be a polynomial with integer coefficients and $Q(y)=\sum^{d}_{k=0} c'_k y^k$ be the polynomial which is obtained making the change of variable $x = 1 - y$ in $P(x)$. Suppose that, 
for some $j\in \{0, \ldots , d\}$, we have $c'_j \neq 0$ and $c'_i = 0$ for all $i < j$. 
Then, the polynomial $Q(y)$ has no zeros in the interval $]0 , \frac{1}{2 H(P){d+1\choose j+2}}[$.
\end{lemma}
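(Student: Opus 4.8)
The plan is to apply the Lagrange bound (\Cref{th:Lagrange}) directly to $Q$, so that the whole argument reduces to controlling the coefficients $c'_k$ in terms of $H(P)$. First I would observe that, since $P$ has integer coefficients and the substitution $x = 1-y$ has integer coefficients, $Q$ also has integer coefficients; in particular the lowest nonzero coefficient $c'_j$ satisfies $|c'_j| \geq 1$, and $H(P) \geq 1$ because $P \neq 0$. Expanding $P(1-y) = \sum_k c_k (1-y)^k$ with the binomial theorem and collecting powers of $y$ gives the explicit formula $c'_l = (-1)^l \sum_{k=l}^{d} c_k \binom{k}{l}$, whence $|c'_l| \leq H(P) \sum_{k=l}^{d}\binom{k}{l} = H(P)\binom{d+1}{l+1}$ by the hockey-stick identity.

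Next I would feed these estimates into \Cref{th:Lagrange} applied to $Q = \sum_{k=j}^{d} c'_k y^k$. Since $|c'_j|\geq 1$ and $|c'_i|\leq H(P)\binom{d+1}{i+1}$ for each $i \in \{j+1,\dots,d\}$ with $c'_i \neq 0$, every nonzero root $z$ of $Q$ satisfies
\[
|z|\ \geq\ \frac12\min_{i}\Big(\tfrac{|c'_j|}{|c'_i|}\Big)^{\frac{1}{i-j}}\ \geq\ \frac12\min_{i}\Big(\tfrac{1}{H(P)\binom{d+1}{i+1}}\Big)^{\frac{1}{i-j}}.
\]
Since any real number in $\,]0,\frac{1}{2H(P)\binom{d+1}{j+2}}[\,$ is nonzero of modulus below this bound, establishing $|z|\geq \frac{1}{2H(P)\binom{d+1}{j+2}}$ immediately yields the claim. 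Thus it remains to show, for every $i \in \{j+1,\dots,d\}$, that $\big(H(P)\binom{d+1}{i+1}\big)^{1/(i-j)} \leq H(P)\binom{d+1}{j+2}$; using $H(P)^{1/(i-j)}\leq H(P)$ (valid as $H(P)\geq 1$, $i-j\geq 1$), this is implied by the purely combinatorial inequality
\[
\binom{d+1}{i+1}\ \leq\ \binom{d+1}{j+2}^{\,i-j}.
\]

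The main obstacle is exactly this combinatorial inequality, since $\binom{d+1}{i+1}$ is not monotone in $i$ and may exceed $\binom{d+1}{j+2}$. I would prove it by telescoping: writing $\binom{d+1}{i+1} = \binom{d+1}{j+2}\prod_{m=j+2}^{i}\frac{(d+1)-m}{m+1}$ (the case $i=j+1$ being an equality with empty product), each of the $i-j-1$ factors is at most $(d+1)-m \leq d-1 < d+1 \leq \binom{d+1}{j+2}$, where the last bound holds because $i \geq j+2$ forces $2 \leq j+2 \leq d$ and hence $\binom{d+1}{j+2} \geq \binom{d+1}{d} = d+1$. Therefore the product is at most $\binom{d+1}{j+2}^{\,i-j-1}$, which gives the desired inequality and closes the argument. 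The degenerate case $j=d$, where $Q$ is a monomial with no nonzero root, is handled vacuously.
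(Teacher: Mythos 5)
Your proof is correct and follows essentially the same route as the paper's: expand the coefficients of $Q$ via the binomial theorem and the hockey-stick identity to get $|c'_i|\leq H(P)\binom{d+1}{i+1}$, use integrality to get $|c'_j|\geq 1$, and apply the Lagrange bound, with everything reducing to the combinatorial inequality $\binom{d+1}{i+1}\leq\binom{d+1}{j+2}^{\,i-j}$. The only (minor) difference is that you establish this inequality by telescoping the ratios of consecutive binomial coefficients and bounding each factor by $\binom{d+1}{j+2}\geq d+1$, whereas the paper proves the same inequality by induction on $i$ using $\frac{d+1}{2}\leq\binom{d+1}{j+2}$; the two arguments are equally elementary and equally tight.
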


\begin{proof}
We first bound the magnitude of the coefficients of $Q$ in terms of $H(P)$. Since $Q(y) = P(1-y)$, we have $c'_i = (-1)^i \sum_{k=i}^d c_k {k\choose i}$ for $i \in \{0, \ldots , d\}$. Now we use $\sum_{k=i}^d {k \choose i} = {d+1 \choose i+1} $ to obtain that $|c'_i| \leq H(P) {d+1 \choose i+1}$ for all $i \in \{0, \ldots , d\}$.

Note that $|c'_j|\geq 1$ since $c'_j \neq 0$ and $c'_j \in \Z$. 
Besides, note that for $m \geq 3$ and $j\leq m-1$, we have 
\begin{equation}\label{eq:aux} 
{m \choose i} \leq {m \choose j+1}^{i-j}
\end{equation}
for all $i \in \{j+1, \ldots , m\}$. Indeed, the cases $i = j+1$ and $i = m$ are trivial, so we may assume $j < m -1$ (because $j = m -1$ implies $i = m $). Suppose that ${m \choose i} \leq {m \choose j+1}^{i-j}$ for some $i\leq m-1$. Then,  \begin{align*}
{m \choose i+1} = \frac{m-i}{i+1}{m \choose i}  \leq \frac{m}{2} {m \choose j+1}^{i-j} \leq {m \choose j+1}^{i + 1 -j}\enspace ,
\end{align*}
because $\frac{m}{2} \leq 
\frac{m}{2}\frac{m-1}{j+1}\cdots \frac{m-j+1}{3}\frac{m-j}{1}= \frac{m}{j+1}\frac{m-1}{j}\cdots \frac{m-j+1}{2}\frac{m-j}{1}= {m \choose j+1}$. 

We are now ready to apply the Lagrange bound (\Cref{th:Lagrange}). 
Using \eqref{eq:aux} we get
\[ 
\left(\frac{|c'_{j}|}{|c'_i|}\right)^{\frac{1}{i-j}} \geq  \frac{|c'_{j}|^{\frac{1}{i-j}}}{H(P)^\frac{1}{i-j}     {d+1 \choose i+1}^\frac{1}{i-j}}
   \geq \frac{1}{H(P) {d +1 \choose j+2}} 
 \]
for any $i \in \{ j+1, \ldots , d\}$. 
The lemma now follows from \Cref{th:Lagrange}.
\end{proof}

Note that the polynomial $\epsilon \mapsto \Delta(1-\epsilon)$ can be rewritten as
\begin{align}\label{DeltaEpsilon}
  \Delta(1-\epsilon)  
   = \sum_{i=0}^K (-1)^i\epsilon^i b_i \enspace 
\end{align}
where $b_i = \sum_{k=i}^K  a_k {k\choose i}$. 
Then, the next proposition is a direct consequence of Lemmas~\ref{lemmaTechnical2} and~\ref{lem:degree coeff Delta - deterministic}.

\begin{proposition}\label{prop_threshold_det}
Suppose that $b_0 = \ldots = b_{j-1} = 0$ and $b_j \neq 0$ for some $j \geq 1$ in~\eqref{DeltaEpsilon}. Then, the polynomial $\epsilon \mapsto \Delta(1-\epsilon)$ has no zeros in the interval $]0 , \frac{1}{24 W{K+1\choose j+2}}[$.
\end{proposition}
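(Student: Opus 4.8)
The plan is to combine the two earlier results, \Cref{lemmaTechnical2} and \Cref{lem:degree coeff Delta - deterministic}, by substituting the bounds from the latter into the former. The statement of \Cref{prop_threshold_det} is precisely \Cref{lemmaTechnical2} applied to the polynomial $P(x) = \Delta(x)$, after rewriting $\Delta(1-\epsilon)$ in the form \eqref{DeltaEpsilon}. The key observation is that the coefficients $b_i$ in \eqref{DeltaEpsilon} satisfy $b_i = \sum_{k=i}^K a_k \binom{k}{i}$, which are (up to the alternating sign $(-1)^i$ already factored out) exactly the coefficients $c_i'$ of the change-of-variable polynomial $Q(y) = P(1-y)$ from \Cref{lemmaTechnical2}. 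Thus the hypotheses match: assuming $b_0 = \cdots = b_{j-1} = 0$ and $b_j \neq 0$ corresponds exactly to assuming $c_j' \neq 0$ and $c_i' = 0$ for all $i < j$ in \Cref{lemmaTechnical2}.

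First I would invoke \Cref{lem:degree coeff Delta - deterministic}, which tells us that $\Delta$ has integer coefficients (so \Cref{lemmaTechnical2} applies), with height bounded by $H(\Delta) \leq 12W$ and degree $K \leq 2n-1$. Next I would apply \Cref{lemmaTechnical2} directly with $P = \Delta$ and $d = K$, which yields that $Q(y) = \Delta(1-y)$ has no zeros in the interval $\left]0, \frac{1}{2 H(\Delta)\binom{K+1}{j+2}}\right[$. Finally I would substitute $H(\Delta) \leq 12W$ into the denominator to obtain $2 H(\Delta) \binom{K+1}{j+2} \leq 24 W \binom{K+1}{j+2}$, which gives the claimed zero-free interval $\left]0, \frac{1}{24 W\binom{K+1}{j+2}}\right[$. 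Since enlarging the denominator only shrinks the interval, the zero-free property is preserved.

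There is essentially no hard part here: \Cref{prop_threshold_det} is a corollary obtained by plugging the explicit coefficient and degree bounds of \Cref{lem:degree coeff Delta - deterministic} into the general root-separation statement of \Cref{lemmaTechnical2}. The only point requiring a moment of care is confirming that the coefficients $b_i$ of \eqref{DeltaEpsilon} genuinely coincide with the $c_i'$ of \Cref{lemmaTechnical2}, so that the vanishing hypothesis on the $b_i$ translates to the vanishing hypothesis on the $c_i'$; but this is immediate from comparing the two defining formulas, since the alternating sign $(-1)^i$ does not affect which coefficients vanish. I would therefore present the proof in two lines: apply \Cref{lem:degree coeff Delta - deterministic} to control $H(\Delta)$ and $K$, then apply \Cref{lemmaTechnical2} and bound the denominator.
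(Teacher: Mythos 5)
Your proposal is correct and matches the paper exactly: the paper also derives this proposition as an immediate consequence of \Cref{lemmaTechnical2} (applied with $P=\Delta$, noting that the coefficients $(-1)^i b_i$ of $\Delta(1-\epsilon)$ are the $c_i'$ of that lemma, so the vanishing hypotheses coincide) together with the bounds $H(\Delta)\leq 12W$ and $K\leq 2n-1$ from \Cref{lem:degree coeff Delta - deterministic}. Your observation that $\Delta$ has integer coefficients (so that $|b_j|\geq 1$ can be used inside \Cref{lemmaTechnical2}) and that enlarging the denominator only shrinks the zero-free interval is exactly the right level of care.
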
 

\begin{proof}[Proof of \Cref{th:bound $\sens$-sensitive discount factor - deterministic}]
Let $\alpha'$ be such that $1-\frac{1}{24 W{2n\choose {\min \{ \sens+4, n\} }}} < \alpha' < 1$, and let $(\sigma^*,\tau^*)$ be a pair of discount optimal strategies for the discount factor $\alpha'$. To prove the theorem, it is enough to show that $(\sigma^*,\tau^*)$ is a pair of $\sens$-sensitive optimal strategies.

On the contrary, suppose that $(\sigma^*,\tau^*)$ is not a pair of $\sens$-sensitive optimal strategies. Then, either there exist a strategy $\tau$ of player $\Max$ and a state $i$ such that
\begin{equation}\label{Eq1Th1}
\lim_{\alpha\to 1^-} \; (1-\alpha)^{-\sens}(v_i^{\sigma^*, \tau^*} (\alpha) - v_i^{\sigma^*,\tau} (\alpha)) < 0 \; ,
\end{equation}
or there exist a strategy $\sigma$ of player $\Min$ and a state $i$ such that 
\begin{equation}\label{Eq2Th1}
\lim_{\alpha\to 1^-} \; (1-\alpha)^{-\sens}(v_i^{\sigma^*, \tau^*} (\alpha) - v_i^{\sigma,\tau^*} (\alpha)) > 0 \;.
\end{equation}

In the first place, assume that~\eqref{Eq1Th1} holds. Let $d'$ be the smallest value satisfying 
\begin{equation}\label{Eq1Th1-general}
\lim_{\alpha\to 1^-} \; (1-\alpha)^{-\sens'}(v_i^{\sigma^*, \tau^*} (\alpha) - v_i^{\sigma^*,\tau} (\alpha)) < 0 \; .
\end{equation}
By~\eqref{Eq1Th1}, it follows that $d' \leq d$ and that $
\lim_{\alpha\to 1^-} \; (1-\alpha)^{-\sens''}(v_i^{\sigma^*, \tau^*} (\alpha) - v_i^{\sigma^*,\tau} (\alpha)) = 0$ for all $d'' < d'$. 

If for $q \in \N$ we set $\q[q]:=1+\alpha+ \dots + \alpha^{q-1}$,  then the polynomial $\Delta$ defined in~\eqref{eq:def_Delta} satisfies
\begin{equation}\label{eq:property_Delta}
\Delta(\alpha) = (1-\alpha^q)(1-\alpha^{q'}) (v_i^{\sigma, \tau} (\alpha) - v_i^{\sigma',\tau'} (\alpha) ) = (1 -\alpha)^2 \q[q] \q[q'] (v_i^{\sigma, \tau} (\alpha) - v_i^{\sigma',\tau'} (\alpha) )  \; .
\end{equation}
Therefore, for all $\sens''$ we have
    \[
        \lim_{\alpha\to 1^-} \; (1-\alpha)^{-\sens''}(v_i^{\sigma^*, \tau^*} (\alpha) - v_i^{\sigma^*,\tau} (\alpha)) = 0 \iff 
\lim_{\alpha\to 1^-} \; (1-\alpha)^{-(\sens'' +2)} \Delta(\alpha) = 0 \; .
    \]
     We conclude that $b_0=\ldots = b_{\sens'+1}=0$ and $b_{\sens'+2}\neq 0$ if we represent the polynomial $\Delta(1-\epsilon)$ as in~\eqref{DeltaEpsilon}. Thus, by~\eqref{Eq1Th1} and \Cref{prop_threshold_det},  it follows that $v_i^{\sigma^*, \tau^*} (\alpha) - v_i^{\sigma^*,\tau} (\alpha) < 0$ for $1-\frac{1}{24 W{K+1\choose {\sens' + 4}} } < \alpha < 1$, and therefore this remains true for $1-\frac{1}{24 W{2n\choose {\min \{ \sens + 4 , n\}}}} < \alpha < 1$ because $K+1 \leq 2n$ and $\sens' \leq \sens$. Since $1-\frac{1}{24 W{2n\choose {\min \{\sens+4, n\}}}} < \alpha' < 1$, in particular we have $v_i^{\sigma^*, \tau^*} (\alpha') - v_i^{\sigma^*,\tau} (\alpha') < 0$, contradicting the fact that $(\sigma^*,\tau^*)$ is a pair of discount optimal strategies for $\alpha'$.

On the other hand, if~\eqref{Eq2Th1} is satisfied, using symmetric arguments we also arrive to a contradiction. This completes the proof.
\end{proof}

\begin{proof}[Proof of \Cref{coroZP96}]
 The first inequality of this corollary readily follows from \Cref{th:bound $\sens$-sensitive discount factor - deterministic}.
  The second inequality follows from \Cref{prop_threshold_det}, the fact that $K \leq 2n-1$ and that the function $j \mapsto \frac{1}{24 W  {2n \choose j+2}}$ is convex and achieves its minimum at $j=n-2$.
 \end{proof}
 We conclude this section with the proof of \Cref{th:bound on k}.
\begin{proof}[Proof of \Cref{th:bound on k}]
Theorem 2.1 of \cite{BEK1999} shows that if $P=\sum_{k=0}^d c_k x^k$ is a non-zero polynomial such that $\max_{k \in \{ 0, \ldots , d\}} | c_k | \leq 1$, then the multiplicity of $1$ as a root of $P$ cannot exceed $a \sqrt{d(1-\log |c_0|)}$, where $a>0$ is an absolute constant. Applying this result to the polynomial $\frac{\Delta(\alpha)}{12W}$, we conclude that the multiplicity of $1$ as a root of $\Delta(\alpha)$ is at most $a \sqrt{(2n-1)(1+\log 12W)}$. Then,  by~\eqref{eq:property_Delta} we can write $v^{\sigma,\tau}_{i}(\alpha)-v^{\sigma,\tau'}_{i}(\alpha)$ as $(1 - \alpha )^k Q(\alpha)$, where $k \leq \bar{d}^{\de}(n,W) \coloneq a \sqrt{(2n-1)(1+\log 12W)}-2$ and $Q (\alpha)$ is a continuous function satisfying $Q (1) \neq 0$. 

If the pair of strategies $(\sigma,\tau)$ is $\bar{d}^{\de}(n,W)$-sensitive optimal, we have 
\[
0 \leq \lim_{\alpha\to 1^-} \; (1-\alpha)^{-\bar{d}^{\de}(n,W)}\left(v^{\sigma,\tau}_{i}(\alpha)-v^{\sigma,\tau'}_{i}(\alpha) \right)  = \lim_{\alpha\to 1^-} (1-\alpha)^{k-\bar{d}^{\de}(n,W)} Q(\alpha) \; ,
\]
and so $Q(1) > 0$. It follows that  
\[
\lim_{\alpha\to 1^-} \; (1-\alpha)^{d}\left(v^{\sigma,\tau}_{i}(\alpha)-v^{\sigma,\tau'}_{i}(\alpha) \right)  = \lim_{\alpha\to 1^-} (1-\alpha)^{k-d} Q(\alpha) \geq 0 
\]
for any $d$. Using similar arguments it is possible to show also that 
\[
\lim_{\alpha\to 1^-} \; (1-\alpha)^{d}\left(v^{\sigma',\tau}_{i}(\alpha)-v^{\sigma,\tau}_{i}(\alpha) \right)   \geq 0 
\]
for any $d$ and any strategy $\sigma'$ of player Min.
We conclude that $(\sigma,\tau)$ is $\sens$-sensitive optimal for any $\sens$, and so also Blackwell optimal.
  \end{proof}
\subsection{Proof for Section \ref{sec:determ mahler measures}}\label{app:detailed Dubickas}

    \label{th:mahler-1}
\begin{proof}[Proof of \Cref{th-dubickas}]
Given $\epsilon > 0$, let $D_\epsilon$ be the constant provided by \Cref{th:mahler}.
Let $z$ be any real root of the polynomial $\Delta$ different from  $1$, $d$ be its degree and $P =\sum^{d}_{k=0} c_k x^k$ be its minimal polynomial. Since $z$ is a root of $\Delta$, it follows that $P$ divides $\Delta$, and so we have $d \leq 2n -1$ and $M(P) \leq M(\Delta)$ (see Section~1.3 of \cite{cerlienco1987computing}). Besides, note that $M(\Delta)\leq 12W \sqrt{2n}$ by Landau's bound~\cite{Landau1905}.
If $d > D_\epsilon$, then~\eqref{dubickas} holds, and so we have
\begin{align*}
|z - 1 | & > e^{-(\pi/4 + \epsilon) \sqrt{d \log d  \log M(P)}} \geq e^{-(\pi/4 + \epsilon) \sqrt{(2n-1) \log (2n-1) \log M(\Delta)}} \\
& \geq e^{-(\pi/4 + \epsilon) \sqrt{(2n-1) \log (2n-1)  \log (12 W \sqrt{2n}) }} \; . 
\end{align*}
Assume now that $d \leq D_\epsilon$. By Lemma~\ref{lemmaTechnical2}, it follows that
$|z -1 | \geq \frac{1}{2 H(P) {d+1\choose \lceil \frac{d}{2}\rceil}}$. Then, since $H(P) \leq 2^d M(P)$, we have
\[
|z -1 | \geq \frac{1}{2^{d+1} M(P) {d+1\choose \lceil \frac{d}{2}\rceil }}
\geq \frac{1}{2^{D_\epsilon +1} M(\Delta) {D_\epsilon +1\choose \lceil \frac{D_\epsilon}{2}\rceil }} \geq \frac{1}{2^{D_\epsilon +1} {D_\epsilon +1\choose \lceil \frac{D_\epsilon}{2}\rceil } 12 W \sqrt{2n}} \enspace .
\]
Setting $a_\epsilon = \log\left(2^{D_\epsilon +1} {D_\epsilon +1\choose \lceil \frac{D_\epsilon}{2}\rceil }\right)$,
we conclude that $\Delta$ has no zeros in the interval  $]\alphama^{\sf det}, 1[$. 
\end{proof}
\section{Proof for \Cref{SectionStochastic}}\label{app:proof stochastic}
\subsection{Proof of \Cref{prop:bound on Delta - stochastic}}
We now detail the proof of \Cref{prop:bound on Delta - stochastic}. 

We first provide the exact formula for the polynomial $\Delta(\alpha)$ considered in \Cref{SectionStochastic}.  Given a pair of stationary strategies $(\sigma,\tau)$, let $P^{\sigma,\tau} $ be the transition matrix and $r^{\sigma,\tau}$ be the vector of instantaneous rewards induced by $(\sigma,\tau)$, and let us define $Q^{\sigma,\tau} \coloneq M P^{\sigma,\tau}$ and $D^{\sigma,\tau}(\alpha) \coloneq \det \left(M I - \alpha Q^{\sigma,\tau}\right)$. Since it is known that $\left(v^{\sigma,\tau}_{i}(\alpha)\right)_{i \in [n]} = \left(I - \alpha P^{\sigma,\tau}\right)^{-1}r^{\sigma,\tau}$, using Cramer's formula for the inverse of a matrix and Laplace's cofactor extension, it follows that $v^{\sigma,\tau}_{i}(\alpha) - v^{\sigma',\tau'}_{i}(\alpha) = \frac{M \Delta(\alpha)}{D^{\sigma,\tau}(\alpha)D^{\sigma',\tau'}(\alpha)}$ for any state $i$, where
\begin{equation}\label{eq:def_Delta_stoch}
\Delta(\alpha) \coloneq  D^{\sigma',\tau'} (\alpha)(\sum_{j=1}^{n} \cof_{ji} (M I - \alpha Q^{\sigma,\tau}) r_j^{\sigma,\tau} ) 
 -  D^{\sigma,\tau}(\alpha) (\sum_{j=1}^{n} \cof_{ji} (M I - \alpha Q^{\sigma',\tau'}) r_j^{\sigma',\tau'}) . 
\end{equation}

We now proceed to bound the degree and the coefficients of $\Delta$. In the next two propositions, we assume that $\frac{Q}{M}$ is a row-stochastic matrix, with $Q \in \N^{n \times n}$. 
\begin{proposition}\label{Prop_det_stoch_bound}
The polynomial $\det (M I - \alpha Q)$ is of the form $\sum_{k=0}^n a_k \alpha^k$, where $|a_k| \leq {n \choose k} M^n$. 
\end{proposition}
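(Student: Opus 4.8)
The plan is to analyze the determinant $\det(MI - \alpha Q)$ directly through its Leibniz expansion, tracking how the power of $\alpha$ and the size of the coefficient are linked. First I would write
\[
\det(MI - \alpha Q) = \sum_{\pi \in S_n} \operatorname{sgn}(\pi) \prod_{i=1}^n (MI - \alpha Q)_{i\,\pi(i)},
\]
where each diagonal entry is $M - \alpha Q_{ii}$ and each off-diagonal entry is $-\alpha Q_{i\,\pi(i)}$. The key structural observation is that the coefficient $a_k$ of $\alpha^k$ is built from products in which exactly $k$ factors contribute a $Q$-entry (hence an $\alpha$) and the remaining $n-k$ factors contribute the pure $M$ from the diagonal. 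So $a_k$ is, up to sign, a sum of terms of the form $M^{\,n-k}$ times a product of $k$ entries of $Q$.

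The central step is bounding the magnitude of $a_k$. I would reorganize the expansion as a sum over which $k$ positions supply a $Q$-factor: choosing a subset $S \subseteq [n]$ with $|S| = k$ for the ``$\alpha$-rows,'' the contribution factors as $M^{\,n-k}$ multiplied by a signed sum of products of $Q$-entries over permutations of the chosen rows. Because $Q/M$ is row-stochastic, each row of $Q$ sums to $M$, i.e. $\sum_j Q_{ij} = M$ with $Q_{ij} \ge 0$. The natural way to exploit this is to bound the absolute value of the coefficient of $\alpha^k$ by summing $\prod_{i \in S} Q_{i\,\sigma(i)}$ over injections, and then crucially to use nonnegativity of the $Q_{ij}$ together with the row-sum identity to collapse these sums. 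Concretely, summing $\prod_{i\in S} Q_{i\,j_i}$ over all choices of distinct column indices is bounded above by the unrestricted product of row sums $\prod_{i\in S} \big(\sum_j Q_{ij}\big) = M^{k}$; there are $\binom{n}{k}$ choices of $S$. This yields $|a_k| \le \binom{n}{k} M^{\,n-k} \cdot M^{k} = \binom{n}{k} M^n$, matching the claimed bound.

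The degree bound $\deg = n$ is immediate since each of the $n$ factors contributes at most one power of $\alpha$, so $a_k = 0$ for $k > n$. The main obstacle I anticipate is handling the cancellation and the column-distinctness constraint correctly: the Leibniz terms carry signs, so to get a clean bound I would pass to absolute values early (using $|{\operatorname{sgn}}(\pi)| = 1$ and $Q_{ij}\ge 0$), and then relax the ``distinct columns'' requirement of a permutation to the unrestricted sum, since dropping the injectivity constraint only adds nonnegative terms and hence can only increase the bound. This relaxation is precisely what lets the row-stochasticity factor through as $\prod_{i\in S}\sum_j Q_{ij} = M^k$. Once that relaxation step is justified, the rest is the bookkeeping of the binomial count and the power of $M$, which is routine.
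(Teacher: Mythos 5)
Your proof is correct and is in substance the same as the paper's: your Leibniz-expansion bookkeeping re-derives by hand exactly the identity the paper invokes via compound matrices, namely that the coefficient of $\alpha^k$ equals $(-1)^k M^{n-k}$ times the sum of the $\binom{n}{k}$ principal $k\times k$ minors of $Q$, and your relaxation of the distinct-columns constraint (valid by nonnegativity) is the same row-sum argument the paper uses to bound each such minor by $M^k$. No gap.
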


\begin{proof}
We have 
\begin{equation}\label{det_stoch_bound}
\det (M I - \alpha Q) = \sum_{k=0}^n (-\alpha)^k M^{n - k} \tr (\comp_k (Q)) \enspace ,
\end{equation}
where $\comp_k (Q)$ is the $k$-th compound matrix of $Q$. Since the entries of $\comp_k (Q)$ are minors of $Q$ of size $k \times k$, and $\sum_{j=1}^n Q_{ij} = M$ for each $i= 1,\ldots ,n$, we conclude that the absolute value of all the entries of $\comp_k (Q)$ are less than or equal to $M^k$. The result now follows from~\eqref{det_stoch_bound} and the fact that $\comp_k (Q)$ is of size ${n \choose k} \times {n \choose k}$.
\end{proof}
It is worth noting that the result in the previous proposition is tight when $Q = M I$.
\begin{proposition}\label{Prop_cofactor_bound}
For each $i,j \in \{1, \ldots , n\}$, the $(i,j)$ cofactor of the matrix $M I - \alpha Q$ is of the form $\sum_{k=0}^{n-1} a_k \alpha^k$, where $|a_k| \leq {n-1 \choose k} M^{n-1}$. 
\end{proposition}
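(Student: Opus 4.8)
\textbf{Proof plan for Proposition~\ref{Prop_cofactor_bound}.}
The plan is to mimic the structure of the proof of \Cref{Prop_det_stoch_bound}, but applied to the appropriate $(n-1)\times(n-1)$ submatrix of $MI-\alpha Q$. Recall that the $(i,j)$ cofactor is $(-1)^{i+j}$ times the minor obtained by deleting row $i$ and column $j$ from $MI-\alpha Q$. First I would fix $i,j$ and write $A(\alpha) := (MI-\alpha Q)$ with row $i$ and column $j$ removed, so that $A(\alpha)$ is an $(n-1)\times(n-1)$ matrix whose $(k,\ell)$ entry is $M\delta_{k\ell'} - \alpha Q_{k\ell'}$ for the appropriate reindexing of rows/columns; the point is that $A(\alpha) = MB - \alpha C$ for two fixed $(n-1)\times(n-1)$ integer matrices $B,C$, where $B$ is a $0/1$ matrix (the restriction of the identity) and $C$ is the restriction of $Q$.

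The key step is to expand $\det A(\alpha)$ by multilinearity in the rows, exactly as in~\eqref{det_stoch_bound}. Writing $A(\alpha)=MB-\alpha C$ and collecting powers of $\alpha$, each coefficient $a_k$ of $\alpha^k$ is (up to sign) a sum of products in which $k$ rows are taken from $C$ and $n-1-k$ rows are taken from $MB$. Since $B$ is a restriction of the identity matrix, the factor coming from the $MB$-rows is at most $M^{n-1-k}$ in absolute value; and since each row of $C$ is a subrow of a row of $Q$, the $k\times k$ minor-type contributions built from the $C$-rows are bounded in absolute value by $M^{k}$ (each such minor is a $k\times k$ minor of $Q$, and the argument of \Cref{Prop_det_stoch_bound} via the compound matrix shows that minors of $Q$ of size $k$ are bounded by $M^{k}$, using $\sum_j Q_{ij}=M$). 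Combining these, the magnitude of $a_k$ is at most $M^{n-1-k}\cdot M^{k}=M^{n-1}$ times the number of ways of choosing which $k$ of the $n-1$ rows come from $C$, which is ${n-1\choose k}$. This yields $|a_k|\le {n-1\choose k}M^{n-1}$, as desired.

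The main obstacle I anticipate is bookkeeping rather than conceptual: one must carefully argue that each $k\times k$ minor arising from the selected rows of $C$ is genuinely a $k\times k$ minor of the original $Q$ (so that the $M^k$ bound from the compound-matrix/row-sum argument applies), and that deleting row $i$ and column $j$ does not break the row-sum structure one relies on. The cleanest way to handle this is to \emph{not} use the row-sum bound directly on $A(\alpha)$ (whose rows need not sum to anything nice after deleting column $j$), but instead to factor each coefficient $a_k$ through minors of the \emph{full} matrix $Q$, for which $\sum_j Q_{ij}=M$ holds and \Cref{Prop_det_stoch_bound}'s compound-matrix estimate gives the $M^k$ bound. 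Once this reduction is in place, the degree bound (the polynomial has degree at most $n-1$, since $A(\alpha)$ is $(n-1)\times(n-1)$ and linear in $\alpha$ entrywise) is immediate, and the coefficient bound follows by the counting argument above.
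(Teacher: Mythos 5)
Your proposal is correct and is essentially the paper's own argument: the paper writes the cofactor as $(-1)^{i+j}\det(MJ-\alpha R)=(-1)^{i+j}\sum_{k}(-\alpha)^k M^{n-1-k}\tr\left((\adj_k(J))(\comp_k(R))\right)$ with $J,R$ the $(i,j)$ submatrices of $I$ and $Q$, which is exactly your multilinearity-in-rows expansion packaged via compound/higher adjugate matrices, and it bounds each of the ${n-1\choose k}$ terms by $M^{n-1-k}\cdot M^k$ using the same observation you make, namely that the relevant $k\times k$ minors are minors of (a submatrix of) $Q$ and hence at most $M^k$ by the nonnegative row-sum structure. Your handling of the deleted row and column (reducing to minors of the full $Q$ rather than relying on exact row sums of the truncated matrix) matches the paper's treatment, so there is no gap.
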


\begin{proof}
The $(i,j)$ cofactor of the matrix $M I - \alpha Q$ is given by
\begin{equation}\label{cofactor_stoch_bound}
(-1)^{i+j} \det (M J - \alpha R) = (-1)^{i+j} \sum_{k=0}^{n-1} (-\alpha)^k  M^{n - 1 - k}  \tr ((\adj_k (J)) (\comp_k (R))) \enspace ,
\end{equation}
where $J$ and $R$ are the $(i,j)$ sub-matrices of $I$ and $Q$ respectively, $\adj_k (J)$ is the $k$-th higher adjugate matrix of $J$, and $\comp_k (R)$ is the $k$-th compound matrix of $R$. Since $\adj_k (J)$ has just one non-zero entry per row, of absolute value one, as in the proof of Proposition~\ref{Prop_det_stoch_bound} we conclude that the absolute value of all the entries of $(\adj_k (J))(\comp_k (R))$ are less than or equal to $M^k$. The result now follows from~\eqref{cofactor_stoch_bound} and the fact that $(\adj_k (J)) (\comp_k (R))$ is of size ${n-1 \choose k} \times {n-1 \choose k}$.
\end{proof}

By Propositions~\ref{Prop_det_stoch_bound} and~\ref{Prop_cofactor_bound}, both terms appearing in the polynomial $\Delta$ defined in \eqref{eq:def_Delta_stoch} are of the form $\sum_{k=0}^{2n-1} b_k \alpha^k$, where
\begin{align*}
|b_k| & \leq n W M^{2n-1} \left(\sum_{s+l=k,s\leq n-1,l\leq n} {n-1 \choose s} {n \choose l}\right)\\ 
& = n W M^{2n-1} \left(\sum_{0 \leq k-l\leq n-1,l\leq n} {n -1 \choose k-l} {n \choose l}\right)\\ 
& \leq n W M^{2n-1} \left(\sum_{l=0}^{k} {n -1 \choose k-l} {n \choose l}\right) = n W M^{2n-1} {2n - 1 \choose k} \enspace ,
\end{align*}
by Vandermonde's Identity. Therefore, we can rewrite $\Delta$ as $\sum_{k=0}^{2n - 1} c_k \alpha^k$, where $|c_k| \leq 2 n W M^{2n-1} {2n - 1 \choose k}$. This concludes the proof of \Cref{prop:bound on Delta - stochastic}.
\subsection{Other proofs for Section \ref{SectionStochastic}}

To use the Lagrange bound, we consider the polynomial $\epsilon \mapsto \Delta(1-\epsilon)$, which can be rewritten as
\begin{equation}\label{eq:bound_gi}
\Delta(1-\epsilon) = \sum_{i=0}^{2n-1} (-1)^i\epsilon^i g_i 
\end{equation}
where $g_i = \sum_{k=i}^{2n-1} c_k {k\choose i}$. Then,  using \Cref{prop:bound on Delta - stochastic} and the identity $\sum_{k=q}^{m} {m\choose k} {k\choose q}= 2^{m-q}{m\choose q}$, we have 
\begin{equation}\label{boundg}
|g_i| \leq \sum_{k=i}^{2n-1} 2 n W M^{2n-1} {2n - 1 \choose k} {k\choose i} 
=  n W (2M)^{2n-1} 2^{1-i}{2n-1\choose i}
\end{equation}
for all $i \in \{0,\ldots ,2n-1\}$.

\begin{proposition}\label{prop_threshold_sto}
Let $j$ be the smallest index such that $g_j \neq 0$ in~\eqref{eq:bound_gi}. Then, the polynomial $\epsilon \mapsto \Delta(1-\epsilon)$ has no zeros in the interval $]0 ,\frac{2^{j-1}}{nW (2M)^{2n-1} {2n-1 \choose j+1}}[$.
\end{proposition}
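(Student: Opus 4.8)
The plan is to mirror the structure of \Cref{prop_threshold_det} from the deterministic case, applying the Lagrange bound (\Cref{th:Lagrange}) to the shifted polynomial $\epsilon \mapsto \Delta(1-\epsilon) = \sum_{i=0}^{2n-1} (-1)^i \epsilon^i g_i$ given in~\eqref{eq:bound_gi}. First I would observe that $j$ is, by hypothesis, the smallest index with a non-zero coefficient, so in the notation of \Cref{th:Lagrange} we are separating the non-zero roots of a polynomial in $\epsilon$ whose lowest-order non-vanishing term is $(-1)^j \epsilon^j g_j$. Thus the Lagrange bound guarantees that every non-zero root $z$ of $\epsilon \mapsto \Delta(1-\epsilon)$ satisfies
\[
|z| \geq \frac{1}{2} \min_{i \in \{j+1, \ldots, 2n-1\},\, g_i \neq 0} \left( \frac{|g_j|}{|g_i|} \right)^{\frac{1}{i-j}} \; .
\]

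Next I would lower bound each ratio $(|g_j|/|g_i|)^{1/(i-j)}$. The key point is that $\Delta$ has integer coefficients (it is built from determinants and cofactors of integer matrices and integer rewards), so the shifted coefficients $g_i$ are also integers; hence $|g_j| \geq 1$. For the denominator, I would invoke the bound $|g_i| \leq n W (2M)^{2n-1} 2^{1-i} \binom{2n-1}{i}$ from~\eqref{boundg}. The main technical obstacle will be controlling the binomial factor $\binom{2n-1}{i}$ uniformly over $i$ under the exponentiation by $1/(i-j)$, exactly as in \Cref{lemmaTechnical2} where the inequality~\eqref{eq:aux}, namely $\binom{m}{i} \leq \binom{m}{j+1}^{i-j}$, was used. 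I expect to apply this same inequality with $m = 2n-1$ to absorb $\binom{2n-1}{i}^{1/(i-j)}$ into $\binom{2n-1}{j+1}$, while the factor $(nW(2M)^{2n-1})^{1/(i-j)}$ is bounded by $nW(2M)^{2n-1}$ since that quantity exceeds $1$ and $i-j \geq 1$.

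The slightly delicate part, compared to the deterministic proof, is the power of $2$ coming from the $2^{1-i}$ in~\eqref{boundg}: I would write $2^{1-i} = 2^{1-j} \cdot 2^{j-i}$ and track how $(2^{j-i})^{1/(i-j)} = 2^{-1}$ contributes a clean factor of $2$ to the numerator, which combines with the overall $\tfrac{1}{2}$ prefactor from \Cref{th:Lagrange}. After collecting terms, the worst case is attained and yields
\[
\left( \frac{|g_j|}{|g_i|} \right)^{\frac{1}{i-j}} \geq \frac{2^{j-1} \cdot 2}{n W (2M)^{2n-1} \binom{2n-1}{j+1}} \; ,
\]
so that multiplying by the $\tfrac{1}{2}$ from the Lagrange bound gives the claimed radius $\frac{2^{j-1}}{nW(2M)^{2n-1}\binom{2n-1}{j+1}}$. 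Hence no root of $\epsilon \mapsto \Delta(1-\epsilon)$ lies in the open interval $]0, \frac{2^{j-1}}{nW (2M)^{2n-1} \binom{2n-1}{j+1}}[$, which is precisely the assertion. The one step I would be most careful about is the bookkeeping of the powers of $2$ and of the shift between $\binom{2n-1}{i}$ and $\binom{2n-1}{j+1}$, since that is where constants and off-by-one indices in the final exponent can slip.
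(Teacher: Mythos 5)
Your proposal follows the paper's own proof essentially line for line: apply the Lagrange bound (\Cref{th:Lagrange}) to $\epsilon \mapsto \Delta(1-\epsilon)$, use $|g_j| \geq 1$ from integrality, bound $|g_i|$ via \eqref{boundg}, and absorb the binomial factor via \eqref{eq:aux}. Your final display is equivalent to the paper's intermediate inequality (the paper keeps $2^{i-1}$ in the numerator and then uses $2^{i-1} \geq 2^{j}$ since $i \geq j+1$), so the endpoint is correct.

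However, the step you yourself flagged as delicate is exactly where your bookkeeping, as written, breaks. After extracting $(2^{j-i})^{1/(i-j)} = 2^{-1}$, the denominator still carries the factor $(2^{1-j})^{1/(i-j)}$, and your ``collecting terms'' implicitly replaces it by $2^{1-j}$. That replacement is an upper bound only when $2^{1-j} \geq 1$, i.e., only for $j \leq 1$: for $j \geq 2$ the base $2^{1-j}$ lies in $]0,1[$, and raising a number in $]0,1[$ to the power $1/(i-j) \leq 1$ \emph{increases} it, so the inequality $(2^{1-j})^{1/(i-j)} \leq 2^{1-j}$ is false whenever $i-j \geq 2$. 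The same caution applies to your separate treatment of $(nW(2M)^{2n-1})^{1/(i-j)}$: that factor alone is fine (its base exceeds $1$), but the factorization must not strand a sub-unit base inside the exponent. The repair is precisely what the paper does: keep the product together and bound $\bigl(nW(2M)^{2n-1}2^{1-i}\bigr)^{1/(i-j)} \leq nW(2M)^{2n-1}2^{1-i}$ as a single base, which is legitimate because $nW(2M)^{2n-1} \geq 2^{2n-1}$ and $i \leq 2n-1$ give $nW(2M)^{2n-1}2^{1-i} \geq 2 \geq 1$; equivalently, in your grouping, keep $nW(2M)^{2n-1}2^{1-j}$ intact, which is $\geq 2$ since $j \leq 2n-1$. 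With that single regrouping, your chain of inequalities closes and yields the claimed interval.
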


\begin{proof}
As in the proof of \Cref{lemmaTechnical2}, we apply the Lagrange bound (\Cref{th:Lagrange}). 
Note that $|g_j|\geq 1$ since $g_j \neq 0$ and $g_j \in \Z$. 
Then, for any $i\in \{ j+1, \ldots , 2n-1\}$, we have 
  \begin{align*}
   \left(\frac{|g_{j}|}{|g_i|}\right)^{\frac{1}{i-j}} &\geq \frac{|g_{j}|^{\frac{1}{i-j}}}{(nW (2M)^{2n-1}2^{1-i})^\frac{1}{i-j}
     {2n-1 \choose i}^\frac{1}{i-j}}
   \geq \frac{2^{i-1}}{nW (2M)^{2n-1} {2n-1 \choose j+1}} 
 \end{align*}
by~\eqref{boundg} and~\eqref{eq:aux}. 
The proposition now follows from \Cref{th:Lagrange}.
\end{proof}

\begin{proof}[Proof of \Cref{coro:bound blackwell factor - stochastic}]
Note that the function $j \mapsto \frac{2^{j-1}}{nW (2M)^{2n-1} {2n-1 \choose j+1}}$ is convex and attains its minimum at $j = \lfloor \frac{2}{3} n - 1 \rfloor$. Then, by \Cref{prop_threshold_sto}, we conclude that 
no function $\alpha \mapsto v^{\sigma,\tau}_{i}(\alpha) - v^{\sigma',\tau'}_{i}(\alpha)$ has zeros in the interval $]1-\frac{2^{\lfloor \frac{2}{3} n \rfloor-2}}{nW (2M)^{2n-1} {2n-1 \choose \lfloor \frac{2}{3} n \rfloor}}, 1[$. The corollary now follows from the discussion in \Cref{sec:stochastic games}.
\end{proof}

\begin{proof}[Proof of \Cref{th:bound $\sens$-sensitive discount factor - stochastic}]
Let $(\sigma^*,\tau^*)$ be a pair of discount optimal strategies for the discount factor
$\alpha'$, where $\alpha'$ satisfies $1-\frac{2^{\min \{ \sens+2 , \lfloor \frac{2}{3} n - 1 \rfloor \} - 1}}{nW (2M)^{2n-1} {2n-1 \choose \min \{ \sens+2 , \lfloor \frac{2}{3} n - 1 \rfloor \} + 1}} < \alpha' < 1$. 

As in the proof for the deterministic case, in the first place assume that there exist a strategy $\tau$ and a state $i$ such that~\eqref{Eq1Th1} is satisfied, and let $d'$ be the smallest value satisfying~\eqref{Eq1Th1-general}. 
By~\eqref{Eq1Th1}, it follows that $d' \leq d$ and that $
\lim_{\alpha\to 1^-} \; (1-\alpha)^{-\sens''}(v_i^{\sigma^*, \tau^*} (\alpha) - v_i^{\sigma^*,\tau} (\alpha)) = 0$ for all $d'' < d'$. 

If the game is unichain, there exist two polynomials $p(\alpha)$ and $q(\alpha)$ such that $p(1)\neq 0$, $q(1)\neq 0$ and $\Delta(\alpha) = (1-\alpha)^2 
 p(\alpha) q(\alpha) (v_i^{\sigma^*, \tau^*} (\alpha) - v_i^{\sigma^*,\tau} (\alpha)) $,
 where $\Delta$ is the polynomial defined in \eqref{eq:def_Delta_stoch} considering the pairs of strategies $(\sigma^*, \tau^*)$ and $(\sigma^*, \tau)$. Then, we have
    \[
        \lim_{\alpha\to 1^-} \; (1-\alpha)^{-\sens''}(v_i^{\sigma^*, \tau^*} (\alpha) - v_i^{\sigma^*,\tau} (\alpha)) = 0 \iff 
\lim_{\alpha\to 1^-} \; (1-\alpha)^{-(\sens'' +2)} \Delta (\alpha)= 0 
    \]
for all $\sens''$. 
We conclude that $g_0=\ldots = g_{\sens'+1}=0$ and $g_{\sens'+2}\neq 0$ in~\eqref{eq:bound_gi}. Thus, by~\eqref{Eq1Th1} and \Cref{prop_threshold_sto}, it follows that $v_i^{\sigma^*, \tau^*} (\alpha) - v_i^{\sigma^*,\tau} (\alpha) < 0$ for $1-\frac{2^{\sens'+1}}{nW (2M)^{2n-1} {2n-1 \choose \sens'+3}} < \alpha < 1$. Since the function $j \mapsto \frac{2^{j-1}}{nW (2M)^{2n-1} {2n-1 \choose j+1}}$ is convex and attains its minimum at $j = \lfloor \frac{2}{3} n - 1 \rfloor$, and $\sens' \leq \sens$, we conclude that $v_i^{\sigma^*, \tau^*} (\alpha) - v_i^{\sigma^*,\tau} (\alpha) < 0$ for $1-\frac{2^{\min \{ \sens+2 , \lfloor \frac{2}{3} n - 1 \rfloor \} - 1}}{nW (2M)^{2n-1} {2n-1 \choose \min \{ \sens+2 , \lfloor \frac{2}{3} n - 1 \rfloor \} + 1}} < \alpha < 1$. Thus, in particular we have $v_i^{\sigma^*, \tau^*} (\alpha') - v_i^{\sigma^*,\tau} (\alpha') < 0$, contradicting the fact that $(\sigma^*,\tau^*)$ is a pair of discount optimal strategies for $\alpha'$.

Now, if we assume that there exist a strategy $\sigma$ and a state $i$ such that~\eqref{Eq2Th1} is satisfied, using similar arguments as above we arrive at a contradiction. 

This shows that $(\sigma^*,\tau^*)$ is a pair of $\sens $-sensitive discount optimal strategies.
    \end{proof}
    
\begin{remark}
Setting $M=1$ in our bounds for the stochastic case do not recover our results for the deterministic case (\Cref{SectionDeterm}). This is because, in the case of deterministic transitions, we can exploit the ``path then circuit structure" of the discounted value functions, as highlighted in \Cref{SectionDeterm}.
\end{remark}    
\subsection{Proof of \Cref{th-dubickas2}}
\begin{proof}
The proof follows similar arguments to the ones in the proof of \Cref{th-dubickas}, so given $\epsilon > 0$, let $D_\epsilon$ be the constant provided by \Cref{th:mahler}.

 Let $z$ be any real root of the polynomial $\Delta$ different from  $1$, $d$ be its degree and $P =\sum^{d}_{k=0} c_k x^k$ be its minimal polynomial. Since $z$ is a root of $\Delta$, it follows that $P$ divides $\Delta$, and so we have $d \leq 2n -1$ and $M(P) \leq M(\Delta)$. 
Besides, note that by \Cref{prop:bound on Delta - stochastic} and Landau's bound~\cite{Landau1905}, we have $M(\Delta)\leq 2 n W M^{2n-1} \sqrt{{2(2n - 1) \choose 2n -1}}$.

If $d > D_\epsilon$, then~\eqref{dubickas} holds, and so we have
\begin{align*}
|z - 1 | & > e^{-(\pi/4 + \epsilon) \sqrt{d \log d  \log M(P)}} \geq e^{-(\pi/4 + \epsilon) \sqrt{(2n-1) \log (2n-1) \log M(\Delta)}} \\
& \geq e^{-(\pi/4 + \epsilon) \sqrt{(2n-1) \log (2n-1)  \log \left( 2 n W M^{2n-1} \sqrt{{2(2n - 1) \choose 2n -1}} \right) }} \; . 
\end{align*}

Assume now that $d \leq D_\epsilon$. By Lemma~\ref{lemmaTechnical2} it follows that $|z -1 | \geq \frac{1}{2 H(P) {d+1\choose \lceil \frac{d}{2}\rceil}}$. Then, since $H(P) \leq 2^d M(P)$, we have
\[
|z -1 | \geq \frac{1}{2^{d+1} M(P) {d+1\choose \lceil \frac{d}{2}\rceil }}
\geq \frac{1}{2^{D_\epsilon +1} M(\Delta) {D_\epsilon +1\choose \lceil \frac{D_\epsilon}{2}\rceil }}
\geq \frac{1}{2^{D_\epsilon +1} {D_\epsilon +1\choose \lceil \frac{D_\epsilon}{2}\rceil } 2 n W M^{2n-1} \sqrt{{2(2n - 1) \choose 2n -1}} } \enspace .
\]

Setting $a_\epsilon = \log\left(2^{D_\epsilon +1} {D_\epsilon +1\choose \lceil \frac{D_\epsilon}{2}\rceil } \right)$,
we conclude that $\Delta$ has no zeros in the interval  $]\alphama , 1[$. 
This shows that $\alphabw \leq \alphama$.
\end{proof}

\end{document}